\newcommand{\Label}[1]{\label{#1}}
\begin{document}

\begin{CJK*}{UTF8}{gbsn}
\title{On the composable security of weak coin flipping}

\author{Jiawei Wu (吴家为)}
\affiliation{Centre for Quantum Technologies, National University of Singapore, Singapore}
\orcid{0000-0001-7340-7846}
\email{constchar0212@gmail.com}

\author{Yanglin Hu (胡杨林)}
\affiliation{Centre for Quantum Technologies, National University of Singapore, Singapore}
\orcid{0009-0000-1105-589X}
\email{yanglin.hu@u.nus.edu}

\author{Akshay Bansal}
\affiliation{Department of Computer Science, Virginia Polytechnic Institute and State University, Blacksburg, VA, USA}
\affiliation{Virginia Tech Center for Quantum Information Science and Engineering, Blacksburg, VA, USA}
\email{akshaybansal14@gmail.com}

\author{Marco Tomamichel}
\affiliation{Centre for Quantum Technologies, National University of Singapore, Singapore}
\affiliation{Department of Electrical and Computer Engineering, National University of Singapore, Singapore}
\orcid{0000-0001-5410-3329}
\homepage{https://marcotom.info/}
\email{marco.tomamichel@nus.edu.sg}
\maketitle

\end{CJK*}

\begin{abstract}
    Weak coin flipping is a cryptographic primitive in which two mutually distrustful parties generate a shared random bit to agree on a winner via remote communication.
    While a stand-alone secure weak coin flipping protocol can be constructed from noiseless quantum communication channels, its composability remains unexplored. In this work, we demonstrate that no weak coin flipping protocol can be abstracted as a simple black-box resource with composable security. Despite this, we also establish the overall stand-alone security of quantum weak coin flipping protocols under composition in sequential order.
\end{abstract}

\section{Introduction}

\subsection*{Motivation and background}

Cryptographic systems or protocols are expected to guarantee security even when composed with multiple other cryptographic systems.
To illustrate this, consider a cryptographic protocol designed to perform secure communication between two parties while leaking at most the length of the communicated message.  
One possible way to construct such a scheme is to first use a system that generates and distributes secret keys amongst the users and then use another system that pads the key with the intended message to be sent to the receiver (the so-called one-time pad). While it is relatively straightforward to see that a buggy implementation that compromises the security of key distribution or one-time pad would result in overall insecure communication, 
it is at times difficult to formally argue that the final composition results in a secure communication channel when both the constituent systems (key distribution and one-time pad) are assumed secure in isolation. 
Indeed, it might seem natural to use small accessible information of the distributed key as the security criterion for key distribution. However, the fact that key distribution using this criterion is secure in isolation does not imply that the final communication protocol is secure~\cite{konig2007small}, due to the data locking effect~\cite{divincenzo2004locking}.
We thus say that composability fails for this security criterion.

Several frameworks have been proposed that are intended to analyze compositions of cryptographic systems, starting with Canetti's seminal Universally Composable (UC) security  framework~\cite{canetti2000security,canetti2001universally}. Since then, several extensions and variants have emerged (see, e.g.,~\cite{canetti2007universally,unruh2010universally,canetti2011universally,canetti2015simpler}). We focus here on the model of Abstract Cryptography (AC) by Maurer and Renner~\cite{maurer2011abstract,maurer2012constructive}. 
In contrast to the previous frameworks, the latter is a top-down axiomatic approach and is intended to provide better insights via abstraction of a given complex cryptographic system and is thus most suitable for our purposes. The security in the AC framework is built on simulation-based security. 
Here, for a given system to be deemed secure, it suffices to ensure that all adversarial attacks on the system can be simulated on an ideal system. 
Therefore, to prove the security of any multi-party protocol in the AC framework, one needs to show the existence of such a simulator simulating all possible adversary attacks.

The cryptographic setting we consider in this work is a two-party primitive called weak coin flipping (WCF). 
Informally speaking, in WCF two mutually distrustful parties, say Alice and Bob, need to agree on a binary outcome even though the two parties have opposite preferences. The protocol is deemed secure if it produces a fair coin toss in case both parties are honest and a dishonest party can at most bias the honest party's outcome towards their preference. 
The WCF task is fundamentally different from the other commonly studied tasks such as strong coin flipping (SCF), bit commitment (BC), oblivious transfer, or more generally, secure function evaluation. While the latter protocols are shown to be insecure with an unbounded adversary~\cite{lo1997security, lo1997quantum, mayers1997unconditionally, lo1998why, ambainis2004multiparty}, it is possible to construct weak coin flipping protocols with quantum communication that exhibit almost perfect security with arbitrarily small bias~\cite{mochon2007quantum}. 
The explicit protocols which are shown to have an arbitrarily small bias are developed in~\cite{arora2019quantum, arora2021analytic, arora2022solutions}. 
The security notion employed in these results, known as information-theoretic security, allows an adversary to enjoy unlimited resources (i.e.,\ the adversary has no restrictions on its computational power or use of memory, etc.). 
It is well-studied for a range of different cryptographic tasks and, in most cases, is stronger than other commonly studied security definitions such as computational~\cite{bernstein2008post,damgaard2009quantum,mahadev2023classical}, bounded storage~\cite{damgaard2008crypto,wehner2008composable,unruh2011concurrent} or even game-based~\cite{boneh2020graduate,nowak2007framework}, where an adversary can perform attacks in only some prescribed ways.

Previously, it was shown that WCF enjoys information-theoretic stand-alone security, i.e., it is secure as a single protocol instance in isolation~\cite{mochon2007quantum,arora2019quantum,arora2021analytic,arora2022solutions}. 
While WCF itself is of limited interest, it is a powerful tool in constructing optimal protocols for several other primitives such as bit commitment~\cite{chailloux2011optimal} and strong coin flipping~\cite{chailloux2009optimal}.
However, such constructions would require more than stand-alone security, and it is still unclear if WCF protocols have composable security when combined with other cryptographic protocols.

\subsection*{Contributions}

Our main contributions encompass two negative theorems and one positive theorem. 
In \Cref{thm:wcf-not}, we demonstrate the impossibility of constructing a typical  WCF resource (see \Cref{def:wcf-res}) analogous to the widely-used SCF resource~\cite{demay2013unfair}  from noiseless communication channels in the AC framework. 
Generally speaking, a resource is the abstraction of certain cryptographic functionality (see Section~\ref{sec:acf}).
This WCF resource generates a random bit for both parties, allowing the cheating party to learn the bit in advance and bias it towards their preferred choice.  
It was shown that the widely-used SCF resource cannot be achieved in the AC framework~\cite{vilasini2019composable}, but the result for the similar WCF resource remains unknown until this work.
We state \Cref{thm:wcf-not} here informally. 

\begin{informaltheorem}[Informal]
    The typical WCF resource in \Cref{def:wcf-res} with composable security cannot be constructed from noiseless communication channels in the AC framework. 
\end{informaltheorem}

To prove \Cref{thm:wcf-not}, we show that WCF protocols using noiseless communication channels are vulnerable to the man-in-the-middle attack when composed in parallel. 
As a result, the typical WCF resource cannot be constructed by any WCF protocol from noiseless communication channels in the AC framework. 

While \Cref{thm:wcf-not} shows the impossibility of the typical WCF resource in the AC framework, we can still define an alternative WCF resource at the cost of increasing the interaction complexity.
Nevertheless, such an alternative WCF resource that can be constructed from noiseless communication channels must be extremely complicated in terms of its interaction complexity, as is shown in \Cref{thm:general}. Replacing an actual protocol with an ideal resource intends to simplify the analysis; therefore, it is meaningless to study such an alternative WCF resource. Here is the informal statement of \Cref{thm:general}.
\begin{informaltheorem}[Informal]
    Any WCF resource with composable security constructed from noiseless communication channels in the AC framework must have high interaction complexity. 
\end{informaltheorem}

The proof of \Cref{thm:general} consists of two steps.  
First, we show that an alternative WCF resource with composable security constructed from noiseless communication channels implies a WCF protocol with the same interaction complexity as the alternative WCF resource. 
Then, together with the exponential lower bound on the interaction complexity of WCF protocols~\cite{miller2020impossibility}, we conclude that a WCF resource with a low interaction complexity cannot exist. 

Next, we extend the classical version of the  composition theorem for sequentially ordered protocols~\cite[Theorem 7.4.3] {goldreich2004foundations} to the quantum case.
We find a sufficient condition called global security where WCF and other modules can be composed into a larger system with stand-alone security. 
A composition between WCF and other components is globally secure if there is no correlation between the outcomes of WCF and other components. 
We thus show in \Cref{thm:cond-gs} that global security holds when WCF and all other components are composed in a strict causal order. 
That is, there is no other component running concurrently with WCF. 
The constructions for optimal SCF and optimal BC exactly satisfy global security.

\begin{informaltheorem}[Informal]
    A primitive constructed from WCF with stand-alone security under strictly ordered sequential composition maintains its stand-alone security. 
\end{informaltheorem}

The idea to prove \Cref{thm:cond-gs} is by contradiction.
Assume that there is some correlation between outcomes of WCF with stand-alone security and other components under ordered sequential composition. 
Using the information carried from previous components to WCF, we can construct an attack breaking the stand-alone security of WCF. 
This contradicts our assumption and thus proves \Cref{thm:cond-gs}.

\subsection*{Related work}

\textit{Causality in the composable framework.}
Some early works~\cite{micali1992secure,beaver1992foundations} developed the paradigm that only allows for sequential-order composition, where protocols are executed one after another.
Canetti~\cite{canetti2000security} relaxed sequential composition to modular composition in the UC framework, where the former is a special case of the latter. 
The AC framework~\cite{maurer2011abstract,maurer2012constructive} takes a top-down approach. Only the connection of abstract systems is considered at the top level, and the execution order is left at the lower level specification.
Under the AC framework, the causal box model~\cite{portmann2017causal} provides the causal order specification that allows for a quantum superposition of different message orders.

\textit{Concerning concurrent composition.}
Literature \cite{wehner2008composable, unruh2011concurrent} addresses the difficulty of concurrent parallel composition in the UC framework and the bounded-storage model.
The composability of BC was discussed in~\cite{kaniewski2013secure}. 
They define the weak binding security of BC in a stand-alone rather than composable manner and propose a counterexample where the string commitment constructed from multiple instances of BC is insecure.
In a previous model of general cryptographic protocols \cite[Section 7.3]{goldreich2004foundations}, concurrent calls to an oracle (resource) are also forbidden. This restriction primarily stems from the formulation of syntax in that model, rather than an explicit security concern.
Recent work by Batra et al.~\cite{batra2024robust} provides device-independent protocols for oblivious transfer and bit commitment proven secure under sequential composition, where all queries to the ideal functionality must be performed sequentially, but their concurrent composable security remains unknown.

\textit{Coin flipping within the AC framework.}
To study the composability of SCF, Demay and Maurer~\cite{demay2013unfair} propose an SCF resource and prove that it can be constructed from a BC resource.
However, this resource is shown to be impossible to construct from noiseless communication channels, even in a relativistic setting~\cite{vilasini2019composable}.


\section{Preliminaries}
\subsection{The abstract cryptography framework}
\label{sec:acf}
The building blocks of the framework are resources, converters (protocols), and distinguishers. Resources are interactive systems that abstract cryptographic functionalities, and converters transform one resource into another. Distinguishers distinguish resources from each other.
The framework can describe multiparty functionalities, but in this work, we only focus on two-party ones. 

\begin{itemize}
    
\item A \emph{resource}, denoted by $R$, has 
one
interface accessible to each party. Each party can send inputs, receive outputs, perform operations and give instructions via the corresponding interface. In two-party cryptography, it is necessary to consider three different behaviors $R$, $R_A$ and $R_B$ of a resource, in which $R$, $R_A$ and $R_B$ correspond to the case where both are honest, where only Alice is dishonest, and where only Bob is dishonest, respectively. We use $\ca{R} = (R,R_A,R_B)$ to compactly denote a resource.

\item A \emph{converter}, denoted by $\alpha$, is a cryptographic system with an inner interface that connects to a resource and an outer interface that connects to a player.
The set of all converters is denoted by $\Sigma$. 
Each party can instruct the converter on the outer interface to execute specific tasks on the inner interface. 
A converter $\alpha$ can convert one resource $R$ into another resource $S$ by connecting its inner interface to one interface of the resource $R$. 
The interfaces of the resource $S$ consist of the unconnected interface of the resource $R$ and the outer interface of the converter $\alpha$. 
We denote the connection between the converter $\alpha$ and the left-side (right-side) interface of the resource $R$ by $\alpha R$ ($R \alpha$).

\item A \emph{protocol} $\pi=(\pi_A,\pi_B)$ consists of Alice's converter $\pi_A$ and Bob's converter $\pi_B$. 
By connecting each converter to the corresponding interface of a resource $R$, the protocol $\pi$ can transform $R$ into the required resource $S$, provided that $S= \pi_A  R \pi_B$.
For example, suppose $R$ is a binary symmetric channel resource and $(\pi_{A}$,$\pi_{B})$ is an encoder-decoder pair corresponding to a proper error-correcting code.  
Then one can construct a noiseless channel resource as $S = \pi_{A} R \pi_{B}$. 

\item A \emph{distinguisher}, denoted by $\ca{D}$, is an agent who can access both interfaces of a resource. Given one of two resources $R$ and $S$ uniformly at random, a distinguisher distinguishes whether the resource is $R$ or $S$ by interacting with it on both interfaces. The distinguishing advantage of the above task is defined as  
\begin{align}
    d(R,S) = \sup_{\mathcal{D}}|\Pr[\mathcal{D}(R)=0] - \Pr[\mathcal{D}(S)=0]|.
\end{align}
The distinguishing advantage is a pseudo metric between two resources, satisfying the symmetry and triangle inequality properties. 
If $d(R,S) \leq \delta$, we also write $R \approx_\delta S$. 
\end{itemize}

With these building blocks, we define composable security for two-party functionalities. 
\begin{definition}[Composable security] \label{def:coms}
    A protocol $\pi = (\pi_A,\pi_B)$ constructs resource $\ca{S} = (S, S_A,S_B)$ out of resource $\ca{R} = (R,R_A,R_B)$ with distinguishing advantage $\delta$ if the following conditions are satisfied: 
    \begin{enumerate}
        \item Correctness
            \begin{align}
                S \approx_\delta \pi_A R \pi_B; 
            \end{align}
        \item Security against dishonest Bob
            \begin{align}
               \exists \sigma_B \in \Sigma, S_{B} \sigma_B \approx_\delta \pi_A R_{B}; \label{eq:dB}
            \end{align}
        \item Security against dishonest Alice
            \begin{align}
               \exists \sigma_A \in \Sigma, \sigma_{A} S_{A}  \approx_\delta R_{A} \pi_{B}. 
            \end{align}
    \end{enumerate}
    This construction is denoted by $\ca{R} \overset{\pi,\delta}{\longrightarrow}\ca{S}$. 
    Alternatively, we say that $\ca{S}$ is an abstraction of $\pi$ based on $\ca{R}$, or that $\ca{S}$ abstracts $\pi$ based on $\ca{R}$. 
\end{definition}

In \Cref{def:coms}, $\sigma_{A}$ and $\sigma_{B}$ are called simulators. These simulators enforce a critical security property: any cheating strategy against the real resource $\cR$ can be translated via the simulators into a corresponding cheating strategy against the ideal resource $\cS$. 
For instance, if dishonest Bob deviates from $\pi_{B}$ and follows $\pi'_{B}$ instead, then, we have $S_B \sigma_B \pi'_{B}\approx_\delta\pi_A R_B \pi_B'$ by \Cref{eq:dB}, which means  Bob’s strategy $\pi_B'$ in the real world (interacting with $\pi_A R_B$) is simulated by the $\sigma_B \pi'_{B}$ in the ideal world (interacting with $S_B$).
For a detailed description of simulators and the related security notions in cryptography, we refer to the tutorial~\cite{lindell2017simulate}.

When $\ca{R}$ is a noiseless communication channel, the conditions can be simplified as 
\begin{align}
    &S  \approx_\delta \pi_A\pi_B\label{eq:correctness}, \\
     &\exists \sigma_B \in \Sigma, S_B \sigma_B  \approx_\delta \pi_A\label{eq:cheatingBob}, \\
     &\exists \sigma_A \in \Sigma, \sigma_A S_A \approx_\delta \pi_B\label{eq:cheatingAlice} . 
\end{align}

\subsection{The quantum comb model}
\label{sec:qcm}
A quantum comb~\cite{gutoski2007toward,chiribella2009theoretical} is a general quantum information processing system with internal memory that processes incoming messages sequentially.
An example of a quantum comb is shown in \Cref{fig:comb}.
At the lower level specification of the AC framework where the execution order is taken into account, building blocks are commonly modeled as quantum combs.
A protocol which can be modeled within the quantum comb model (QCM) is called a QCM protocol.
All existing WCF protocols can be classified as QCM protocols.

\begin{figure}[htbp!]
    \centering
    \tikzset{every picture/.style={line width=0.75pt}} 

\begin{tikzpicture}[x=0.75pt,y=0.75pt,yscale=-1,xscale=1]

\draw   (114,42) -- (148,42) -- (148,76.87) -- (114,76.87) -- cycle ;
\draw    (139,78) -- (165.93,104.93) ;
\draw [shift={(167.34,106.34)}, rotate = 225] [color={rgb, 255:red, 0; green, 0; blue, 0 }  ][line width=0.75]    (10.93,-3.29) .. controls (6.95,-1.4) and (3.31,-0.3) .. (0,0) .. controls (3.31,0.3) and (6.95,1.4) .. (10.93,3.29)   ;
\draw    (148,58) -- (210,58) ;
\draw [shift={(212,58)}, rotate = 180] [color={rgb, 255:red, 0; green, 0; blue, 0 }  ][line width=0.75]    (10.93,-3.29) .. controls (6.95,-1.4) and (3.31,-0.3) .. (0,0) .. controls (3.31,0.3) and (6.95,1.4) .. (10.93,3.29)   ;
\draw    (96,106) -- (122.59,79.41) ;
\draw [shift={(124,78)}, rotate = 135] [color={rgb, 255:red, 0; green, 0; blue, 0 }  ][line width=0.75]    (10.93,-3.29) .. controls (6.95,-1.4) and (3.31,-0.3) .. (0,0) .. controls (3.31,0.3) and (6.95,1.4) .. (10.93,3.29)   ;
\draw   (214,42) -- (248,42) -- (248,76.87) -- (214,76.87) -- cycle ;
\draw    (239,78) -- (265.59,104.59) ;
\draw [shift={(267,106)}, rotate = 225] [color={rgb, 255:red, 0; green, 0; blue, 0 }  ][line width=0.75]    (10.93,-3.29) .. controls (6.95,-1.4) and (3.31,-0.3) .. (0,0) .. controls (3.31,0.3) and (6.95,1.4) .. (10.93,3.29)   ;
\draw    (248,58) -- (283,58) ;
\draw [shift={(285,58)}, rotate = 180] [color={rgb, 255:red, 0; green, 0; blue, 0 }  ][line width=0.75]    (10.93,-3.29) .. controls (6.95,-1.4) and (3.31,-0.3) .. (0,0) .. controls (3.31,0.3) and (6.95,1.4) .. (10.93,3.29)   ;
\draw    (199,107) -- (226.59,79.41) ;
\draw [shift={(228,78)}, rotate = 135] [color={rgb, 255:red, 0; green, 0; blue, 0 }  ][line width=0.75]    (10.93,-3.29) .. controls (6.95,-1.4) and (3.31,-0.3) .. (0,0) .. controls (3.31,0.3) and (6.95,1.4) .. (10.93,3.29)   ;
\draw   (386,41) -- (420,41) -- (420,75.87) -- (386,75.87) -- cycle ;
\draw    (411,77) -- (437.59,103.59) ;
\draw [shift={(439,105)}, rotate = 225] [color={rgb, 255:red, 0; green, 0; blue, 0 }  ][line width=0.75]    (10.93,-3.29) .. controls (6.95,-1.4) and (3.31,-0.3) .. (0,0) .. controls (3.31,0.3) and (6.95,1.4) .. (10.93,3.29)   ;
\draw    (369.66,103.34) -- (394.59,78.41) ;
\draw [shift={(396,77)}, rotate = 135] [color={rgb, 255:red, 0; green, 0; blue, 0 }  ][line width=0.75]    (10.93,-3.29) .. controls (6.95,-1.4) and (3.31,-0.3) .. (0,0) .. controls (3.31,0.3) and (6.95,1.4) .. (10.93,3.29)   ;
\draw    (350,57) -- (383,57) ;
\draw [shift={(385,57)}, rotate = 180] [color={rgb, 255:red, 0; green, 0; blue, 0 }  ][line width=0.75]    (10.93,-3.29) .. controls (6.95,-1.4) and (3.31,-0.3) .. (0,0) .. controls (3.31,0.3) and (6.95,1.4) .. (10.93,3.29)   ;

\draw (131,59.44) node    {$\mathcal{E}_{1}$};
\draw (231,59.44) node    {$\mathcal{E}_{2}$};
\draw (403,58.44) node    {$\mathcal{E}_{n}$};
\draw (322.26,55.5) node   [align=left] {...};

\end{tikzpicture}
    \caption{An example of quantum comb. The lines represent quantum registers, and $\cE_i$ is a quantum operation. Each quantum operation interacts with the environment through an input-output pair, resembling the teeth of a comb.}
    \label{fig:comb}
\end{figure}

The interaction complexity of a QCM protocol $\pi$, denoted by $\mathsf{Intcom}(\pi)$, is defined as the communication rounds it requires.
As an example, for a protocol $\pi$ modeled by \Cref{fig:comb}, $\mathsf{Intcom}(\pi) =  n$.
For resources $S_A$ and $S_B$, their interaction complexity refers to the communication rounds at the adversary's side. 
For example, the interaction complexity of WCF resource $S_A$ in \Cref{fig:ideal-wcf}(b) is the number of rounds on the left side, which is $1$.

\subsection{The causal box model}

The causal box model~\cite{portmann2017causal} is a more general specification of building blocks in the AC framework.
The causal box model describes the causal structure of the messages exchanged by these building blocks. 
The quantum comb model can be treated as a special case of the causal box model.

In the causal box model, the abstract blocks are modeled as causal boxes.
A causal box is an information processing system with input and output wires.
Each message on a wire is described by a quantum state on space $\ca{H}\otimes l^2(\cT)$, where $\cT$ is a partially ordered set (poset) that defines the message ordering, and $l^2(\cT)$ is the set of all functions $\cT \to \mathbb{C}$ with bounded $L^2$-norm. 
Note that $\cT$ is not necessarily interpreted as time, as time is totally ordered. 
In general, a wire may carry multiple messages. 
Therefore, the wire space is defined as the Fock space of the message space,
\begin{align}
    \cF_{\ca{H}}^{\cT} \coloneqq \bigoplus_{n=0}^{\infty} \vee^n \left( \ca{H} \otimes l^2(\cT) \right),
\end{align}
where $\vee^n \ca{H}$ denotes the symmetric subspace of $\ca{H}^{\otimes n}$.
Here, $\cF_{\cH}^{\cT}$ is also a Hilbert space.
For convenience, we use the shorthand notation $\cF_{A}^{\cT}$ for $\cF_{\ca{H}_A}^{\cT}$.
With this definition, wires can be merged or split, i.e., for $\ca{H}_A = \ca{H}_{A_1} \oplus \ca{H}_{A_2}$,
\begin{align} \label{eq:split}
    \cF_{A}^{\cT} = \cF_{A_1}^{\cT} \otimes \cF_{A_2}^{\cT}.
\end{align}
Formally, a causal box is defined as a completely positive and trace-preserving (CPTP) map that transforms the input wires to the output wires.
This map should further respect the causality condition, i.e., the output message at $t' \in \cT$ only depends on the input messages at $t\in \cT$ where $t\preceq t'$. A precise definition can be found in \Cref{app:causal-box}. 

The composition of causal boxes consists of parallel composition and loops. 
The parallel composition of causal boxes $\Phi$ and $\Psi$, denoted by $\Phi \| \Psi$, is defined as the map $\Phi \otimes \Psi$.
By \Cref{eq:split}, the input and output spaces of $\Phi \otimes \Psi$ remain valid wires. 
A loop connects segments with the same dimension of the input and output wires. 
An interface labeled by $x$ for a causal box $\Phi$ is denoted by $\mathsf{int}_{x}(\Phi)$, which consists of a set of wires, possibly including both input and output wires.
For example, the inner and outer interfaces of a converter $\alpha$ are denoted by $\mathsf{int}_{in}(\alpha)$ and $\mathsf{int}_{out}(\alpha)$, respectively.
Two interfaces $\mathsf{int}_a(\Phi)$ and $\mathsf{int}_b(\Psi)$ are compatible if there exists a bijection $P: \mathsf{int}_a(\Phi) \leftrightarrow \mathsf{int}_b(\Psi)$ such that each input wire of $\mathsf{int}_a(\Phi)$ corresponds to an output wire of $\mathsf{int}_b(\Psi)$ with the same dimension and vice versa.

\subsection{Weak coin flipping}
Here, we formulate WCF protocols with causal boxes. 

\begin{definition}[Weak coin flipping protocol]
    A weak coin flipping protocol using noiseless communication channels is a pair of converters $\pi= (\pi_A,\pi_B)$ with compatible inner interfaces, where $\pi_A$ ($\pi_B$) has one output wire on the outer interface and only outputs a single classical bit at $t_a$ ($t_b$). 
    The poset $\cT$ of the protocol is bounded both from above and below, i.e., $\exists t_0, t_a, t_b \in \cT$, such that $\forall t\in \cT$, it holds that $t_0 \preceq t$, $t \preceq t_a$ and $t \preceq t_b$.
\end{definition}

\begin{figure}[htbp!]
    \centering
    \tikzset{every picture/.style={line width=0.75pt}} 

\begin{tikzpicture}[x=0.75pt,y=0.75pt,yscale=-1,xscale=1]

\draw   (367.17,92.08) -- (444.42,92.08) -- (444.42,169.33) -- (367.17,169.33) -- cycle ;
\draw   (122.5,91.75) -- (199.75,91.75) -- (199.75,169) -- (122.5,169) -- cycle ;
\draw    (489.42,130.83) -- (445.42,130.83) ;
\draw [shift={(491.42,130.83)}, rotate = 180] [color={rgb, 255:red, 0; green, 0; blue, 0 }  ][line width=0.75]    (10.93,-3.29) .. controls (6.95,-1.4) and (3.31,-0.3) .. (0,0) .. controls (3.31,0.3) and (6.95,1.4) .. (10.93,3.29)   ;
\draw    (79.4,129.5) -- (121.25,129.5) ;
\draw [shift={(77.4,129.5)}, rotate = 0] [color={rgb, 255:red, 0; green, 0; blue, 0 }  ][line width=0.75]    (10.93,-3.29) .. controls (6.95,-1.4) and (3.31,-0.3) .. (0,0) .. controls (3.31,0.3) and (6.95,1.4) .. (10.93,3.29)   ;
\draw    (243.75,109.5) -- (199.75,109.5) ;
\draw [shift={(245.75,109.5)}, rotate = 180] [color={rgb, 255:red, 0; green, 0; blue, 0 }  ][line width=0.75]    (10.93,-3.29) .. controls (6.95,-1.4) and (3.31,-0.3) .. (0,0) .. controls (3.31,0.3) and (6.95,1.4) .. (10.93,3.29)   ;
\draw    (363.42,110.83) -- (319.42,110.83) ;
\draw [shift={(365.42,110.83)}, rotate = 180] [color={rgb, 255:red, 0; green, 0; blue, 0 }  ][line width=0.75]    (10.93,-3.29) .. controls (6.95,-1.4) and (3.31,-0.3) .. (0,0) .. controls (3.31,0.3) and (6.95,1.4) .. (10.93,3.29)   ;
\draw    (202.4,153.5) -- (244.25,153.5) ;
\draw [shift={(200.4,153.5)}, rotate = 0] [color={rgb, 255:red, 0; green, 0; blue, 0 }  ][line width=0.75]    (10.93,-3.29) .. controls (6.95,-1.4) and (3.31,-0.3) .. (0,0) .. controls (3.31,0.3) and (6.95,1.4) .. (10.93,3.29)   ;
\draw    (325.06,154.83) -- (366.92,154.83) ;
\draw [shift={(323.06,154.83)}, rotate = 0] [color={rgb, 255:red, 0; green, 0; blue, 0 }  ][line width=0.75]    (10.93,-3.29) .. controls (6.95,-1.4) and (3.31,-0.3) .. (0,0) .. controls (3.31,0.3) and (6.95,1.4) .. (10.93,3.29)   ;

\draw (407.11,130.9) node  [font=\large] [align=left] {$\displaystyle \pi _{B}$};
\draw (162.28,130.57) node  [font=\large] [align=left] {$\displaystyle \pi _{A}$};
\draw (479.84,114.64) node   [align=left] {($\displaystyle c_{B} ,t_{b}$)};
\draw (73.18,114.3) node   [align=left] {$\displaystyle ( c_{A} ,t_{a})$};
\draw (237.72,96.94) node  [font=\footnotesize] [align=left] {$\displaystyle \ket{\psi } \in \mathcal{F}_{X}^{\mathcal{T}}$};
\draw (331.05,96.28) node  [font=\footnotesize] [align=left] {$\displaystyle \ket{\psi '} \in \mathcal{F}_{X'}^{\mathcal{T}}$};
\draw (239.05,138.94) node  [font=\footnotesize] [align=left] {$\displaystyle \ket{\phi } \in \mathcal{F}_{Y}^{\mathcal{T}}$};
\draw (327.72,139.61) node  [font=\footnotesize] [align=left] {$\displaystyle \ket{\phi '} \in \mathcal{F}_{Y'}^{\mathcal{T}}$};

\end{tikzpicture}
    \caption{A general model for a coin flipping protocol using noiseless communication channels. The message spaces $X,Y$ and $X',Y'$ are of the same dimension, respectively.}
    \label{fig:CFConverters}
\end{figure}

In a WCF protocol, we say Alice wins if $c_B = 0$, while Bob wins if $c_A = 1$. In this case, an honest Bob only needs to prevent a cheating Alice from biasing $c_B$ towards $0$ and an honest Alice only needs to prevent a cheating Bob from biasing $c_A$ towards $1$.

\begin{definition}[Stand-alone security of WCF] \label{def:wcf-stand-alone}
    Let $z\in [0,\frac{1}{2}]$ and $\epsilon\leq z$. A coin flipping protocol $\pi=(\pi_A, \pi_B)$ is a $z$-unbalanced $\epsilon$-biased WCF protocol (with stand-alone security), denoted by $\mathrm{WCF}(z,\epsilon)$, if the following conditions are satisfied: 
\begin{enumerate}[start=1, label={(\bfseries S\arabic*)}]
    \item The composition $\pi_A \pi_B$ by connecting the inner interface of $(\pi_A,\pi_B)$ outputs $(c_A, c_B)$ with a probability of
    \begin{align}
        \Pr[c_A = c_B = 0] = z, \quad \Pr[c_A = c_B = 1] = 1- z. 
    \end{align}
    \item For any converter $\alpha$ compatible with the inner interface of $\pi_B$, the composition $\alpha \pi_B$ outputs $c_B=0$ with a probability of
    \begin{align}
        \Pr[c_B = 0] \leq z + \epsilon.
    \end{align}
    \item For any converter $\beta$ compatible with inner interface of $\pi_A$, the composition $\pi_A \beta$ outputs $c_A=1$ with a probability of
    \begin{align}
        \Pr[c_A = 1] \leq 1 - z + \epsilon.
    \end{align}
\end{enumerate}

We call a protocol balanced if $z= \frac{1}{2}$ and unbiased if $\epsilon=0$. 
\end{definition}

\section{Weak coin flipping is not universally composably secure} \label{sec:neg}

This section shows that WCF cannot be universally composably secure in the AC framework.
The result contains two parts. 
First, we consider a specific WCF resource derived from a simple but widely adapted SCF resource~\cite{demay2013unfair} and show that this specific WCF resource cannot be constructed from a noiseless communication channel.
Nevertheless, this result does not exclude the existence of other more complicated resources that can be constructed with, say, the protocol in~\cite{arora2021analytic}.
Then our second result shows that, for any WCF protocol with enough stand-alone security, one cannot abstract it into a useful black-box resource, where usefulness means that the resource captures the full functionality of WCF while maintains a low interaction complexity.

\subsection{Impossibility for a specific resource}

We consider a reasonable and simple WCF resource in~\Cref{fig:ideal-wcf} which is largely motivated from the strong coin flipping resource described in~\cite[Section III]{demay2013unfair}\footnote{
Although the resource had origins in Blum’s protocol~\cite{manuel1981coin} that constructs an \emph{unfair} coin flip (where a party is allowed to abort on observing the outcome), it serves as an abstraction for a large family of SCF protocols. 
When using a similar protocol here, it is equivalent to a \emph{biased} coin flip if the honest party declares themselves the winner (locally) whenever the dishonest party aborts~\cite{demay2013unfair}. Therefore, it is unnecessary to consider the case where players abort.}. 
In the WCF task, a dishonest party wants to force only one of the two outcomes.
To reflect this idea, we model the possible malicious operation as follows: (1) a cheating Alice (Bob) intercepts the random bit $c'$ ($c''$) which ought to be the honest output in advance and (2) based on $c'$ ($c''$), Alice (Bob) sends $b'$ and $p'$ ($b''$ and $p''$) to the resource to either resign by setting $p'=1$ ($p''=1$) or replace $c'$ with $b$ with probability $\frac{\epsilon}{1-z}$ ($\frac{\epsilon}{z}$) by setting $p'=0$ ($p''=0$). 
This establishes~\Cref{def:wcf-res} as a tuple of ideal WCF resources that allows for a maximum bias of $\epsilon$ for different scenarios.

\begin{figure}[h]
    \centering
    \tikzset{every picture/.style={line width=0.75pt}} 

\begin{tikzpicture}[x=0.75pt,y=0.75pt,yscale=-1,xscale=1]

\draw   (87.84,104.32) -- (138.89,104.32) -- (138.89,162.8) -- (87.84,162.8) -- cycle ;
\draw    (87.24,133.67) -- (62.88,133.67) ;
\draw [shift={(60.88,133.67)}, rotate = 360] [color={rgb, 255:red, 0; green, 0; blue, 0 }  ][line width=0.75]    (10.93,-3.29) .. controls (6.95,-1.4) and (3.31,-0.3) .. (0,0) .. controls (3.31,0.3) and (6.95,1.4) .. (10.93,3.29)   ;
\draw    (139.24,133.67) -- (164.88,133.67) ;
\draw [shift={(166.88,133.67)}, rotate = 180] [color={rgb, 255:red, 0; green, 0; blue, 0 }  ][line width=0.75]    (10.93,-3.29) .. controls (6.95,-1.4) and (3.31,-0.3) .. (0,0) .. controls (3.31,0.3) and (6.95,1.4) .. (10.93,3.29)   ;
\draw   (298.17,105.32) -- (349.22,105.32) -- (349.22,163.8) -- (298.17,163.8) -- cycle ;
\draw    (298.24,118) -- (264.82,118) ;
\draw [shift={(262.82,118)}, rotate = 360] [color={rgb, 255:red, 0; green, 0; blue, 0 }  ][line width=0.75]    (10.93,-3.29) .. controls (6.95,-1.4) and (3.31,-0.3) .. (0,0) .. controls (3.31,0.3) and (6.95,1.4) .. (10.93,3.29)   ;
\draw    (348.9,153.67) -- (386.82,153.67) ;
\draw [shift={(388.82,153.67)}, rotate = 180] [color={rgb, 255:red, 0; green, 0; blue, 0 }  ][line width=0.75]    (10.93,-3.29) .. controls (6.95,-1.4) and (3.31,-0.3) .. (0,0) .. controls (3.31,0.3) and (6.95,1.4) .. (10.93,3.29)   ;
\draw    (264.82,153.33) -- (295.88,153.33) ;
\draw [shift={(297.88,153.33)}, rotate = 180] [color={rgb, 255:red, 0; green, 0; blue, 0 }  ][line width=0.75]    (10.93,-3.29) .. controls (6.95,-1.4) and (3.31,-0.3) .. (0,0) .. controls (3.31,0.3) and (6.95,1.4) .. (10.93,3.29)   ;
\draw   (478.17,104.52) -- (529.22,104.52) -- (529.22,163) -- (478.17,163) -- cycle ;
\draw    (477.57,152.67) -- (445.82,152.67) ;
\draw [shift={(443.82,152.67)}, rotate = 360] [color={rgb, 255:red, 0; green, 0; blue, 0 }  ][line width=0.75]    (10.93,-3.29) .. controls (6.95,-1.4) and (3.31,-0.3) .. (0,0) .. controls (3.31,0.3) and (6.95,1.4) .. (10.93,3.29)   ;
\draw    (528.95,113.33) -- (562.82,113.33) ;
\draw [shift={(564.82,113.33)}, rotate = 180] [color={rgb, 255:red, 0; green, 0; blue, 0 }  ][line width=0.75]    (10.93,-3.29) .. controls (6.95,-1.4) and (3.31,-0.3) .. (0,0) .. controls (3.31,0.3) and (6.95,1.4) .. (10.93,3.29)   ;
\draw    (565.82,151) -- (531.88,151) ;
\draw [shift={(529.88,151)}, rotate = 360] [color={rgb, 255:red, 0; green, 0; blue, 0 }  ][line width=0.75]    (10.93,-3.29) .. controls (6.95,-1.4) and (3.31,-0.3) .. (0,0) .. controls (3.31,0.3) and (6.95,1.4) .. (10.93,3.29)   ;
\draw [color={rgb, 255:red, 113; green, 113; blue, 113 }  ,draw opacity=1 ] [dash pattern={on 3.75pt off 3pt on 3.75pt off 3pt}]  (196.17,94.02) -- (196.17,165.36) ;
\draw [color={rgb, 255:red, 113; green, 113; blue, 113 }  ,draw opacity=1 ] [dash pattern={on 3.75pt off 3pt on 3.75pt off 3pt}]  (418.5,94.69) -- (418.5,163.36) ;

\draw (113.36,133.56) node   [align=left] {$\displaystyle S$};
\draw (66.55,121.13) node   [align=left] {$\displaystyle ( c,t_{a})$};
\draw (162.21,121.3) node   [align=left] {$\displaystyle ( c,t_{b})$};
\draw (323.69,134.56) node   [align=left] {$\displaystyle S_{A}$};
\draw (274.98,105.3) node   [align=left] {$\displaystyle ( c',t_{0} ')$};
\draw (374.02,141.3) node   [align=left] {$\displaystyle ( c_{B} ,t_{b})$};
\draw (259.55,140.3) node   [align=left] {$\displaystyle (( b',p') ,t_{1} ')$};
\draw (505.64,135.32) node   [align=left] {$\displaystyle S_{B}$};
\draw (554.65,100.63) node   [align=left] {$\displaystyle ( c'',t_{0} '')$};
\draw (452.19,139.8) node   [align=left] {$\displaystyle ( c_{A} ,t_{a})$};
\draw (572.21,138.63) node   [align=left] {$\displaystyle (( b'',p'') ,t_{2} '')$};
\draw (112.17,174.25) node   [align=left] {(a)};
\draw (324.67,174.25) node   [align=left] {(b)};
\draw (501.67,174.25) node   [align=left] {(c)};

\end{tikzpicture}
    \caption{(a), (b) and (c) respectively show the components $S,S_A,S_B$ of the WCF resource $\ca{S}$. Every message is assigned an element of the order set to indicate the causal order.}
    \label{fig:ideal-wcf}
\end{figure}

\begin{definition}[$z$-unbalanced $\epsilon$-biased WCF resource] \label{def:wcf-res}
    Let $z\in[0,\frac{1}{2}]$ and $\epsilon\in[0, z]$. A $z$-unbalanced $\epsilon$-biased WCF resource is characterized by a tuple $\ca{S} = (S,S_A,S_B)$, as is shown in \Cref{fig:ideal-wcf}. 
    \begin{enumerate}
    \item $S$ describes the case when both Alice and Bob are honest. 
    In this case, both parties obtains a random bit $c$ according to the probabilities
    \begin{align}\label{eqn:ideal-prob}
        \Pr[c = 0] = z,\quad \Pr[c = 1] = 1 - z. 
    \end{align}
    \item $S_{A}$ describes the case when Alice is dishonest and Bob is honest. 
    $S_A$ outputs a random bit $c'$ subject to the probability distribution $(z,1-z)$ 
    at $t_0'$ and receives two bits $b',p'$ at $t_1'$ at the left interface. 
    Then it outputs a bit $c_B$ at $t_b$ at the right interface. 
    If $p'=1$, then $c_B=1$, else $c_B=b'$ with probability $\frac{\epsilon}{1-z}$ and $c_B = c'$ with probability $1- \frac{\epsilon}{1-z}$. The causality condition is $t_0' \prec t_1' \prec t_b$.
    \item $S_B$ describes the case when Alice is honest and Bob is dishonest.
    $S_B$ outputs a random bit $c''$ subject to the probability distribution $(z,1-z)$ at $t_0''$ and receives two bits $b'',p''$ controlled by Bob at $t_1''$ at the right interface. Then it outputs a bit $c_A$ at $t_a$ at the left interface. If $p''=1$, then $c_A=0$, else $c_A=b''$ with probability $\frac{\epsilon}{z}$ and $c_A = c''$ with probability $1-\frac{\epsilon}{z}$.
    The causality condition is $t_0'' \prec t_1'' \prec t_a$.
\end{enumerate}
\end{definition}
To elaborate on the functionality when one party is malicious, consider an example where cheating Alice tries to bias the output $c_B$ of $S_A$ towards $0$.
For this purpose, she needs to set $p'=0$ regardless of the value of $c'$. Then the probability of her successfully forcing $c_B=0$ is
\begin{equation*}
    \Pr[c_B = 0] = \Pr[c'=0]\cdot \Pr[c_B=0|c'=0] + \Pr[c'=1]\cdot \Pr[c_B=0|c'=1].
\end{equation*}
As $p' = 0$, the optimal strategy for Alice is to set $b' = 0$ since $c_B$ is either $c'$ or $b'$.
If $c'=0$, which happens with probability $\Pr[c'=0] = z$, then
$c_B=0$ for certain, i.e., $\Pr[c_B=0|c'=0] = 1$. 
If $c'=1$, which happens with probability $\Pr[c'=1] = 1-z$, Alice's conditional winning probability equals the probability that $c_B$ takes the value of $b'$, i.e., $\Pr[c_B=0|c'=1] = \frac{\epsilon}{1-z}$. 
We thus have
\begin{equation}
    \Pr[c_B = 0] = \Pr[c'=0]  \cdot 1 + \Pr[c'=1 ]\cdot \frac{\epsilon}{1-z} = z  + \epsilon.
\end{equation}
Similar arguments also hold for cheating Bob.

The following theorem indicates that the resource in \Cref{def:wcf-res} cannot be constructed from noiseless communication channel.

\begin{theorem} \label{thm:wcf-not}
    For $z \in [0,\frac{1}{2}]$ and $\epsilon \in [0,z]$, there does not exist a protocol $\pi=(\pi_A,\pi_B)$ which can construct an  $z$-unbalanced $\epsilon$-biased WCF resource $\mathcal{S} = (S,S_A,S_B)$ with a distinguishing advantage $\delta < \frac{1}{3} \left( z(1-z) - \epsilon + \epsilon \min\{ 2\epsilon, z\} \right)$. 
\end{theorem}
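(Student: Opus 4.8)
The plan is to argue by contradiction. Suppose $\pi=(\pi_A,\pi_B)$ constructs $\ca{S}$ with advantage $\delta$. Specialising composable security to a noiseless channel, I obtain correctness $S\approx_\delta\pi_A\pi_B$ together with simulators $\sigma_A,\sigma_B$ satisfying $S_B\sigma_B\approx_\delta\pi_A$ and $\sigma_AS_A\approx_\delta\pi_B$. From these three relations I will manufacture a single distinguisher whose advantage is provably at least $z(1-z-\epsilon)$; since the three relations bound that advantage by $3\delta$, this forces $\delta\ge\tfrac13 z(1-z-\epsilon)$, contradicting the hypothesis.

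The distinguisher runs a man-in-the-middle attack on two parallel copies. Copy~1 presents honest Alice's converter $\pi_A$ with its Bob-side channel left open; copy~2 presents honest Bob's converter $\pi_B$ with its Alice-side channel left open. The distinguisher relays every channel message between the open interface of copy~1 and that of copy~2, records the two honest outputs $c_A^{(1)}$ and $c_B^{(2)}$, and tests whether the induced joint law on $(c_A^{(1)},c_B^{(2)})$ matches the perfectly correlated honest law $T$ given by $T(0,0)=z$ and $T(1,1)=1-z$. Because the channels are noiseless and the attacker merely forwards, the two copies are welded into one honest execution of $\pi_A\pi_B$, so $c_A^{(1)}=c_B^{(2)}$ and, by correctness, this real joint law is $\delta$-close to $T$.

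Next I pass to the ideal picture by two hybrid steps: using $S_B\sigma_B\approx_\delta\pi_A$ I replace copy~1 by the resource $S_B$ driven through $\sigma_B$, and using $\sigma_AS_A\approx_\delta\pi_B$ I replace copy~2 by $S_A$ driven through $\sigma_A$; since attaching the relay and reading the outputs is data processing, the triangle inequality places the real experiment within $2\delta$ of this doubly-ideal experiment. The heart of the argument is then a lower bound, valid for every admissible pair of simulators, on the distinguishing advantage between the doubly-ideal joint law and $T$: it is at least $z(1-z-\epsilon)$. The mechanism is that the two coins now come from independent resources, each returning its own internal bit (distributed as $(z,1-z)$) unless its dishonest party forfeits or spends an $\epsilon$-sized bias; crucially the two forfeits point in opposite directions --- $S_B$ drives $c_A^{(1)}$ to $0$ while $S_A$ drives $c_B^{(2)}$ to $1$ --- i.e.\ towards disagreement. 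To imitate the perfect agreement of $T$ the simulators would each have to bias their copy according to the other copy's internal bit, but this is a cyclic dependency and the acyclicity of the combined causal box forbids it; at least one simulator must fix its $(b,p)$ while blind to the other copy's bit. With only the $\epsilon/z$ (resp.\ $\epsilon/(1-z)$) bias budget available to the sighted side and with the visible marginals pinned near the honest values, a direct computation shows the disagreement event $\{c_A^{(1)}=0,\ c_B^{(2)}=1\}$, which is impossible under $T$, survives with probability at least $z(1-z-\epsilon)$; orienting the man-in-the-middle to leave the cheaper-to-bias side blind --- the choice that, since $z\le\tfrac12$, yields the larger advantage $z(1-z-\epsilon)$ --- realises exactly this value.

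Putting the pieces together, the distinguisher separates the real and doubly-ideal experiments with advantage at least $z(1-z-\epsilon)-\delta$, while the hybrid chain caps it at $2\delta$; hence $z(1-z-\epsilon)\le 3\delta$, the desired contradiction. The main obstacle is precisely the causal-box step inside the key lemma: rigorously excluding the cyclic ``each simulator waits for the other bit'' strategy (so that one side is genuinely blind), and then turning that blindness into the stated probability bound while simultaneously respecting the constraint that each copy's visible marginal must stay close to the honest $(z,1-z)$ law. Establishing the stand-alone bias limits of $S_A,S_B$ as genuine constraints on the simulators, and handling the quantum side information that the relay may carry, are the technical crux.
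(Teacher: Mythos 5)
Your overall strategy is the paper's: chain the two security conditions and correctness through the triangle inequality to get $S_B\sigma_B\sigma_AS_A\approx_{3\delta}S$, then use the agreement test $c_A\overset{?}{=}c_B$ as the distinguisher and lower-bound the disagreement probability of the composed ideal system by $z(1-z-\epsilon)$. The skeleton, the distinguisher, and the final arithmetic all match.

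However, the justification you give for the key lower bound has a genuine gap. You argue that "acyclicity of the combined causal box" forces at least one simulator to commit to its bias instruction $(b,p)$ while blind to the other resource's internal bit, and you derive the disagreement probability from that blindness. But in the resource of Definition~\ref{def:wcf-res} there is no such cycle: $S_A$ emits $c'$ at $t_0'$ and $S_B$ emits $c''$ at $t_0''$ \emph{unconditionally and before} they accept the instructions at $t_1'$ and $t_1''$, and the only causality constraints are $t_0'\prec t_1'\prec t_b$ and $t_0''\prec t_1''\prec t_a$. Nothing prevents the combined simulator from collecting both $c'$ and $c''$ first and only then issuing both instructions; the paper's proof explicitly grants the simulator exactly this fully-informed causal order and derives the bound anyway. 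The correct mechanism is therefore not blindness but the bounded bias budget: on the event $c'=1,\,c''=0$ (probability $z(1-z)$) even a fully-informed simulator must flip at least one of the two outputs to reach agreement, and each such flip succeeds only with probability $\epsilon/(1-z)$ (for $S_A$) or $\epsilon/z$ (for $S_B$), apart from the forfeit flags $p',p''$ which only push the outputs further toward disagreement in this event's "cheap" direction. You need to redo the case analysis against the fully-informed simulator rather than a blinded one; your statement that the attacker may "orient the man-in-the-middle to leave the cheaper-to-bias side blind" inverts the quantifiers, since the bound must hold for \emph{every} admissible simulator, including the one that sees both bits. The rest of your argument (the $3\delta$ accounting, the identification of $\{c_A=0,c_B=1\}$ as the surviving disagreement event) is sound once this step is repaired.
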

\begin{proof}
    Suppose that there exists a protocol $\pi=(\pi_A,\pi_B)$ that constructs an ideal $z$-unbalanced $\epsilon$-biased WCF resource $\mathcal{S} = (S,S_A,S_B)$ with a distinguishing advantage $\delta$ (without loss of generality, we assume $z\in [0, \frac{1}{2}]$). Combining the security conditions for Alice~\eqref{eq:cheatingBob} and Bob~\eqref{eq:cheatingAlice} using triangle inequality, we obtain 
    \begin{align}
        \exists \sigma \in \Sigma, S_B \sigma S_A \approx_{2\delta} \pi_A \pi_B, 
    \end{align}
    where $\sigma = \sigma_B\sigma_A $. 
    Once again, combining the previous equation with the correctness condition~\eqref{eq:correctness} using triangle inequality, we get
    \begin{align} 
        \exists \sigma \in \Sigma, S_B \sigma S_A \approx_{3\delta} S,  
    \end{align}
    or equivalently,
    \begin{align}
        \exists \sigma \in \Sigma, d(S_B \sigma S_A , S) \le 3 \delta. \label{eq:AC}
    \end{align}
    This relation is illustrated in \Cref{fig:com-wcf}.
    \begin{figure}[h]
        \centering
        \tikzset{every picture/.style={line width=0.75pt}} 

\begin{tikzpicture}[x=0.75pt,y=0.75pt,yscale=-1,xscale=1]

\draw   (61.4,76.25) -- (114.28,76.25) -- (114.28,137.24) -- (61.4,137.24) -- cycle ;
\draw   (165.26,76.25) -- (218.14,76.25) -- (218.14,137.24) -- (165.26,137.24) -- cycle ;
\draw   (274.12,76.25) -- (327,76.25) -- (327,137.24) -- (274.12,137.24) -- cycle ;
\draw    (61.57,121.47) -- (37.82,121.47) ;
\draw [shift={(35.82,121.47)}, rotate = 360] [color={rgb, 255:red, 0; green, 0; blue, 0 }  ][line width=0.75]    (10.93,-3.29) .. controls (6.95,-1.4) and (3.31,-0.3) .. (0,0) .. controls (3.31,0.3) and (6.95,1.4) .. (10.93,3.29)   ;
\draw    (327,121.47) -- (349.82,121.47) ;
\draw [shift={(351.82,121.47)}, rotate = 180] [color={rgb, 255:red, 0; green, 0; blue, 0 }  ][line width=0.75]    (10.93,-3.29) .. controls (6.95,-1.4) and (3.31,-0.3) .. (0,0) .. controls (3.31,0.3) and (6.95,1.4) .. (10.93,3.29)   ;
\draw    (114.82,87.23) -- (163.09,87.23) ;
\draw [shift={(165.09,87.23)}, rotate = 180] [color={rgb, 255:red, 0; green, 0; blue, 0 }  ][line width=0.75]    (10.93,-3.29) .. controls (6.95,-1.4) and (3.31,-0.3) .. (0,0) .. controls (3.31,0.3) and (6.95,1.4) .. (10.93,3.29)   ;
\draw    (273.82,86.81) -- (220.14,86.81) ;
\draw [shift={(218.14,86.81)}, rotate = 360] [color={rgb, 255:red, 0; green, 0; blue, 0 }  ][line width=0.75]    (10.93,-3.29) .. controls (6.95,-1.4) and (3.31,-0.3) .. (0,0) .. controls (3.31,0.3) and (6.95,1.4) .. (10.93,3.29)   ;
\draw    (116.82,119.84) -- (165.09,119.84) ;
\draw [shift={(114.82,119.84)}, rotate = 0] [color={rgb, 255:red, 0; green, 0; blue, 0 }  ][line width=0.75]    (10.93,-3.29) .. controls (6.95,-1.4) and (3.31,-0.3) .. (0,0) .. controls (3.31,0.3) and (6.95,1.4) .. (10.93,3.29)   ;
\draw    (218.46,119.06) -- (271.82,119.06) ;
\draw [shift={(273.82,119.06)}, rotate = 180] [color={rgb, 255:red, 0; green, 0; blue, 0 }  ][line width=0.75]    (10.93,-3.29) .. controls (6.95,-1.4) and (3.31,-0.3) .. (0,0) .. controls (3.31,0.3) and (6.95,1.4) .. (10.93,3.29)   ;
\draw   (435.56,75.24) -- (484.38,75.24) -- (484.38,134.06) -- (435.56,134.06) -- cycle ;
\draw    (434.73,104.67) -- (410.38,104.67) ;
\draw [shift={(408.38,104.67)}, rotate = 360] [color={rgb, 255:red, 0; green, 0; blue, 0 }  ][line width=0.75]    (10.93,-3.29) .. controls (6.95,-1.4) and (3.31,-0.3) .. (0,0) .. controls (3.31,0.3) and (6.95,1.4) .. (10.93,3.29)   ;
\draw    (484.73,103.67) -- (510.38,103.67) ;
\draw [shift={(512.38,103.67)}, rotate = 180] [color={rgb, 255:red, 0; green, 0; blue, 0 }  ][line width=0.75]    (10.93,-3.29) .. controls (6.95,-1.4) and (3.31,-0.3) .. (0,0) .. controls (3.31,0.3) and (6.95,1.4) .. (10.93,3.29)   ;

\draw (88.82,107.21) node   [align=left] {$\displaystyle S_{B}$};
\draw (301.83,107.21) node   [align=left] {$\displaystyle S_{A}$};
\draw (193.26,106.4) node   [align=left] {$\displaystyle \sigma $};
\draw (458.86,103.56) node   [align=left] {$\displaystyle S$};
\draw (413.04,91.63) node   [align=left] {$\displaystyle c$};
\draw (505.71,91.3) node   [align=left] {$\displaystyle c$};
\draw (48.04,110.63) node   [align=left] {$\displaystyle c_{A}$};
\draw (137.48,77.47) node   [align=left] {$\displaystyle c''$};
\draw (141.04,108.3) node   [align=left] {$\displaystyle ( b'',p'')$};
\draw (246.14,77.47) node   [align=left] {$\displaystyle c'$};
\draw (339.04,110.63) node   [align=left] {$\displaystyle c_{B}$};
\draw (355,96.4) node [anchor=north west][inner sep=0.75pt]    {$\approx _{3\delta }$};
\draw (246.04,108.3) node   [align=left] {$\displaystyle ( b',p')$};

\end{tikzpicture}
        \caption{Illustration for Inequality \eqref{eq:AC}.}
        \label{fig:com-wcf}
    \end{figure}
    
    We next establish a constant lower bound on such $\delta$ (constant in terms of $z$) by deriving a lower bound on the distinguishing advantage $d(S_B \sigma S_A , S)$ for any $\sigma$, i.e., $\min_{\sigma} d(S_B \sigma S_A , S)$. 
    As the two systems $S_B \sigma S_A$ and $S$ in~\Cref{fig:com-wcf} do not need any external input, $d(S_B \sigma S_A , S)$ is simply the total variation distance between their output distributions, denoted by $P_{c_A c_B}$ and $P_{cc}$ respectively, i.e.,
    \begin{align}
        d(S_B \sigma S_A, S) = \frac{1}{2} \|P_{c_A c_B} - P_{cc} \|_1,
    \end{align}
    with $P_{cc}(0,0) = z, P_{cc}(1,1) = 1-z$.

    The best choice of causal order for $\sigma$ is $t_0' \prec t_1', t_0'\prec t_1'', t_0''\prec t_1', t_0''\prec t_1''$ since it can maximize the use of information of $c',c''$. Then, the minimum distinguishing advantage can be bounded as follows:
    \begin{equation} \label{eq:disadv}
    \begin{aligned}
        \min_{\sigma} d(S_B \sigma S_A, S) & = 
        \min_{\sigma} \frac{1}{2} \left( \Pr[c_A = 0, c_B =1] + 
        \Pr[c_A = 1, c_B =0] \right.\\
        & \quad\quad  +  \left.  \big\lvert
        \Pr[c_A = 0, c_B =0] - z \big\rvert + 
        \big\lvert \Pr[c_A = 1, c_B =1] - 1+z \big\rvert \right) \\
        & \ge \min_{\sigma} \left(\Pr[c_A = 0, c_B =1] + 
        \Pr[c_A = 1, c_B =0] \right)\\
        & = \min_{\sigma} \Pr[c_A \ne c_B]\\
        & \geq \min_{\sigma} \Pr[c' = 1, c'' = 0] \cdot \Pr[c_A \ne c_B | c' = 1, c'' = 0] \enspace \\ 
        & \quad  +  \min_{\sigma} \Pr[c' = 0, c'' = 1] \cdot \Pr[c_A \ne c_B | c' = 0, c'' = 1] \enspace\\
        &  \quad +  \min_{\sigma} \Pr[c' = c''] \cdot \Pr[c_A \ne c_B | c' = c''].\\
    \end{aligned}
    \end{equation}
    In the previous equation, the optimal distribution on $(b',p',b'',p'')$ determines the optimal $\sigma$ for each received value $c',c''$. The optimal distributions for different choices of $c'$ and $c''$ are discussed next. 
    \begin{enumerate}
        \item\label{item:equalInputs} If $c'=c''=0$ or $c'=c''= 1$, then the box $\sigma$ outputs $p'=0, b'=c', p''=0, b''=c''$ to keep $c_A=c''=c'=c_B$, implying that $\min_{\sigma} \Pr[c_A \ne c_B | c' = c''] = 0$. 
        \item\label{item:zeroOneInputs} If $c'=0, c''=1$, then the box $\sigma$ outputs $p'=1, p''=0$ to force the outcome $c_A=c_B=1$, implying that $\min_{\sigma} \Pr[c_A \ne c_B | c' = 0, c'' = 1] = 0$.  
        \item\label{item:oneZeroInputs} If $c'=1, c''=0$, then we can assume the box $\sigma$ outputs $p'=p''=0$ because setting $p'=1$ ($p''=1$) has the same effect as setting $p'=0, b'=c'$ ($p''=0, b''=c''$).
        Consider the event $c_A\ne c_B$ conditioned on different cases of $(b',b'')$:
        \begin{equation}\label{eq:cAcBOverStrategies}
            \begin{aligned}
            &\Pr[c_A \ne c_B | c'=1, c''=0, b'=0,b''=0] = 1 - \frac{\epsilon}{1-z}; \\
            &\Pr[c_A \ne c_B | c'=1, c''=0, b'=1,b''=0] = 1;\\
            &\Pr[c_A \ne c_B | c'=1, c''=0, b'=0,b''=1] = 1 - \frac{\epsilon(1-2\epsilon)}{z(1-z)};\\
            &\Pr[c_A \ne c_B | c'=1, c''=0, b'=1,b''=1] = 1 - \frac{\epsilon}{z}.
            \end{aligned}
        \end{equation}
        All possible cheating strategies are in the convex hull of the above four cases, then we have  
        \begin{align}\label{eq:cAcBOverInputs}
             &\min_{\sigma} \Pr[c_A \ne c_B | c'=1,c''=0]\\  = & \min_{P_{b',b''}} \sum_{i',i''\in \{0,1\}}  P_{b',b''}(i',i'') \cdot \Pr[c_A \ne c_B | c'=1, c''=0, b'=i',b''=i''] \\
             =& \min_{i',i''\in \{0,1\}} \Pr[c_A \ne c_B | c'=1, c''=0, b'=i',b''=i''] \\
            =& \begin{cases}
            1 - \frac{\epsilon (1 - 2\epsilon)}{z(1-z)}, & \text{if } 0 \le \epsilon < z/2\\
            1 - \frac{\epsilon }{z},  & \text{if } z/2 < \epsilon \le z
            \end{cases}. \label{eq:minpr}
        \end{align}
        where the conditional probabilities $\Pr[c_A \ne c_B | c'=1, c''=0, b'=i',b''=i'']$ are given in~\Cref{eq:cAcBOverStrategies}, and \Cref{eq:minpr} holds because $0\le \epsilon \le z \le 1/2$.
    \end{enumerate}

    By the above discussion, \Cref{eq:disadv} continues as 
    \begin{align}
        \min_{\sigma} d(S_B \sigma S_A, S) & \geq 
        \min_{\sigma} \Pr[c'=1,c''=0] \cdot \Pr[c_A \ne c_B | c'=1,c''=0] \label{eq:AAS} \\
        &=z(1-z) - \epsilon + \epsilon \min\{ 2\epsilon, z\}, \label{eq:lowrBoundDistinguisher}
    \end{align}
    where \Cref{eq:AAS} follows from~\Cref{eq:disadv} and the implications of \Cref{item:equalInputs} and \Cref{item:zeroOneInputs}. \Cref{eq:lowrBoundDistinguisher} simply follows from \Cref{eq:minpr}.
    Finally, combining \Cref{eq:AC} and \Cref{eq:lowrBoundDistinguisher}, we have
    \begin{align}
        \delta \geq \frac{1}{3} \left( z(1-z) - \epsilon + \epsilon \min\{ 2\epsilon, z\} \right),
    \end{align} 
    which implies that $\pi$ cannot construct $\ca{S}$ with a vanishing distinguishing advantage.
\end{proof}

Setting $\epsilon = 0, z=\frac{1}{2}$, we conclude that there does not exist a protocol $\pi$ that constructs an ideal balanced unbiased WCF resource $\ca{S}$ with a distinguishing advantage $\delta < \frac{1}{12}$. 

\subsection{Impossibility for a general resource}

Composable security is defined by requiring that an actual protocol emulates an ideal resource within a certain accuracy. As established in \Cref{thm:wcf-not}, no actual protocol can approximate the ideal WCF resource defined in \Cref{def:wcf-res}. However, this impossibility result pertains only to this specific ideal WCF resource. By redefining the ideal WCF resource, potentially introducing additional structure, a protocol that securely constructs this modified ideal resource may still exist at the cost of a more complicated and less useful ideal WCF resource.

To address this possibility, it is crucial to determine what properties a WCF resource must possess to be useful in general.
For a WCF resource $\ca{S} = (S,S_A,S_B)$, we propose the following conditions:
\begin{itemize}
    \item \textbf{Fully expressing}. A  resource $\ca{S}$ for $\mathrm{WCF}(z,\epsilon)$ is fully expressing if 
    \begin{enumerate}
        \item $S$ outputs the same value $c$ at each side subject to distribution: $\Pr[c=0] = z, \Pr[c=1] = 1-z$.
        \item For any converter $\alpha$, the composition $\alpha S_A$ outputs bit $0$ at the right interface with probability at most $z+\epsilon$.
        \item For any converter $\beta$, the composition $S_B \beta$ outputs bit $1$ at the left interface with probability at most $1-z+\epsilon$.
    \end{enumerate}    
    \item \textbf{Simple}. 
    Resource simplicity refers to abstraction of non-essential details of a protocol, thereby reducing its interaction complexity. This simplification facilitates the analysis of the resource in a more general context.
    Here we take the interaction complexity $\mathsf{Intcom}(S_A)$, $\mathsf{Intcom}(S_B)$ (defined in \Cref{sec:qcm}) to quantify it.
    Although the idea of a simple resource is rather subjective, we state a resource to be simple if its net interaction complexity ($\mathsf{Intcom}(S_A) + \mathsf{Intcom}(S_B)$) is of $\mathcal{O}(1)$.
\end{itemize}

Next, we will show that a fully expressing and simple WCF resource cannot be constructed from noiseless communication channels.
Specifically, \Cref{thm:general} states that, if a fully expressing resource for $\mathrm{WCF}(z,\epsilon)$ with small $\epsilon$ can be constructed from noiseless communication channels, then its interaction complexity cannot be small.

\begin{theorem} \label{thm:general}
    Let $\pi$ be a QCM $\mathrm{WCF}(\frac{1}{2},\epsilon)$ protocol, and $\pi$ constructs a fully expressing resource $\ca{S}=(S,S_A,S_B)$.
    Then the interaction complexity of $\ca{S}$ satisfies
     \begin{align}
         \min \{\mathsf{Intcom}(S_A), \mathsf{Intcom}(S_B)\} \ge \exp\left( \Omega(1/\sqrt{\epsilon})\right).
     \end{align}
\end{theorem}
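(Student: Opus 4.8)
The plan is to follow the two-step strategy announced in the introduction. First I would turn the fully expressing resource $\ca{S}$ into an honest, stand-alone WCF protocol whose round complexity is controlled by $\mathsf{Intcom}(S_A)$ (and, by a symmetric construction, by $\mathsf{Intcom}(S_B)$); then I would invoke the known exponential lower bound on the round complexity of any WCF protocol of bias $\epsilon$. The guiding observation is that the two security clauses of the fully expressing property are literally the two stand-alone bias conditions (S2) and (S3), with the adversarial interface of $S_A$ (resp.\ $S_B$) taking over the role of the communication channel. What a single component still lacks in order to be a full protocol is an honest counterparty together with a correctness guarantee, and both are supplied for free by the simulators that composable security of $\pi$ provides.

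Concretely, I would first build $\rho=(\rho_A,\rho_B)$ from $S_A$ by setting $\rho_B := S_A$, so that honest Bob just runs the resource component, treating its adversarial (left) interface as the channel to Alice and reading $c_B$ on the right. For honest Alice I take $\rho_A := \pi_A\sigma_A$, where $\sigma_A$ is the simulator from the security-for-Bob relation $\sigma_A S_A \approx_\delta \pi_B$; it re-expresses the channel messages generated by $\pi_A$ as messages on $S_A$'s adversarial interface. Correctness follows by chaining the simplified construction relations: $S \approx_\delta \pi_A\pi_B$ and $\pi_B \approx_\delta \sigma_A S_A$ give $S \approx_{2\delta} \pi_A\sigma_A S_A = \rho_A\rho_B$. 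Security against a dishonest Alice is immediate from the fully expressing clause, since replacing $\rho_A$ by any $\alpha$ yields $\alpha S_A$ with $\Pr[c_B=0]\le z+\epsilon$. Security against a dishonest Bob is inherited from the hypothesis that $\pi$ is itself a stand-alone $\mathrm{WCF}(\frac{1}{2},\epsilon)$ protocol: replacing $\rho_B$ by any $\beta$ gives $\pi_A(\sigma_A\beta)$, and reading $\sigma_A\beta$ as a dishonest Bob attached to $\pi_A$, condition (S3) bounds $\Pr[c_A=1]$ by $1-z+\epsilon$. Hence $\rho$ is a WCF protocol of bias $O(\epsilon+\delta)$, and because its only communication is on $S_A$'s adversarial interface, its round complexity equals $\mathsf{Intcom}(S_A)$.

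Running the mirror-image construction with $S_B$---taking $\rho_A' := S_B$ and $\rho_B' := \sigma_B\pi_B$ for the security-for-Alice simulator $\sigma_B$---yields a second WCF protocol of bias $O(\epsilon+\delta)$ whose round complexity equals $\mathsf{Intcom}(S_B)$; here correctness uses $\pi_A \approx_\delta S_B\sigma_B$, security against Bob uses the fully expressing clause for $S_B$, and security against Alice uses (S2) applied to $\pi_B$. Feeding each of the two protocols into the exponential lower bound on WCF round complexity gives $\mathsf{Intcom}(S_A)\ge\exp(\Omega(1/\sqrt{\epsilon}))$ and $\mathsf{Intcom}(S_B)\ge\exp(\Omega(1/\sqrt{\epsilon}))$ separately, from which the stated bound on the minimum is immediate.

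I expect the real work to lie in the bookkeeping of the reduction rather than in any isolated deep step. One must verify that $\rho_A=\pi_A\sigma_A$ is a legitimate QCM converter whose inner interface is compatible with $S_A$, so that $\rho$ genuinely fits the definition of a coin flipping protocol; that the number of communication rounds of $\rho$ is exactly $\mathsf{Intcom}(S_A)$ and is not secretly inflated by $\sigma_A$; and that the $2\delta$ correctness error can be folded into the bias so that the cited lower bound, stated for exactly correct protocols, still applies with $\epsilon$ replaced by $O(\epsilon+\delta)$. The conceptual crux, by contrast, is the identification made in the second paragraph: recognizing that the adversarial interface of a fully expressing $S_A$ already constitutes the honest-Bob side of a secure coin flipping protocol, so that the simulator guaranteed by composable security of $\pi$ delivers exactly the honest Alice needed to complete it.
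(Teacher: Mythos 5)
Your proposal is correct and takes essentially the same route as the paper: use the simulators guaranteed by the construction to turn a resource component into one honest party of a new stand-alone WCF protocol whose round complexity equals that component's interaction complexity, then invoke the exponential lower bound of Proposition~\ref{prop:wcf-comp}. The only divergence is in the wiring---the paper forms the single protocol $\pi_A'=S_B$, $\pi_B'=\sigma_B\sigma_A S_A$ and derives both security conditions from the fully expressing property, whereas you form one protocol per component and obtain one security direction from the stand-alone security of $\pi$ itself---which changes nothing of substance.
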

To prove \Cref{thm:general}, we need a proposition rewritten from~\cite[Theorem 8.2]{miller2020impossibility}.
\begin{proposition} \label{prop:wcf-comp}
    If $\pi$ is a QCM $\mathrm{WCF}(\frac{1}{2},\epsilon)$ protocol, then
    \begin{align}
        \mathsf{Intcom}(\pi) \ge \exp\left(\Omega(1/\sqrt{\epsilon})\right).
    \end{align}
\end{proposition}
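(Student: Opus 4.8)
The plan is to treat Proposition~\ref{prop:wcf-comp} as a faithful restatement of \cite[Theorem 8.2]{miller2020impossibility} in the language of this paper, so that the bulk of the work is a dictionary translation rather than a fresh lower-bound argument. Miller's theorem establishes that any quantum weak coin flipping protocol achieving bias $\epsilon$ must exchange $\exp(\Omega(1/\sqrt{\epsilon}))$ messages; the heavy machinery behind it (the time-dependent point-game formalism and the associated lower bound on the number of point moves) is imported wholesale and not reproven. My task is therefore to verify that a QCM $\mathrm{WCF}(\frac{1}{2},\epsilon)$ protocol in the sense of Definition~\ref{def:wcf-stand-alone} is precisely a weak coin flipping protocol with bias $\epsilon$ in Miller's sense, and that our complexity measure $\mathsf{Intcom}(\pi)$ agrees with his message count up to a constant factor that the $\Omega(\cdot)$ absorbs.

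First I would check the bias correspondence by specializing Definition~\ref{def:wcf-stand-alone} to $z=\frac{1}{2}$. Condition (S1) then says the honest coin is fair, with Alice winning (outcome $c_A=c_B=0$) and Bob winning (outcome $c_A=c_B=1$) each with probability $\frac{1}{2}$. Condition (S2) bounds, over all cheating converters $\alpha$ composed with Bob's honest $\pi_B$, the probability that Bob outputs his un-preferred bit $c_B=0$ by $\frac{1}{2}+\epsilon$; this is exactly the optimal cheating probability $P_A^{*}$ with which a dishonest Alice forces her preferred outcome. Symmetrically, (S3) identifies Bob's optimal cheating probability $P_B^{*}\le\frac{1}{2}+\epsilon$. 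Hence $\max\{P_A^{*},P_B^{*}\}-\frac{1}{2}\le\epsilon$, which is precisely the bias parameter in Miller's definition, so the hypotheses of \cite[Theorem 8.2]{miller2020impossibility} are met.

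The remaining step, and the one I expect to require the most care, is the model-and-measure matching. I would argue that the quantum comb model representation of $\pi$ is equivalent to the sequential message-passing model underlying Miller's analysis: a QCM protocol processes incoming messages one at a time with internal memory, which is exactly the picture of alternating quantum messages that the point-game reduction assumes. Under this identification a communication round corresponds to a bounded number of messages, so $\mathsf{Intcom}(\pi)$ and Miller's message count differ only by a universal constant factor; since this factor enters multiplicatively inside $\exp(\Omega(\cdot))$ it is absorbed into the implicit constant of $\Omega(1/\sqrt{\epsilon})$. Applying \cite[Theorem 8.2]{miller2020impossibility} through this translation then yields $\mathsf{Intcom}(\pi)\ge\exp(\Omega(1/\sqrt{\epsilon}))$. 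The subtlety to guard against is any mismatch in how a round bundles messages and whether Miller's protocols permit a final measurement or abort step that our converters realize identically; verifying that these conventions line up exactly is what makes the reduction rigorous rather than merely plausible.
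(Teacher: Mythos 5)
Your proposal matches the paper exactly: the paper offers no independent proof of Proposition~\ref{prop:wcf-comp}, presenting it only as a restatement of \cite[Theorem 8.2]{miller2020impossibility}, which is precisely the citation-plus-dictionary-translation route you take. Your additional care in checking that the bias parameter of Definition~\ref{def:wcf-stand-alone} specialized to $z=\tfrac{1}{2}$ and the measure $\mathsf{Intcom}(\pi)$ line up with Miller's conventions is more than the paper itself provides, and is the right thing to verify.
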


\begin{proof}[Proof of \Cref{thm:general}]
    Without loss of generality, we can assume $\mathsf{Intcom}(S_B) \le \mathsf{Intcom}(S_A)$. 
    Since $\mathcal{C} \overset{\pi}{\longrightarrow} \mathcal{S}$, there exist converters $\sigma_A,\sigma_B$ such that $\pi_A\pi_B \approx S$, $\pi_A \approx S_B\sigma_B$, and $\pi_B \approx \sigma_A S_A$.
    Construct a new protocol $\pi' = (\pi'_A,\pi'_B)$ with $\pi_A' =  S_B, \pi_B' = \sigma_B \sigma_A S_A$.
    Because resource $\ca{S}$ is fully expressing, we conclude that $\pi'$ is also a $\mathrm{WCF}(\frac{1}{2},\epsilon)$ protocol.
    By \Cref{prop:wcf-comp}, 
    \begin{align}
     \mathsf{Intcom}(S_B) = \mathsf{Intcom}(\pi') \ge \exp\left(\Omega(1/\sqrt{\epsilon})\right).
     \end{align}
\end{proof}

Some parts of the argument in the above proof are of more general interest. 
That is, if there exists a resource $\mathcal{S}$ that abstracts a two-party protocol $\pi$, it implies the existence of another protocol $\pi'$ with the same interaction complexity as the resource $\mathcal{S}$. 
Furthermore, if the resource reduces the interaction complexity of $\pi$,
then $\pi'$ is a better protocol than $\pi$ in terms of interaction complexity.
An example is delegated quantum computation (DQC), where the client aims to delegate computation to the server while maintaining privacy. 
The $S_A$ (against dishonest client) of DQC resource~\cite{dunjko2014composable} abstracting the protocol in~\cite{broadbent2009universal, fitzsimons2017private} has no interaction on either side.
This implies the existence of a simple DQC protocol with no interaction, wherein the client performs all the computation herself, ensuring both correctness and blindness. 
\section{Security of weak coin flipping under composition}
\Cref{sec:neg} highlights the difficulties when formulating the security of WCF in the AC framework.
Fortunately, despite these difficulties, there is a prospect of attaining positive outcomes regarding the composability of such protocols.
This section starts with the limitation of the stand-alone security definition (\Cref{def:wcf-stand-alone}) in the context of composition.
Specifically, for the construction of unbalanced WCF from a balanced one, we propose a condition (Inequality~\eqref{eq:cond-ubwcf}) which, while implicit in~\cite{chailloux2009optimal}, is essential in the security proof of the unbalanced WCF protocol depicted there.
Subsequently, we introduce global security as an extension of the stand-alone definition to encompass possible interdependency between the outputs of the WCF protocol and other components in a larger system.
Furthermore, we establish the condition under which global security is maintained. 

\subsection{Stand-alone security is not sufficient}
For a cryptographic task, some security definitions seem fine from a stand-alone viewpoint, but may present some loopholes when composed with other protocols. 
For example, the accessible information criteria for quantum key distribution falls short of providing composable security due to the data-locking effect~\cite{divincenzo2004locking,konig2007small}.
A similar problem occurs in WCF. 
In particular, a WCF protocol $\pi=(\pi_1, \pi_2)$ (where $\pi_1$ and $\pi_2$ are used instead of $\pi_A$ and $\pi_B$ to avoid confusion in subsequent protocol description) may find its stand-alone security insufficient for upper-level tasks.

To illustrate, consider an attacking strategy for an unbalanced WCF protocol $\mu = (\mu_A,\mu_B)$ constructed from two instances of balanced WCF protocol $\pi=(\pi_1,\pi_2)$. 
The protocol $\mu$ aims to model the scenario where Alice and Bob are concurrently running two instances of WCF with an undetermined execution order.
Assume the protocol $\pi$ follows the quantum comb model~\cite{chiribella2009theoretical,gutoski2007toward} with $n+1$ (assume that $n$ is odd without loss of generality) rounds of communication, where Alice and Bob send messages alternately. 
Then, $\pi_1$ and $\pi_2$ can be characterized by the ordered set $\cT_{\pi}$ shown in \Cref{fig:insecure-b}, where the subset $\{t_i\}_{i=0}^n$ is totally ordered.
The comb model further requires that 
\begin{enumerate}[start=1, label={(\bfseries R\arabic*)}]
    \item Messages are sent by $\pi_1$ and accepted by $\pi_2$ only at $t_i$ where $i$ is even.
    \item Messages are sent by $\pi_2$ and accepted by $\pi_1$ only at $t_i$ where $i$ is odd.
\end{enumerate}
The above conditions allow us to consider the wires in the two directions separately--one within the even subset of $\cT_{\pi}$ and the other within the odd subset of $\cT_{\pi}$.

The protocol $\mu$ is constructed by connecting two independent instances of $\pi$ with the converter $\eta = (\eta_A, \eta_B)$, as shown in \Cref{fig:insecure-a}.
The converter $\eta_A$ accepts a classical bit $c_A$ at position $t_n$ from $\pi_1$ and a classical bit $c_A'$ at position $t_b'$ from $\pi_2'$ and outputs $c_A''$ at position $t_A$,
where $c_A'' = c_A \land c_A'.$

\begin{figure}[ht]
    \centering
     \subcaptionbox[protocol mu]{\label{fig:insecure-a}}[\textwidth][c]{
    \tikzset{every picture/.style={line width=0.75pt}} 

\begin{tikzpicture}[x=0.75pt,y=0.75pt,yscale=-1,xscale=1]

\draw   (171.08,70.42) -- (221.08,70.42) -- (221.08,120.42) -- (171.08,120.42) -- cycle ;
\draw    (119.87,95.83) -- (171.58,95.83) ;
\draw [shift={(117.87,95.83)}, rotate = 0] [color={rgb, 255:red, 0; green, 0; blue, 0 }  ][line width=0.75]    (10.93,-3.29) .. controls (6.95,-1.4) and (3.31,-0.3) .. (0,0) .. controls (3.31,0.3) and (6.95,1.4) .. (10.93,3.29)   ;
\draw    (252.33,85.5) -- (220.75,85.5) ;
\draw [shift={(254.33,85.5)}, rotate = 180] [color={rgb, 255:red, 0; green, 0; blue, 0 }  ][line width=0.75]    (10.93,-3.29) .. controls (6.95,-1.4) and (3.31,-0.3) .. (0,0) .. controls (3.31,0.3) and (6.95,1.4) .. (10.93,3.29)   ;
\draw    (223.06,106.83) -- (252.04,106.83) ;
\draw [shift={(221.06,106.83)}, rotate = 0] [color={rgb, 255:red, 0; green, 0; blue, 0 }  ][line width=0.75]    (10.93,-3.29) .. controls (6.95,-1.4) and (3.31,-0.3) .. (0,0) .. controls (3.31,0.3) and (6.95,1.4) .. (10.93,3.29)   ;
\draw   (171.08,149.75) -- (221.08,149.75) -- (221.08,199.76) -- (171.08,199.76) -- cycle ;
\draw    (119.87,175.17) -- (171.58,175.17) ;
\draw [shift={(117.87,175.17)}, rotate = 0] [color={rgb, 255:red, 0; green, 0; blue, 0 }  ][line width=0.75]    (10.93,-3.29) .. controls (6.95,-1.4) and (3.31,-0.3) .. (0,0) .. controls (3.31,0.3) and (6.95,1.4) .. (10.93,3.29)   ;
\draw    (252.33,186.17) -- (220.75,186.17) ;
\draw [shift={(254.33,186.17)}, rotate = 180] [color={rgb, 255:red, 0; green, 0; blue, 0 }  ][line width=0.75]    (10.93,-3.29) .. controls (6.95,-1.4) and (3.31,-0.3) .. (0,0) .. controls (3.31,0.3) and (6.95,1.4) .. (10.93,3.29)   ;
\draw    (223.73,165.5) -- (252.71,165.5) ;
\draw [shift={(221.73,165.5)}, rotate = 0] [color={rgb, 255:red, 0; green, 0; blue, 0 }  ][line width=0.75]    (10.93,-3.29) .. controls (6.95,-1.4) and (3.31,-0.3) .. (0,0) .. controls (3.31,0.3) and (6.95,1.4) .. (10.93,3.29)   ;
\draw   (370.63,149.75) -- (320.63,149.75) -- (320.63,199.76) -- (370.63,199.76) -- cycle ;
\draw    (411.87,175.17) -- (370.13,175.17) ;
\draw [shift={(413.87,175.17)}, rotate = 180] [color={rgb, 255:red, 0; green, 0; blue, 0 }  ][line width=0.75]    (10.93,-3.29) .. controls (6.95,-1.4) and (3.31,-0.3) .. (0,0) .. controls (3.31,0.3) and (6.95,1.4) .. (10.93,3.29)   ;
\draw    (289.38,164.83) -- (320.96,164.83) ;
\draw [shift={(287.38,164.83)}, rotate = 0] [color={rgb, 255:red, 0; green, 0; blue, 0 }  ][line width=0.75]    (10.93,-3.29) .. controls (6.95,-1.4) and (3.31,-0.3) .. (0,0) .. controls (3.31,0.3) and (6.95,1.4) .. (10.93,3.29)   ;
\draw    (318.65,186.17) -- (289.67,186.17) ;
\draw [shift={(320.65,186.17)}, rotate = 180] [color={rgb, 255:red, 0; green, 0; blue, 0 }  ][line width=0.75]    (10.93,-3.29) .. controls (6.95,-1.4) and (3.31,-0.3) .. (0,0) .. controls (3.31,0.3) and (6.95,1.4) .. (10.93,3.29)   ;
\draw   (370.63,71.08) -- (320.63,71.08) -- (320.63,121.09) -- (370.63,121.09) -- cycle ;
\draw    (411.2,96.5) -- (370.13,96.5) ;
\draw [shift={(413.2,96.5)}, rotate = 180] [color={rgb, 255:red, 0; green, 0; blue, 0 }  ][line width=0.75]    (10.93,-3.29) .. controls (6.95,-1.4) and (3.31,-0.3) .. (0,0) .. controls (3.31,0.3) and (6.95,1.4) .. (10.93,3.29)   ;
\draw    (289.38,106.83) -- (320.96,106.83) ;
\draw [shift={(287.38,106.83)}, rotate = 0] [color={rgb, 255:red, 0; green, 0; blue, 0 }  ][line width=0.75]    (10.93,-3.29) .. controls (6.95,-1.4) and (3.31,-0.3) .. (0,0) .. controls (3.31,0.3) and (6.95,1.4) .. (10.93,3.29)   ;
\draw    (317.98,85.5) -- (289,85.5) ;
\draw [shift={(319.98,85.5)}, rotate = 180] [color={rgb, 255:red, 0; green, 0; blue, 0 }  ][line width=0.75]    (10.93,-3.29) .. controls (6.95,-1.4) and (3.31,-0.3) .. (0,0) .. controls (3.31,0.3) and (6.95,1.4) .. (10.93,3.29)   ;
\draw   (73.33,70.82) -- (117.87,70.82) -- (117.87,200.16) -- (73.33,200.16) -- cycle ;
\draw   (414,70.82) -- (458.53,70.82) -- (458.53,200.16) -- (414,200.16) -- cycle ;
\draw    (17.2,134.83) -- (73.63,134.83) ;
\draw [shift={(15.2,134.83)}, rotate = 0] [color={rgb, 255:red, 0; green, 0; blue, 0 }  ][line width=0.75]    (10.93,-3.29) .. controls (6.95,-1.4) and (3.31,-0.3) .. (0,0) .. controls (3.31,0.3) and (6.95,1.4) .. (10.93,3.29)   ;
\draw    (515.87,134.83) -- (458.08,134.83) ;
\draw [shift={(517.87,134.83)}, rotate = 180] [color={rgb, 255:red, 0; green, 0; blue, 0 }  ][line width=0.75]    (10.93,-3.29) .. controls (6.95,-1.4) and (3.31,-0.3) .. (0,0) .. controls (3.31,0.3) and (6.95,1.4) .. (10.93,3.29)   ;
\draw  [color={rgb, 255:red, 65; green, 117; blue, 5 }  ,draw opacity=1 ][dash pattern={on 4.5pt off 4.5pt}] (20.53,63.13) -- (237.2,63.13) -- (237.2,208.47) -- (20.53,208.47) -- cycle ;
\draw  [color={rgb, 255:red, 65; green, 117; blue, 5 }  ,draw opacity=1 ][dash pattern={on 4.5pt off 4.5pt}] (301.2,63.13) -- (510.53,63.13) -- (510.53,208.47) -- (301.2,208.47) -- cycle ;

\draw (196.08,95.42) node  [font=\normalsize] [align=left] {$\displaystyle \pi _{1}$};
\draw (196.08,174.75) node  [font=\normalsize] [align=left] {$\displaystyle \pi '_{2}$};
\draw (345.63,174.75) node  [font=\normalsize] [align=left] {$\displaystyle \pi '_{1}$};
\draw (345.63,96.09) node  [font=\normalsize] [align=left] {$\displaystyle \pi _{2}$};
\draw (144.75,84.28) node  [font=\small]  {$( E,t_{n})$};
\draw (392.72,85.03) node  [font=\small]  {$( G,t_{n})$};
\draw (144.63,162.95) node  [font=\small]  {$( F,t_{n} ')$};
\draw (392.02,162.03) node  [font=\small]  {$( H,t'_{n})$};
\draw (97.31,136.86) node    {$\eta _{A}$};
\draw (437.14,136.86) node    {$\eta _{B}$};
\draw (44.55,120.52) node  [font=\small]  {$( c''_{A} ,t_{A})$};
\draw (486.55,119.86) node  [font=\small]  {$( c''_{B} ,t_{B})$};
\draw (130.75,46.86) node    {$\mu _{A}$};
\draw (390.09,47.52) node    {$\mu _{B}$};
\draw (268.09,85.62) node  [font=\small]  {$X$};
\draw (268.75,107.62) node  [font=\small]  {$Y$};
\draw (270.09,166.28) node  [font=\small]  {$X'$};
\draw (270.75,187.62) node  [font=\small]  {$Y'$};

\end{tikzpicture}}
    \subcaptionbox[order of mu]{\label{fig:insecure-b}}[\textwidth][c]{
    \tikzset{every picture/.style={line width=0.75pt}} 

\begin{tikzpicture}[x=0.75pt,y=0.75pt,yscale=-1,xscale=1]

\draw    (150.33,85.83) -- (126.42,85.83) ;
\draw [shift={(152.33,85.83)}, rotate = 180] [color={rgb, 255:red, 0; green, 0; blue, 0 }  ][line width=0.75]    (10.93,-3.29) .. controls (6.95,-1.4) and (3.31,-0.3) .. (0,0) .. controls (3.31,0.3) and (6.95,1.4) .. (10.93,3.29)   ;
\draw    (181.81,197.6) -- (152.42,211.83) ;
\draw [shift={(183.61,196.73)}, rotate = 154.17] [color={rgb, 255:red, 0; green, 0; blue, 0 }  ][line width=0.75]    (10.93,-3.29) .. controls (6.95,-1.4) and (3.31,-0.3) .. (0,0) .. controls (3.31,0.3) and (6.95,1.4) .. (10.93,3.29)   ;
\draw    (181.81,186.96) -- (152.42,172.73) ;
\draw [shift={(183.61,187.83)}, rotate = 205.83] [color={rgb, 255:red, 0; green, 0; blue, 0 }  ][line width=0.75]    (10.93,-3.29) .. controls (6.95,-1.4) and (3.31,-0.3) .. (0,0) .. controls (3.31,0.3) and (6.95,1.4) .. (10.93,3.29)   ;
\draw    (181.81,269.6) -- (152.42,283.83) ;
\draw [shift={(183.61,268.73)}, rotate = 154.17] [color={rgb, 255:red, 0; green, 0; blue, 0 }  ][line width=0.75]    (10.93,-3.29) .. controls (6.95,-1.4) and (3.31,-0.3) .. (0,0) .. controls (3.31,0.3) and (6.95,1.4) .. (10.93,3.29)   ;
\draw    (181.81,258.96) -- (152.42,244.73) ;
\draw [shift={(183.61,259.83)}, rotate = 205.83] [color={rgb, 255:red, 0; green, 0; blue, 0 }  ][line width=0.75]    (10.93,-3.29) .. controls (6.95,-1.4) and (3.31,-0.3) .. (0,0) .. controls (3.31,0.3) and (6.95,1.4) .. (10.93,3.29)   ;
\draw    (460.97,108.07) -- (436.33,81.53) ;
\draw [shift={(462.33,109.53)}, rotate = 227.12] [color={rgb, 255:red, 0; green, 0; blue, 0 }  ][line width=0.75]    (10.93,-3.29) .. controls (6.95,-1.4) and (3.31,-0.3) .. (0,0) .. controls (3.31,0.3) and (6.95,1.4) .. (10.93,3.29)   ;
\draw    (202.33,85.83) -- (178.42,85.83) ;
\draw [shift={(204.33,85.83)}, rotate = 180] [color={rgb, 255:red, 0; green, 0; blue, 0 }  ][line width=0.75]    (10.93,-3.29) .. controls (6.95,-1.4) and (3.31,-0.3) .. (0,0) .. controls (3.31,0.3) and (6.95,1.4) .. (10.93,3.29)   ;
\draw    (152.33,135.83) -- (128.42,135.83) ;
\draw [shift={(154.33,135.83)}, rotate = 180] [color={rgb, 255:red, 0; green, 0; blue, 0 }  ][line width=0.75]    (10.93,-3.29) .. controls (6.95,-1.4) and (3.31,-0.3) .. (0,0) .. controls (3.31,0.3) and (6.95,1.4) .. (10.93,3.29)   ;
\draw    (204.33,135.83) -- (180.42,135.83) ;
\draw [shift={(206.33,135.83)}, rotate = 180] [color={rgb, 255:red, 0; green, 0; blue, 0 }  ][line width=0.75]    (10.93,-3.29) .. controls (6.95,-1.4) and (3.31,-0.3) .. (0,0) .. controls (3.31,0.3) and (6.95,1.4) .. (10.93,3.29)   ;
\draw    (361.33,79.83) -- (337.42,79.83) ;
\draw [shift={(363.33,79.83)}, rotate = 180] [color={rgb, 255:red, 0; green, 0; blue, 0 }  ][line width=0.75]    (10.93,-3.29) .. controls (6.95,-1.4) and (3.31,-0.3) .. (0,0) .. controls (3.31,0.3) and (6.95,1.4) .. (10.93,3.29)   ;
\draw    (413.33,79.83) -- (389.42,79.83) ;
\draw [shift={(415.33,79.83)}, rotate = 180] [color={rgb, 255:red, 0; green, 0; blue, 0 }  ][line width=0.75]    (10.93,-3.29) .. controls (6.95,-1.4) and (3.31,-0.3) .. (0,0) .. controls (3.31,0.3) and (6.95,1.4) .. (10.93,3.29)   ;
\draw    (360.33,154.83) -- (336.42,154.83) ;
\draw [shift={(362.33,154.83)}, rotate = 180] [color={rgb, 255:red, 0; green, 0; blue, 0 }  ][line width=0.75]    (10.93,-3.29) .. controls (6.95,-1.4) and (3.31,-0.3) .. (0,0) .. controls (3.31,0.3) and (6.95,1.4) .. (10.93,3.29)   ;
\draw    (412.33,154.83) -- (388.42,154.83) ;
\draw [shift={(414.33,154.83)}, rotate = 180] [color={rgb, 255:red, 0; green, 0; blue, 0 }  ][line width=0.75]    (10.93,-3.29) .. controls (6.95,-1.4) and (3.31,-0.3) .. (0,0) .. controls (3.31,0.3) and (6.95,1.4) .. (10.93,3.29)   ;
\draw    (362.33,203.83) -- (338.42,203.83) ;
\draw [shift={(364.33,203.83)}, rotate = 180] [color={rgb, 255:red, 0; green, 0; blue, 0 }  ][line width=0.75]    (10.93,-3.29) .. controls (6.95,-1.4) and (3.31,-0.3) .. (0,0) .. controls (3.31,0.3) and (6.95,1.4) .. (10.93,3.29)   ;
\draw    (414.33,203.83) -- (390.42,203.83) ;
\draw [shift={(416.33,203.83)}, rotate = 180] [color={rgb, 255:red, 0; green, 0; blue, 0 }  ][line width=0.75]    (10.93,-3.29) .. controls (6.95,-1.4) and (3.31,-0.3) .. (0,0) .. controls (3.31,0.3) and (6.95,1.4) .. (10.93,3.29)   ;
\draw    (361.33,278.83) -- (337.42,278.83) ;
\draw [shift={(363.33,278.83)}, rotate = 180] [color={rgb, 255:red, 0; green, 0; blue, 0 }  ][line width=0.75]    (10.93,-3.29) .. controls (6.95,-1.4) and (3.31,-0.3) .. (0,0) .. controls (3.31,0.3) and (6.95,1.4) .. (10.93,3.29)   ;
\draw    (413.33,278.83) -- (389.42,278.83) ;
\draw [shift={(415.33,278.83)}, rotate = 180] [color={rgb, 255:red, 0; green, 0; blue, 0 }  ][line width=0.75]    (10.93,-3.29) .. controls (6.95,-1.4) and (3.31,-0.3) .. (0,0) .. controls (3.31,0.3) and (6.95,1.4) .. (10.93,3.29)   ;
\draw    (459.97,128) -- (435.33,154.53) ;
\draw [shift={(461.33,126.53)}, rotate = 132.88] [color={rgb, 255:red, 0; green, 0; blue, 0 }  ][line width=0.75]    (10.93,-3.29) .. controls (6.95,-1.4) and (3.31,-0.3) .. (0,0) .. controls (3.31,0.3) and (6.95,1.4) .. (10.93,3.29)   ;
\draw    (463.97,232.07) -- (439.33,205.53) ;
\draw [shift={(465.33,233.53)}, rotate = 227.12] [color={rgb, 255:red, 0; green, 0; blue, 0 }  ][line width=0.75]    (10.93,-3.29) .. controls (6.95,-1.4) and (3.31,-0.3) .. (0,0) .. controls (3.31,0.3) and (6.95,1.4) .. (10.93,3.29)   ;
\draw    (462.97,252) -- (438.33,278.53) ;
\draw [shift={(464.33,250.53)}, rotate = 132.88] [color={rgb, 255:red, 0; green, 0; blue, 0 }  ][line width=0.75]    (10.93,-3.29) .. controls (6.95,-1.4) and (3.31,-0.3) .. (0,0) .. controls (3.31,0.3) and (6.95,1.4) .. (10.93,3.29)   ;

\draw (115.89,83.86) node    {$t_{0}$};
\draw (165.59,83.86) node   {...};
\draw (79.23,83.52) node    {$\mathcal{T}_{\pi } :$};
\draw (78.23,131.52) node    {$\mathcal{T}_{\pi } ':$};
\draw (99.23,193.52) node    {$\mathcal{T}_{\eta _{A}} :$};
\draw (140.56,171.19) node    {$t_{n}$};
\draw (139.89,209.52) node    {$t'_{n}$};
\draw (201.56,190.19) node    {$t_{A}$};
\draw (100.23,262.52) node    {$\mathcal{T}_{\eta _{B}} :$};
\draw (140.56,243.19) node    {$t'_{n}$};
\draw (139.89,281.52) node    {$t_{n}$};
\draw (201.56,262.19) node    {$t_{B}$};
\draw (284.89,111.86) node    {$\mathcal{T}_{\mu _{A}} :$};
\draw (476.56,117.19) node    {$t_{A}$};
\draw (289.55,238.86) node    {$\mathcal{T}_{\mu _{B}} :$};
\draw (215.89,84.86) node    {$t_{n}$};
\draw (117.89,134.86) node    {$t'_{0}$};
\draw (167.59,134.86) node   [align=left] {...};
\draw (217.89,134.86) node    {$t'_{n}$};
\draw (326.89,78.86) node    {$t_{0}$};
\draw (376.59,78.86) node   [align=left] {...};
\draw (426.89,78.86) node    {$t_{n}$};
\draw (325.89,153.86) node    {$t'_{0}$};
\draw (375.59,153.86) node   [align=left] {...};
\draw (425.89,153.86) node    {$t'_{n}$};
\draw (327.89,203) node    {$t_{0}$};
\draw (377.59,203) node   [align=left] {...};
\draw (427.89,203) node    {$t_{n}$};
\draw (326.89,277.86) node    {$t'_{0}$};
\draw (376.59,277.86) node   [align=left] {...};
\draw (426.89,277.86) node    {$t'_{n}$};
\draw (479.56,241.19) node    {$t_{B}$};

\end{tikzpicture}}
    \caption{
    (a) The protocol $\mu = (\mu_A,\mu_B)$ is constructed from protocol $\pi$, $\pi'$ and $\eta$. $\eta = (\eta_A,\eta_B)$ is a pair of converters and the two instances $\pi'=(\pi'_1,\pi'_2)$ and $\pi=(\pi_1,\pi_2)$ are the same except the ordered sets $\cT_{\pi}',\cT_{\pi}$ are independent. 
    (b) This figure shows the Hasse diagram of the posets $\cT_{\pi},\cT_{\pi}',\cT_{\eta_A},\cT_{\eta_B},\cT_{\mu_A},\cT_{\mu_B}$. 
    The Hasse diagram depicts the order relation with directed edge and omits the transitive and reflexive connections for simplicity. 
    The poset $\cT_{\mu_A}$ is obtained by union of $\cT_{\pi},\cT_{\pi}',\cT_{\eta_A}$, which is defined in \Cref{def:upo}. The poset of $\cT_{\mu_b}$ is obtained similarly.} 
    \label{fig:insecure}
\end{figure}

We demonstrate that the stand-alone security of $\mu$ can be compromised even if that of $\pi$ is maintained. 
As per~\cite[Lemma 4]{chailloux2009optimal}, if $\pi$ is a $\mathrm{WCF}(\frac{1}{2},\epsilon)$ protocol and $\epsilon < 1/6$, $\mu$ is expected to be a $\mathrm{WCF}(\frac{3}{4},\frac{3}{2}\epsilon)$ protocol. 
However, it fails to be a $\mathrm{WCF}(\frac{3}{4},\frac{3}{2}\epsilon)$ protocol, as there exists a strategy for dishonest Bob to win the game with probability $\frac{1}{2}$.
Before elaborating on this strategy, we introduce the delay box, denoted by $\Theta_f$.
This box essentially transforms a message at position $t$ to a future position $t'$ according to some delay function $f$. 
For a rigorous definition of delay box, see \Cref{app:insecure}.

For simplicity, denote $\theta_{f_1}, \theta_{f_2}$ as $\theta_1, \theta_2$. Consider the parallel composition of delay boxes $\theta = \theta_1\| \theta_2$ depicted in \Cref{fig:delay}.
The delay functions are defined as
\begin{align}
    &f_1: \{t_i \in \cT_{\pi} \mid \text{$i$ is even} \} \to \cT_{\theta_1},  t_i \mapsto t_i', \label{eq:f1}\\
    &f_2: \{t_i' \in \cT_{\pi}' \mid \text{$i$ is odd} \} \to \cT_{\theta_2},  t_i' \mapsto t_i. \label{eq:f2}
\end{align}

\begin{figure}[h]
    \centering
     \subcaptionbox[protocol mu]{\label{fig:delay-a}}[\textwidth][c]{
    \tikzset{every picture/.style={line width=0.75pt}} 

\begin{tikzpicture}[x=0.75pt,y=0.75pt,yscale=-1,xscale=1]

\draw  [color={rgb, 255:red, 65; green, 117; blue, 5 }  ,draw opacity=1 ][dash pattern={on 4.5pt off 4.5pt}] (333,35.79) -- (429.73,35.79) -- (429.73,180.74) -- (333,180.74) -- cycle ;
\draw   (191.08,42.42) -- (241.08,42.42) -- (241.08,92.42) -- (191.08,92.42) -- cycle ;
\draw    (139.87,67.83) -- (191.58,67.83) ;
\draw [shift={(137.87,67.83)}, rotate = 0] [color={rgb, 255:red, 0; green, 0; blue, 0 }  ][line width=0.75]    (10.93,-3.29) .. controls (6.95,-1.4) and (3.31,-0.3) .. (0,0) .. controls (3.31,0.3) and (6.95,1.4) .. (10.93,3.29)   ;
\draw    (272.33,57.5) -- (240.75,57.5) ;
\draw [shift={(274.33,57.5)}, rotate = 180] [color={rgb, 255:red, 0; green, 0; blue, 0 }  ][line width=0.75]    (10.93,-3.29) .. controls (6.95,-1.4) and (3.31,-0.3) .. (0,0) .. controls (3.31,0.3) and (6.95,1.4) .. (10.93,3.29)   ;
\draw    (243.06,78.83) -- (302.75,78.83) ;
\draw [shift={(241.06,78.83)}, rotate = 0] [color={rgb, 255:red, 0; green, 0; blue, 0 }  ][line width=0.75]    (10.93,-3.29) .. controls (6.95,-1.4) and (3.31,-0.3) .. (0,0) .. controls (3.31,0.3) and (6.95,1.4) .. (10.93,3.29)   ;
\draw   (191.08,121.75) -- (241.08,121.75) -- (241.08,171.76) -- (191.08,171.76) -- cycle ;
\draw    (139.87,147.17) -- (191.58,147.17) ;
\draw [shift={(137.87,147.17)}, rotate = 0] [color={rgb, 255:red, 0; green, 0; blue, 0 }  ][line width=0.75]    (10.93,-3.29) .. controls (6.95,-1.4) and (3.31,-0.3) .. (0,0) .. controls (3.31,0.3) and (6.95,1.4) .. (10.93,3.29)   ;
\draw    (302.33,158.17) -- (242.75,158.17) ;
\draw [shift={(240.75,158.17)}, rotate = 360] [color={rgb, 255:red, 0; green, 0; blue, 0 }  ][line width=0.75]    (10.93,-3.29) .. controls (6.95,-1.4) and (3.31,-0.3) .. (0,0) .. controls (3.31,0.3) and (6.95,1.4) .. (10.93,3.29)   ;
\draw    (241.73,137.5) -- (270.71,137.5) ;
\draw [shift={(272.71,137.5)}, rotate = 180] [color={rgb, 255:red, 0; green, 0; blue, 0 }  ][line width=0.75]    (10.93,-3.29) .. controls (6.95,-1.4) and (3.31,-0.3) .. (0,0) .. controls (3.31,0.3) and (6.95,1.4) .. (10.93,3.29)   ;
\draw   (93.33,42.82) -- (137.87,42.82) -- (137.87,172.16) -- (93.33,172.16) -- cycle ;
\draw    (37.2,106.83) -- (93.63,106.83) ;
\draw [shift={(35.2,106.83)}, rotate = 0] [color={rgb, 255:red, 0; green, 0; blue, 0 }  ][line width=0.75]    (10.93,-3.29) .. controls (6.95,-1.4) and (3.31,-0.3) .. (0,0) .. controls (3.31,0.3) and (6.95,1.4) .. (10.93,3.29)   ;
\draw  [color={rgb, 255:red, 65; green, 117; blue, 5 }  ,draw opacity=1 ][dash pattern={on 4.5pt off 4.5pt}] (40.53,35.13) -- (257.2,35.13) -- (257.2,180.47) -- (40.53,180.47) -- cycle ;
\draw    (406.11,57.5) -- (274.33,57.5) ;
\draw    (304.75,78.83) -- (358.11,78.83) ;
\draw [shift={(302.75,78.83)}, rotate = 0] [color={rgb, 255:red, 0; green, 0; blue, 0 }  ][line width=0.75]    (10.93,-3.29) .. controls (6.95,-1.4) and (3.31,-0.3) .. (0,0) .. controls (3.31,0.3) and (6.95,1.4) .. (10.93,3.29)   ;
\draw    (406.78,158.17) -- (304.33,158.17) ;
\draw [shift={(302.33,158.17)}, rotate = 360] [color={rgb, 255:red, 0; green, 0; blue, 0 }  ][line width=0.75]    (10.93,-3.29) .. controls (6.95,-1.4) and (3.31,-0.3) .. (0,0) .. controls (3.31,0.3) and (6.95,1.4) .. (10.93,3.29)   ;
\draw    (272.71,137.5) -- (360.33,137.5) ;
\draw   (344.72,93.78) -- (374.72,93.78) -- (374.72,123.78) -- (344.72,123.78) -- cycle ;
\draw   (390.72,93.78) -- (420.72,93.78) -- (420.72,123.78) -- (390.72,123.78) -- cycle ;
\draw    (360.33,137.5) -- (360.33,125.67) ;
\draw [shift={(360.33,123.67)}, rotate = 90] [color={rgb, 255:red, 0; green, 0; blue, 0 }  ][line width=0.75]    (10.93,-3.29) .. controls (6.95,-1.4) and (3.31,-0.3) .. (0,0) .. controls (3.31,0.3) and (6.95,1.4) .. (10.93,3.29)   ;
\draw    (406.29,158.17) -- (406.29,123.67) ;
\draw    (406.11,91.67) -- (406.11,57.5) ;
\draw [shift={(406.11,93.67)}, rotate = 270] [color={rgb, 255:red, 0; green, 0; blue, 0 }  ][line width=0.75]    (10.93,-3.29) .. controls (6.95,-1.4) and (3.31,-0.3) .. (0,0) .. controls (3.31,0.3) and (6.95,1.4) .. (10.93,3.29)   ;
\draw    (358.11,92.66) -- (358.11,78.83) ;

\draw (216.08,67.42) node  [font=\normalsize] [align=left] {$\displaystyle \pi _{1}$};
\draw (216.08,146.75) node  [font=\normalsize] [align=left] {$\displaystyle \pi '_{2}$};
\draw (164.75,56.28) node  [font=\small]  {$( E,t_{n})$};
\draw (164.63,134.95) node  [font=\small]  {$( F,t_{n} ')$};
\draw (117.31,108.86) node    {$\eta _{A}$};
\draw (64.55,92.52) node  [font=\small]  {$( c''_{A} ,t_{3})$};
\draw (150.75,18.86) node    {$\mu _{A}$};
\draw (405.72,108.78) node  [font=\small] [align=left] {$\displaystyle \theta _{f_{1}}$};
\draw (359.72,108.78) node  [font=\small] [align=left] {$\displaystyle \theta _{f_{2}}$};
\draw (387.09,21.33) node    {$\theta $};
\draw (268.75,70.28) node  [font=\small]  {$A$};
\draw (270.09,44.95) node  [font=\small]  {$C$};
\draw (268.75,148.95) node  [font=\small]  {$B$};
\draw (270.09,123.62) node  [font=\small]  {$D$};
\draw (313.75,44.95) node  [font=\small]  {$C'$};
\draw (309.09,148.28) node  [font=\small]  {$C''$};
\draw (309.75,122.95) node  [font=\small]  {$D'$};
\draw (311.09,68.28) node  [font=\small]  {$D''$};

\end{tikzpicture}}
    \subcaptionbox[order of mu]{\label{fig:delay-b}}[\textwidth][c]{
    \tikzset{every picture/.style={line width=0.75pt}} 

\begin{tikzpicture}[x=0.75pt,y=0.75pt,yscale=-1,xscale=1]

\draw    (180.75,128.67) -- (180.75,89.83) ;
\draw [shift={(180.75,130.67)}, rotate = 270] [color={rgb, 255:red, 0; green, 0; blue, 0 }  ][line width=0.75]    (10.93,-3.29) .. controls (6.95,-1.4) and (3.31,-0.3) .. (0,0) .. controls (3.31,0.3) and (6.95,1.4) .. (10.93,3.29)   ;
\draw    (524.08,130.67) -- (524.08,91.83) ;
\draw [shift={(524.08,89.83)}, rotate = 90] [color={rgb, 255:red, 0; green, 0; blue, 0 }  ][line width=0.75]    (10.93,-3.29) .. controls (6.95,-1.4) and (3.31,-0.3) .. (0,0) .. controls (3.31,0.3) and (6.95,1.4) .. (10.93,3.29)   ;
\draw    (408.75,130.01) -- (408.75,91.17) ;
\draw [shift={(408.75,89.17)}, rotate = 90] [color={rgb, 255:red, 0; green, 0; blue, 0 }  ][line width=0.75]    (10.93,-3.29) .. controls (6.95,-1.4) and (3.31,-0.3) .. (0,0) .. controls (3.31,0.3) and (6.95,1.4) .. (10.93,3.29)   ;
\draw    (222.75,128.67) -- (222.75,89.83) ;
\draw [shift={(222.75,130.67)}, rotate = 270] [color={rgb, 255:red, 0; green, 0; blue, 0 }  ][line width=0.75]    (10.93,-3.29) .. controls (6.95,-1.4) and (3.31,-0.3) .. (0,0) .. controls (3.31,0.3) and (6.95,1.4) .. (10.93,3.29)   ;
\draw    (451.42,130.01) -- (451.42,91.17) ;
\draw [shift={(451.42,89.17)}, rotate = 90] [color={rgb, 255:red, 0; green, 0; blue, 0 }  ][line width=0.75]    (10.93,-3.29) .. controls (6.95,-1.4) and (3.31,-0.3) .. (0,0) .. controls (3.31,0.3) and (6.95,1.4) .. (10.93,3.29)   ;
\draw    (295.42,129.15) -- (295.42,90.31) ;
\draw [shift={(295.42,131.15)}, rotate = 270] [color={rgb, 255:red, 0; green, 0; blue, 0 }  ][line width=0.75]    (10.93,-3.29) .. controls (6.95,-1.4) and (3.31,-0.3) .. (0,0) .. controls (3.31,0.3) and (6.95,1.4) .. (10.93,3.29)   ;
\draw    (269,204.33) -- (245.08,204.33) ;
\draw [shift={(271,204.33)}, rotate = 180] [color={rgb, 255:red, 0; green, 0; blue, 0 }  ][line width=0.75]    (10.93,-3.29) .. controls (6.95,-1.4) and (3.31,-0.3) .. (0,0) .. controls (3.31,0.3) and (6.95,1.4) .. (10.93,3.29)   ;
\draw    (503,204.33) -- (479.08,204.33) ;
\draw [shift={(505,204.33)}, rotate = 180] [color={rgb, 255:red, 0; green, 0; blue, 0 }  ][line width=0.75]    (10.93,-3.29) .. controls (6.95,-1.4) and (3.31,-0.3) .. (0,0) .. controls (3.31,0.3) and (6.95,1.4) .. (10.93,3.29)   ;
\draw    (367,204.33) -- (343.08,204.33) ;
\draw [shift={(369,204.33)}, rotate = 180] [color={rgb, 255:red, 0; green, 0; blue, 0 }  ][line width=0.75]    (10.93,-3.29) .. controls (6.95,-1.4) and (3.31,-0.3) .. (0,0) .. controls (3.31,0.3) and (6.95,1.4) .. (10.93,3.29)   ;
\draw    (318,204.33) -- (294.08,204.33) ;
\draw [shift={(320,204.33)}, rotate = 180] [color={rgb, 255:red, 0; green, 0; blue, 0 }  ][line width=0.75]    (10.93,-3.29) .. controls (6.95,-1.4) and (3.31,-0.3) .. (0,0) .. controls (3.31,0.3) and (6.95,1.4) .. (10.93,3.29)   ;
\draw    (420,204.33) -- (396.08,204.33) ;
\draw [shift={(422,204.33)}, rotate = 180] [color={rgb, 255:red, 0; green, 0; blue, 0 }  ][line width=0.75]    (10.93,-3.29) .. controls (6.95,-1.4) and (3.31,-0.3) .. (0,0) .. controls (3.31,0.3) and (6.95,1.4) .. (10.93,3.29)   ;

\draw (181.56,72.52) node    {$t_{0}$};
\draw (258.59,70.17) node   [align=left] {...};
\draw (527.56,72.19) node    {$t_{n}$};
\draw (180.56,147.52) node    {$t'_{0}$};
\draw (258.25,145.17) node   [align=left] {...};
\draw (526.56,147.19) node    {$t'_{n}$};
\draw (411.56,71.86) node    {$t_{1}$};
\draw (410.56,146.86) node    {$t'_{1}$};
\draw (223.56,72.52) node    {$t_{2}$};
\draw (222.56,147.52) node    {$t'_{2}$};
\draw (452.23,71.86) node    {$t_{3}$};
\draw (451.23,146.86) node    {$t'_{3}$};
\draw (488.59,71.5) node   [align=left] {...};
\draw (488.25,146.5) node   [align=left] {...};
\draw (298.89,72.67) node    {$t_{n-1}$};
\draw (297.89,147.67) node    {$t'_{n-1}$};
\draw (124,97.73) node [anchor=north west][inner sep=0.75pt]    {$\mathcal{T}_{\theta _{1}} :$};
\draw (354,103.73) node [anchor=north west][inner sep=0.75pt]    {$\mathcal{T}_{\theta _{2}} :$};
\draw (145.67,196.73) node [anchor=north west][inner sep=0.75pt]    {$\mathcal{T}_{( \pi _{1} \| \pi '_{2}) \theta } :$};
\draw (234.56,204.33) node    {$t_{0}$};
\draw (434.92,204.33) node
{...};
\draw (522.56,204.33) node    {$t_{n}$};
\draw (333.23,204.33) node    {$t'_{1}$};
\draw (283.56,204.33) node    {$t'_{0}$};
\draw (382.89,204.33) node    {$t_{1}$};
\draw (461.56,204.33) node    {$t'_{n}$};

\end{tikzpicture}}
    \caption{(a) This figure shows the connection of $\mu_A$ and delay box $\theta$. The box $\theta = \theta_1 \| \theta_2$ is a parallel composition of two delay boxes. (b) This figure shows the Hasse diagram of the ordered sets $\cT_{\theta_1}$, $\cT_{\theta_2}$ and $\cT_{(\pi_1\|\pi_2')\theta}$.} 
    \label{fig:delay}
\end{figure}

\Cref{prop:insecure} shows that \Cref{fig:delay-a} emulates a WCF protocol between Alice and herself. See \Cref{prf:insecure} for the proof.
\begin{proposition} \label{prop:insecure}
The box $(\pi_1\| \pi_2') \theta$ has the same output distribution as $\pi_1 \pi_2$.
\end{proposition}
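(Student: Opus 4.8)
The plan is to show that the delay box $\theta$ merely relabels the time-stamps of the messages exchanged between $\pi_1$ and $\pi_2'$, so that wiring $\pi_1$ against $\pi_2'$ through $\theta$ reproduces exactly the honest interaction $\pi_1\pi_2$. The crucial observation is that, by construction, the second instance $\pi'=(\pi_1',\pi_2')$ is the \emph{same} pair of converters as $\pi=(\pi_1,\pi_2)$; the two differ only in that their message posets $\cT_{\pi}'$ and $\cT_{\pi}$ are independent copies. Consequently $\pi_2'$ performs precisely the same sequence of quantum operations as $\pi_2$, except that it reads its incoming messages at the even positions $t_i'\in\cT_{\pi}'$ and emits its outgoing messages at the odd positions $t_i'\in\cT_{\pi}'$, rather than at the corresponding $t_i\in\cT_{\pi}$.

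First I would make this relabeling explicit. By the comb conditions (R1)--(R2), in the honest composition $\pi_1\pi_2$ the converter $\pi_1$ emits at the even $t_i$ and reads at the odd $t_i$, while $\pi_2$ reads at the even $t_i$ and emits at the odd $t_i$, the two wires being joined at the common positions $\{t_i\}$. In the composition $(\pi_1\|\pi_2')\theta$ the same messages are routed instead through $\theta=\theta_1\|\theta_2$: by \eqref{eq:f1} the box $\theta_1$ carries $\pi_1$'s output at even $t_i$ to the position $t_i'$ at which $\pi_2'$ expects its input, and by \eqref{eq:f2} the box $\theta_2$ carries $\pi_2'$'s output at odd $t_i'$ back to the position $t_i$ at which $\pi_1$ expects its input. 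Since a delay box alters only the time label of a message and leaves its quantum content untouched, composing $\pi_2'$ with the two relabelings yields, at the interface seen by $\pi_1$, exactly the CPTP map $\pi_2$ acting on the positions $\{t_i\}$.

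Next I would verify that the composition is a legitimate causal box, i.e.\ that the order induced on the combined poset $\cT_{(\pi_1\|\pi_2')\theta}$ is acyclic. Tracing the message flow round by round---$\pi_1$ emits at $t_0$, $\theta_1$ delivers it at $t_0'$, $\pi_2'$ replies at $t_1'$, $\theta_2$ delivers it at $t_1$, and so on---produces the total order $t_0\prec t_0'\prec t_1'\prec t_1\prec t_2\prec t_2'\prec\cdots$ depicted in Figure~\ref{fig:delay-b}. Being a total order it is in particular a partial order, so the loop composition defining $(\pi_1\|\pi_2')\theta$ respects causality and the resulting box is well defined.

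Finally, combining the previous steps, replacing the relabeled copy $\pi_2'$ (seen through $\theta$) by $\pi_2$ gives $(\pi_1\|\pi_2')\theta=\pi_1\pi_2$ as causal boxes; in particular the joint distribution of the two emitted coin bits agrees, which is the claim. The main obstacle I anticipate is the bookkeeping in the second step: one must check that the bijections $f_1,f_2$ match the message schedule of $\pi_2'$ to that of $\pi_2$ on \emph{every} round (and that each delay box acts as the identity on message content), while simultaneously confirming that the induced order stays acyclic, so that no message is ever required to depend on a future one.
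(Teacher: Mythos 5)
Your proposal is correct and takes essentially the same route as the paper: the delay boxes $\theta_{f_1},\theta_{f_2}$ are content-preserving relabelings of positions that align the message schedule of the primed copy $\pi_2'$ with that of $\pi_2$, and the induced order on the joint poset is the acyclic chain $t_0\prec t_0'\prec t_1'\prec t_1\prec\cdots$, so $(\pi_1\|\pi_2')\theta$ coincides with $\pi_1\pi_2$ as a causal box. The paper's Appendix~\ref{prf:insecure} merely carries out the bookkeeping you defer, verifying via explicit Choi--Jamiolkowski representations and the chain of basis identifications in Equation~\eqref{eq:base-co} that each delay box acts as an isometric relabeling and that $R_{\pi_2'}$ and $R_{\pi_2}$ agree under the induced isomorphism.
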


Since $\pi=(\pi_1,\pi_2)$ is a balanced WCF protocol, the distribution of the final output $c_A''$ is $P_{c_A''} = (1/2 , 1/2)$, which contradicts our expectation that $\mu$ is a $\mathrm{WCF}(\frac{3}{4},\frac{3}{2} \epsilon)$ protocol.

This example suggests that a stronger criterion is necessary to guarantee the safe composition of WCF protocols.
To achieve a $\mathrm{WCF}(\frac{3}{4},\frac{3}{2}\epsilon)$ protocol, it is essential that for any box connected to the inner interface of $\pi_1 \| \pi_2'$, the output satisfies
\begin{align}
    \forall x \in \{0,1\}, \Pr[c_A'=1|c_A=x] \le \frac{1}{2} + \epsilon.
\end{align}
More generally, to construct an unbalanced WCF protocol $\mathrm{WCF}(z,\epsilon')$ for arbitrary $z\in[0,\frac{1}{2}]$ with $n$ instances of protocol $\pi$, the following condition is needed:
\begin{align} \label{eq:cond-ubwcf}
\forall i \in[n], \forall x^{i-1}_1 \in \{0,1\}^{i-1}, \Pr[c_{A,i}=1|c_{A,1}=x_1,\ldots, c_{A,i-1}=x_{i-1}] \le \frac{1}{2} + \epsilon,
\end{align}
where $x_1^{i} = x_1x_2 \dots x_i$ and the result of the $i$-th instance is denoted by $c_{A,i}$. 
Notably, inequality~\eqref{eq:cond-ubwcf} is an implicit but crucial condition for the security of unbalanced WCF protocol introduced in~\cite{chailloux2009optimal}.

\subsection{Global security}
What is the sufficient condition for Inequality~\eqref{eq:cond-ubwcf} to hold? 
More broadly speaking, what is the safe way to compose WCF protocols? 
Despite negative results discussed in \Cref{sec:neg} on the possibility of abstracting WCF as a resource, there are still certain secure ways of composition, e.g., those satisfying Inequality~\eqref{eq:cond-ubwcf}. 
To explore this, we generalize Inequality~\eqref{eq:cond-ubwcf}, introduce the concept of global secrecy and specify the sufficient condition for it.

\begin{definition}[Global security of WCF in a larger system] \label{def:global-sec}
Let $\pi$ be a $\mathrm{WCF}(z,\epsilon)$ protocol and $\eta$ be a protocol that only outputs classical messages $K_A,K_A',K_B,K_B' \in \mathbb{N}$ at its outer interface, where $K_A,K_B$ are output before $C_A,C_B$ and $K_A',K_B'$ are output after $C_A,C_B$.
For a new protocol $\xi = (\xi_A, \xi_B)$ with $\xi_A = \pi_A \| \eta_A, \xi_B = \pi_B \| \eta_B$ (denoted by $\xi = \pi \| \eta$), the WCF protocol $\pi$ is said to be globally secure in $\xi$ if:
\begin{enumerate}
    \item For any box $\alpha$ compatible with the inner interface of $\xi_A$: $\forall k \in \mathbb{N}, \Pr[ C_A=1 | K_A =k ] \le z+\epsilon$,
    \item For any box $\beta$ compatible with the inner interface of $\xi_B$: $\forall k \in \mathbb{N}, \Pr[ C_B=0 | K_B =k ] \le 1- z +\epsilon$.
\end{enumerate}
\end{definition}

The global security defined above is a property of the whole composed system $\pi \| \eta$ rather than the protocol $\eta$ alone.
For some protocols $\eta$ and $\eta'$, a protocol $\pi$ may be globally secure in $\pi \| \eta$ but not in $\pi \| \eta'$.

Notably, global security in this definition is much weaker than the universal composability in the AC framework.
It is not within any composable framework and only applies to the scenario where the system composed with WCF in parallel takes no input from the outer interface.
This assumption is even stricter than the non-interactive assumption in the bounded storage model~\cite{wehner2008composable}.
Therefore, global security of all components in a larger system only ensures the stand-alone security of the larger system. 

The following theorem formulates the intuition that the WCF protocol is not affected by other boxes as long as we forbid any other interaction when the WCF protocol is active. 
\begin{theorem}
 \label{thm:cond-gs}
    Let $\pi$ be a $\mathrm{WCF}(z,\epsilon)$ protocol and $\eta$ be any protocol with only classical output at the outer interface. $\xi = \pi \| \eta$ is their parallel composition with the ordered set $\cT$. Then $\pi$ is globally secure in the protocol if there exists a partition $\cT_1,\cT_2,\cT_3$ of $\cT$ such that
    \begin{enumerate}[start=1, label={(\bfseries C\arabic*)}]
    \item $\forall t_1\in \cT_1, t_2 \in \cT_2, t_3 \in \cT_3$, $t_1 \prec t_2 \prec t_3$;
    \item $\eta$ only accepts input and produces output on set $\cT_1 \cup \cT_3$, while $\pi$ only accepts input and produces output on set $\cT_2$.
    \end{enumerate}
\end{theorem}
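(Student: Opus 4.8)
The plan is to argue by contradiction, converting any failure of global security into an attack that breaks the stand-alone security of $\pi$ in the sense of Definition~\ref{def:wcf-stand-alone}. I would treat the first condition of Definition~\ref{def:global-sec}; the second is symmetric under exchanging the roles of Alice and Bob. So suppose there were a box $\alpha$ compatible with the inner interface of $\xi_A = \pi_A \| \eta_A$ and a value $k$ with $\Pr[K_A = k]>0$ such that
\begin{align}
    \Pr[c_A = 1 \mid K_A = k] > z + \epsilon .
\end{align}

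Before constructing the attack I would extract the structural consequences of (C1) and (C2). Since $\cT_1 \prec \cT_2 \prec \cT_3$, while $\eta$ acts only on $\cT_1\cup\cT_3$ and $\pi$ only on $\cT_2$, the joint run separates into three phases: on $\cT_1$ the boxes $\eta_A$, the resources of $\eta$, and $\alpha$ are active while $\pi_A$ is idle; on $\cT_2$ only $\pi_A$ and $\alpha$ interact over the WCF channel, because $\eta_A$ is idle there; and $\cT_3$ occurs strictly after $c_A$ is emitted at $t_a \in \cT_2$. The quantity we condition on, $K_A$, is produced during $\cT_1$, and the causality condition of causal boxes forces $c_A$ to depend only on inputs at times $\preceq t_a$, so $\cT_3$ is irrelevant to $c_A$. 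Hence it suffices to reproduce phases $\cT_1$ and $\cT_2$.

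The core step is to assemble from $\alpha$ and $\eta_A$ a stand-alone adversary $\beta_k$ living only on the inner interface of $\pi_A$, so that $\pi_A\beta_k$ realizes exactly the conditional experiment ``given $K_A = k$''. On the $\cT_1$ times, $\beta_k$ simulates the whole $\cT_1$ phase internally --- honest $\eta_A$, the resources of $\eta$, and the $\cT_1$-part of $\alpha$ --- recording $K_A$ and $\alpha$'s carry-over state. Because $\pi_A$ is idle on $\cT_1$, this internal simulation never touches $\pi_A$, so $\beta_k$ may rerun it with fresh randomness, discarding runs with $K_A\neq k$ and keeping the carry-over state only once $K_A = k$ occurs; this rejection step realizes the conditioning exactly. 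On the $\cT_2$ times, $\beta_k$ then runs $\alpha$'s $\cT_2$ strategy against $\pi_A$ over the WCF channel using the retained state, which is well defined since $\eta_A$ is idle on $\cT_2$ and thus $\alpha$'s only live interface there is the one it shares with $\pi_A$. By construction $\pi_A\beta_k$ outputs $c_A = 1$ with probability $\Pr[c_A = 1 \mid K_A = k] > z+\epsilon$; since for the relevant balanced case $z=\frac{1}{2}$ this equals the stand-alone cheating bound $1-z+\epsilon$ of condition~(S3), we contradict Definition~\ref{def:wcf-stand-alone}, and the symmetric argument for Bob contradicts condition~(S2). This contradiction establishes the theorem.

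I expect the main obstacle to be the rigorous justification that $\beta_k$ is an admissible causal box and that its rejection-sampling stage reproduces the conditional law exactly. Both hinge on the strict separation (C1) together with the idleness clauses (C2): these are precisely what guarantee that no information about $\pi_A$'s execution can leak back into the simulated $\cT_1$ phase, and hence that repeating an internal simulation disjoint from $\pi_A$'s wires is legitimate within the Fock-space causal-box formalism. This is also the structural feature absent from the concurrent man-in-the-middle composition of Section~\ref{sec:neg}, which is exactly why global security can fail once the strict causal ordering is dropped.
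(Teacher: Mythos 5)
Your proof follows essentially the same route as the paper's: argue by contradiction, use (C1)/(C2) to split the adversary into a $\mathcal{T}_1$-stage and a $\mathcal{T}_2$-stage linked only by a carry-over quantum memory (the paper makes this formal by invoking the causal-box decomposition lemma of Portmann et al.), and convert the conditional experiment given $K_A=k$ into a stand-alone attack on $\pi_A$. The only cosmetic difference is that the paper hard-wires the conditional state $\rho_{\bar{Q}}^{k}$ of the memory directly into the attacking converter rather than realizing the conditioning by rejection sampling, which sidesteps any worry about unboundedly many internal reruns; your restriction to the balanced case when matching the global-security bound $z+\epsilon$ against the stand-alone bound $1-z+\epsilon$ reflects an apparent inconsistency in the paper's own definitions rather than a flaw in your argument.
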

The proof of \Cref{thm:cond-gs} is in \Cref{prf:cond-gs}. 
The proof idea is by contradiction. If the adversary can break the global security of the composed protocol, then the adversary can also break the stand-alone security of the WCF protocol, which contradicts the fact and thus completes the proof.
\Cref{thm:cond-gs} establishes the sufficient condition that the WCF protocols can be composed while maintaining the overall stand-alone security.

For more complicated functionalities, it is usually easy to construct asymmetric imperfect protocols in which one party can cheat more than the other party. 
Unbalanced WCF protocols make it possible for two parties to fairly choose one of these asymmetric imperfect protocols at random.
This results in a less asymmetric imperfect protocol. 
Therefore, unbalanced WCF protocols play a key role in constructing optimal protocols of more complicated functionalities such as SCF and BC~\cite{chailloux2009optimal, chailloux2011optimal}. 
An unbalanced $\mathrm{WCF}(z,\epsilon')$ protocol can be constructed from balanced $\mathrm{WCF}(\frac{1}{2},\epsilon)$ protocols where $z = 0.b_1 b_2 \dots b_n$ is in binary representation, as is shown in \Cref{fig:unb-wcf}. 

\begin{figure}[htbp!]
    \centering
    \resizebox{!}{6cm}{%
    \tikzset{every picture/.style={line width=0.75pt}} 

\begin{tikzpicture}[x=0.75pt,y=0.75pt,yscale=-1,xscale=1]

\draw   (355.08,64.56) -- (405.08,64.56) -- (405.08,114.56) -- (355.08,114.56) -- cycle ;
\draw    (293.87,89.98) -- (354.84,89.98) ;
\draw [shift={(291.87,89.98)}, rotate = 0] [color={rgb, 255:red, 0; green, 0; blue, 0 }  ][line width=0.75]    (10.93,-3.29) .. controls (6.95,-1.4) and (3.31,-0.3) .. (0,0) .. controls (3.31,0.3) and (6.95,1.4) .. (10.93,3.29)   ;
\draw   (355.08,130.89) -- (405.08,130.89) -- (405.08,180.9) -- (355.08,180.9) -- cycle ;
\draw    (293.87,156.31) -- (354.84,156.31) ;
\draw [shift={(291.87,156.31)}, rotate = 0] [color={rgb, 255:red, 0; green, 0; blue, 0 }  ][line width=0.75]    (10.93,-3.29) .. controls (6.95,-1.4) and (3.31,-0.3) .. (0,0) .. controls (3.31,0.3) and (6.95,1.4) .. (10.93,3.29)   ;
\draw   (247.33,64.96) -- (291.87,64.96) -- (291.87,267.44) -- (247.33,267.44) -- cycle ;
\draw    (211.18,165.17) -- (247.63,165.17) ;
\draw [shift={(209.18,165.17)}, rotate = 0] [color={rgb, 255:red, 0; green, 0; blue, 0 }  ][line width=0.75]    (10.93,-3.29) .. controls (6.95,-1.4) and (3.31,-0.3) .. (0,0) .. controls (3.31,0.3) and (6.95,1.4) .. (10.93,3.29)   ;
\draw   (355.08,217.89) -- (405.08,217.89) -- (405.08,267.9) -- (355.08,267.9) -- cycle ;
\draw    (293.87,243.31) -- (354.84,243.31) ;
\draw [shift={(291.87,243.31)}, rotate = 0] [color={rgb, 255:red, 0; green, 0; blue, 0 }  ][line width=0.75]    (10.93,-3.29) .. controls (6.95,-1.4) and (3.31,-0.3) .. (0,0) .. controls (3.31,0.3) and (6.95,1.4) .. (10.93,3.29)   ;
\draw    (406.18,90.17) -- (435.18,90.17) ;
\draw    (405.18,157.17) -- (434.18,157.17) ;
\draw    (405.18,243.17) -- (434.18,243.17) ;

\draw (380.08,89.56) node  [font=\large] [align=left] {$\displaystyle \pi _{A,1}$};
\draw (380.08,155.89) node  [font=\large] [align=left] {$\displaystyle \pi _{A,2}$};
\draw (269.6,166.2) node  [font=\large]  {$\tau _{A}$};
\draw (323.08,76.23) node  [font=\large] [align=left] {$\displaystyle ( c_{A,1} ,t_{1})$};
\draw (323.41,141.89) node  [font=\large] [align=left] {$\displaystyle ( c_{A,2} ,t_{2})$};
\draw (320.92,42.16) node  [font=\normalsize] [align=left] {Alice};
\draw (229.08,148.69) node  [font=\large] [align=left] {$\displaystyle c_{A}$};
\draw (380.08,242.89) node  [font=\large] [align=left] {$\displaystyle \pi _{A,n}$};
\draw (376.6,201.2) node    {$\cdots $};
\draw (323.41,229.89) node  [font=\large] [align=left] {$\displaystyle ( c_{A,n} ,t_{n})$};
\draw (242.16,288.5) node   [align=left] {(a)};
\draw  [color={rgb, 255:red, 0; green, 0; blue, 0 }  ,draw opacity=1 ][dash pattern={on 4.5pt off 4.5pt}]  (65.1,80.81) .. controls (65.1,71.42) and (72.71,63.81) .. (82.1,63.81) -- (181.1,63.81) .. controls (190.49,63.81) and (198.1,71.42) .. (198.1,80.81) -- (198.1,256.81) .. controls (198.1,266.2) and (190.49,273.81) .. (181.1,273.81) -- (82.1,273.81) .. controls (72.71,273.81) and (65.1,266.2) .. (65.1,256.81) -- cycle  ;
\draw (131.6,168.81) node   [align=left] {\begin{minipage}[lt]{95pt}\setlength\topsep{0pt}
\begin{center}
Program of $\displaystyle \tau _{A}$
\end{center}
$\displaystyle i=1$ {\fontfamily{pcr}\selectfont ;}\\
{\fontfamily{pcr}\selectfont \textbf{While} } $\displaystyle i< n${\fontfamily{pcr}\selectfont}\\
{\fontfamily{pcr}\selectfont \hspace*{1em}\textbf{If} } $\displaystyle c_{A,i} \neq b_{i}$ {\fontfamily{pcr}\selectfont}\\
{\fontfamily{pcr}\selectfont \hspace*{2em}\textbf{Break;} }\\
{\fontfamily{pcr}\selectfont \hspace*{1em}\textbf{Endif}}\\
{\fontfamily{pcr}\selectfont \hspace*{1em}}$\displaystyle i=i+1${\fontfamily{pcr}\selectfont;}\\
\textbf{{\fontfamily{pcr}\selectfont Endwhile}}\\
{\fontfamily{pcr}\selectfont \textbf{Output }} $\displaystyle c_{A} =c_{A,i}$
\end{minipage}};

\end{tikzpicture}
    }
    \resizebox{!}{6cm}{%
    \tikzset{every picture/.style={line width=0.75pt}} 

\begin{tikzpicture}[x=0.75pt,y=0.75pt,yscale=-1,xscale=1]

\draw  [color={rgb, 255:red, 65; green, 117; blue, 5 }  ,draw opacity=1 ][dash pattern={on 4.5pt off 4.5pt}] (284.08,266.18) .. controls (284.08,252.73) and (305.15,241.83) .. (331.13,241.83) .. controls (357.12,241.83) and (378.18,252.73) .. (378.18,266.18) .. controls (378.18,279.63) and (357.12,290.53) .. (331.13,290.53) .. controls (305.15,290.53) and (284.08,279.63) .. (284.08,266.18) -- cycle ;

\draw  [color={rgb, 255:red, 65; green, 117; blue, 5 }  ,draw opacity=1 ][dash pattern={on 4.5pt off 4.5pt}] (284.08,191.51) .. controls (284.08,178.06) and (305.15,167.16) .. (331.13,167.16) .. controls (357.12,167.16) and (378.18,178.06) .. (378.18,191.51) .. controls (378.18,204.96) and (357.12,215.86) .. (331.13,215.86) .. controls (305.15,215.86) and (284.08,204.96) .. (284.08,191.51) -- cycle ;
\draw  [color={rgb, 255:red, 65; green, 117; blue, 5 }  ,draw opacity=1 ][dash pattern={on 4.5pt off 4.5pt}] (284.08,75.32) .. controls (284.08,61.87) and (305.15,50.97) .. (331.13,50.97) .. controls (357.12,50.97) and (378.18,61.87) .. (378.18,75.32) .. controls (378.18,88.77) and (357.12,99.67) .. (331.13,99.67) .. controls (305.15,99.67) and (284.08,88.77) .. (284.08,75.32) -- cycle ;

\draw    (332.46,222.67) -- (332.46,237.54) ;
\draw [shift={(332.46,220.67)}, rotate = 90] [color={rgb, 255:red, 0; green, 0; blue, 0 }  ][line width=0.75]    (10.93,-3.29) .. controls (6.95,-1.4) and (3.31,-0.3) .. (0,0) .. controls (3.31,0.3) and (6.95,1.4) .. (10.93,3.29)   ;
\draw    (331.13,107.86) -- (331.13,122.72) ;
\draw [shift={(331.13,105.86)}, rotate = 90] [color={rgb, 255:red, 0; green, 0; blue, 0 }  ][line width=0.75]    (10.93,-3.29) .. controls (6.95,-1.4) and (3.31,-0.3) .. (0,0) .. controls (3.31,0.3) and (6.95,1.4) .. (10.93,3.29)   ;
\draw    (331.48,146.68) -- (331.48,161.55) ;
\draw [shift={(331.48,144.68)}, rotate = 90] [color={rgb, 255:red, 0; green, 0; blue, 0 }  ][line width=0.75]    (10.93,-3.29) .. controls (6.95,-1.4) and (3.31,-0.3) .. (0,0) .. controls (3.31,0.3) and (6.95,1.4) .. (10.93,3.29)   ;

\draw (391.87,184.42) node [anchor=north west][inner sep=0.75pt]    {$\mathcal{T}_{2}$};
\draw (390.9,259.97) node [anchor=north west][inner sep=0.75pt]    {$\mathcal{T}_{1}$};
\draw (391.86,64.69) node [anchor=north west][inner sep=0.75pt]    {$\mathcal{T}_{n}$};
\draw (331.13,75.32) node   [align=left] {...};
\draw (331.13,266.18) node   [align=left] {...};
\draw (331.13,191.51) node   [align=left] {...};
\draw (330.16,130.54) node   [align=left] {...};
\draw (333.95,305.66) node   [align=left] {(b)};

\end{tikzpicture}
    }
    \caption{(a) This figure shows Alice part of the unbalanced WCF protocol $\eta_A = \tau_A (\pi_{A,1} \|\dots \| \pi_{A,n})$ constructed from balanced protocols $\pi$.
    It runs $n$ instances of $\pi_A$ and processes their result in $\tau_A$.
    The program of $\tau_A$ is presented in the dashed box. 
    Bob's part works similarly. 
    (b) This figure shows the form of the Hasse diagram of $\eta_A$, where $\cT_i$ is the posets for instance $\pi_{A,i}$. The instances of $\pi_A$ run in a time-sequential way.
    }
    \label{fig:unb-wcf}
\end{figure}

As an example, we apply \Cref{thm:cond-gs} to this construction to prove the stand-alone security of the unbalanced WCF protocol in \Cref{coro:unb}. The idea is to prove the stand-alone security of $\tau_{i} (\pi_1\|...\|\pi_i)$ constructed from $\tau_{i-1}(\pi_1\|...\|\pi_{i-1})$ and $\pi_i$ using \Cref{thm:cond-gs} recursively. The detailed proof can be found in \Cref{prf:unb}. 

\begin{corollary} \Label{coro:unb}
    Let $z\in[0,\frac{1}{2}]$. The unbalanced WCF protocol shown in \Cref{fig:unb-wcf} is a $\mathrm{WCF}(z,\epsilon')$ protocol with $\epsilon' = 2\epsilon + o(\epsilon)$.
\end{corollary}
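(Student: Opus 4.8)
The plan is to induct on the number of balanced coins, peeling off the first one and treating the remaining instances as an already-secure unbalanced WCF that plays the role of the ``other components'' in Theorem~\ref{thm:cond-gs}. For $1 \le i \le n$, let $\mu^{(i)}$ denote the protocol that runs the instances $\pi_i,\ldots,\pi_n$ and executes $\tau$ on the bits $b_i,\ldots,b_n$; this is exactly the sub-protocol run conditioned on having reached the $i$-th iteration of the loop, and $\mu^{(1)}$ is the full protocol of Figure~\ref{fig:unb-wcf}. I would show by downward induction on $i$ that $\mu^{(i)}$ is a $\mathrm{WCF}(z_i,\epsilon_i)$ protocol, where $z_i = 0.b_i b_{i+1}\cdots b_{n-1}1$ in binary and $\epsilon_i$ obeys $\epsilon_i \le \frac{1}{2}\epsilon_{i+1}+\epsilon$, with base case $\mu^{(n)} = \pi_n$, i.e.\ $z_n = \frac{1}{2}$ and $\epsilon_n = \epsilon$. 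Writing $\epsilon_i = 2\epsilon + d_i$ turns the recursion into $d_i = \frac{1}{2}d_{i+1}$ with $d_n = -\epsilon$, so $\epsilon_1 \le 2\epsilon - 2^{-(n-1)}\epsilon$ and $z_1 = z$, which proves the claim; the $2^{-(n-1)}\epsilon$ term (and, for non-dyadic $z$, the $O(2^{-n})$ truncation incurred by representing $z$ with finitely many coins) is absorbed into the $o(\epsilon)$.

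For correctness (condition (S1) of Definition~\ref{def:wcf-stand-alone}) I would argue directly. Under honest execution each $\pi_j$ outputs a uniform bit on which the two parties agree, so Alice and Bob apply the \emph{same} deterministic classical map $\tau$ to the \emph{same} string $(c_1,\ldots,c_n)$ and hence output a common bit $c$. A short computation with the recursion $z_i = \frac{1}{2}z_{i+1}$ when $b_i=0$ and $z_i = \frac{1}{2} + \frac{1}{2}z_{i+1}$ when $b_i=1$, anchored at $z_n = \frac{1}{2}$, shows $\Pr[c=0] = z_1 = z$.

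The heart of the argument is the inductive step for security, which I would carry out for a dishonest Bob biasing $c_A$ towards $1$; the dishonest-Alice case is symmetric under exchanging $0$ and $1$. In $\mu^{(i)}$ the coin $\pi_i$ runs strictly before the sub-protocol $\mu^{(i+1)}$, and everything composed with $\pi_i$---namely $\mu^{(i+1)}$ together with the classical loop logic of $\tau$---produces only classical messages at the outer interface. Taking the causal partition $\cT_1 = \emptyset$, $\cT_2$ the time-slots of $\pi_i$, and $\cT_3$ the time-slots of $\mu^{(i+1)}$ and the post-processing, the hypotheses of Theorem~\ref{thm:cond-gs} are met, so $\pi_i$ is globally secure; since $\cT_1$ is empty this gives the unconditional bound $\Pr[c_i=1] \le \frac{1}{2}+\epsilon$ for every adversary. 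By the induction hypothesis $\mu^{(i+1)}$ is $\mathrm{WCF}(z_{i+1},\epsilon_{i+1})$, and as its security holds against every converter it also holds against the residual attack the adversary mounts after seeing the outcome of $\pi_i$, so by averaging $\Pr[\mu^{(i+1)}=1 \mid c_i = x] \le 1 - z_{i+1} + \epsilon_{i+1}$. Combining these through the loop rule---output $c_i$ when $c_i \ne b_i$, otherwise defer to $\mu^{(i+1)}$---gives $\Pr[c_A=1] \le \Pr[c_i=1] + \Pr[c_i=0]\,(1-z_{i+1}+\epsilon_{i+1})$ when $b_i=0$, and $\Pr[c_A=1] \le \Pr[c_i=1]\,(1-z_{i+1}+\epsilon_{i+1})$ when $b_i=1$. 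Substituting $\Pr[c_i=1] \le \frac{1}{2}+\epsilon$ and simplifying (using the WCF validity constraint $\epsilon_{i+1}\le z_{i+1}$ so that the coefficient of $\epsilon$ stays bounded by $1$) collapses both cases to $\Pr[c_A=1] \le (1-z_i) + \frac{1}{2}\epsilon_{i+1} + \epsilon$, i.e.\ $\epsilon_i \le \frac{1}{2}\epsilon_{i+1}+\epsilon$, closing the induction.

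The step I expect to be most delicate is the justification that conditioning on the outcome of $\pi_i$ still leaves a legitimate stand-alone attack on $\mu^{(i+1)}$: one must check that the adversary's residual strategy for $\pi_{i+1},\ldots,\pi_n$, which may depend on his entire view and hence be correlated with $c_i$, is realizable as a single converter on the inner interface of $\mu^{(i+1)}$, so that the induction hypothesis applies fiber-wise and survives the averaging over $c_i$. Re-verifying the hypotheses of Theorem~\ref{thm:cond-gs} at each level---that the aggregate of the remaining instances and the classical loop $\tau$ has classical outer output, and that the instances are executed in the strict time-sequential order of Figure~\ref{fig:unb-wcf}(b)---is the other place where care is needed, but this is precisely what the recursive peel-off structure is designed to supply.
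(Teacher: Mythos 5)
Your proposal is correct, and it organizes the argument differently from the paper's appendix proof. The paper does not induct: it fixes the full protocol, invokes Theorem~\ref{thm:cond-gs} once per instance to get the conditional bounds $\Pr[C_i = x \mid C_1^{i-1} = c_1^{i-1}] \le \tfrac{1}{2}+\epsilon$, argues that a greedy cheater is optimal, and then directly sums over first-disagreement events, bounding $\Pr[\text{Alice wins}]$ by $\sum_{i}(\tfrac12+\epsilon)^{a_i-i+1}(\tfrac12-\epsilon)^{i-1}$ and extracting $2\epsilon + o(\epsilon)$ from a binomial expansion and a series estimate. Your downward induction with the recursion $\epsilon_i \le \tfrac12\epsilon_{i+1}+\epsilon$ replaces that series computation with a two-line algebraic step and yields the cleaner closed form $\epsilon_1 \le 2\epsilon(1-2^{1-n})$; it is in fact closer to the sketch the paper gives in the main text (``prove the stand-alone security of $\tau_i(\pi_1\|\cdots\|\pi_i)$ \ldots recursively'') than the appendix proof is. What your route costs is that the quantity carried through the induction is the \emph{stand-alone} security of the tail $\mu^{(i+1)}$, so the step you flag as delicate — that $\Pr[\mu^{(i+1)}=1 \mid c_i = x] \le 1-z_{i+1}+\epsilon_{i+1}$ — genuinely needs the causal decomposition of the adversary (the analogue of the use of \cite[Lemma~9]{portmann2017causal} in the proof of Theorem~\ref{thm:cond-gs}): the attack splits as $\beta_1$ acting on $\pi_i$'s wires producing a memory $Q$, followed by $\beta_2$ acting on $Q$ and $\mu^{(i+1)}$'s wires, and conditioning on the classical value $c_i=x$ held by honest Alice yields a well-defined state $\rho_Q^x$ so that ``prepare $\rho_Q^x$, then run $\beta_2$'' is a legitimate stand-alone converter against $\mu^{(i+1)}$. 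The paper instead pushes all of the conditioning into a single application of Theorem~\ref{thm:cond-gs} with nontrivial $\cT_1$ (prior instances as the classical side information $K$), so it never has to re-verify fiber-wise security of a composite sub-protocol. Two cosmetic points you should tidy up in a full write-up: the intermediate targets $z_{i+1}=0.b_{i+1}\cdots b_{n-1}1$ can exceed $\tfrac12$, so you are implicitly using $\mathrm{WCF}(z,\epsilon)$ outside the range $z\in[0,\tfrac12]$ of Definition~\ref{def:wcf-stand-alone} (harmless, by the symmetric reading of conditions (\textbf{S2})--(\textbf{S3})); and the validity constraint $\epsilon_{i+1}\le z_{i+1}$ you invoke can fail when $\epsilon$ is not small against $2^{-n}$, but the offending terms are $O(\epsilon\,\epsilon_{i+1})=O(\epsilon^2)$ and are absorbed into the $o(\epsilon)$ in any case.
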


\section{Discussion}
Our work demonstrates that the standalone security of protocols does not guarantee the security under certain types of composition. In particular, although quantum WCF protocol can be secure in the standalone scenario, it fails to be composably secure in the abstract cryptography framework. Moreover, when we construct new protocols with quantum WCF protocol, it remains secure in sequential composition, but exhibits security loopholes in concurrent composition. 
Similar security loopholes also arise in other two-party cryptographic tasks, such as bit commitment~\cite{unruh2011concurrent, kaniewski2013secure}.
For these tasks, though we have hoped for simple and good abstractions for actual protocols in the abstract framework, our work suggests that such abstraction does not exist, which makes security analysis of such protocols notably challenging.

\section{Acknowledgement}
We thank Christopher Portmann and Vilasini Venkatesh for useful discussions. 
This research was initiated while AB was hosted at CQT.
This research is supported by the National Research Foundation, Singapore and A*STAR under its CQT Bridging Grant and its Quantum Engineering Programme (NRF2021-QEP2-01-P06). AB acknowledges the partial support provided by Commonwealth Cyber Initiative (CCI-SWVA) under the 2023 Cyber Innovation Scholars Program. 


\appendix
\section{Causal box model}\label{app:causal-box}
In this appendix, we introduce necessary concepts related to the causal box model. Part of this section is from \cite{portmann2017causal}.
\subsection{Causal box}
Let $\cT$ be a partially ordered set (poset). That is, $\cT$ is a set equipped with a binary relation $\preceq$ that is reflexive, antisymmetric and transitive. 
For $x,y \in \cT$, $x\prec y$ if $x \preceq y$ and $x \neq y$.
Denote $\cT^{\preceq t} = \{x\in \cT \mid x \preceq t \}$.

\begin{definition}[Cuts]
A subset $\cC$ of poset $\cT$ is a cut if it takes the form
\begin{align}
    \cC = \bigcup_{t\in \cC} \cT^{\preceq t}.
\end{align}
A cut is bounded if $\exists t \in \cT, \cC \subseteq \cT^{\preceq t}$.
The set of all cuts of $\cT$ is denoted by $\mf{C}(\cT)$.
The set of all bounded cuts of $\cT$ is denoted by $\bar{\fC}(\cT)$.
\end{definition}

\begin{lemma}[An equivalent condition for cut] \label{lem:cut-crt}
    A subset $\cC$ of a poset $\cT$ is a cut if and only if $t \in \cC \Rightarrow \forall x \preceq t, x \in \cC$.
\end{lemma}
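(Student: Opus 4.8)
The plan is to prove the two inclusions in the defining equality of a cut separately and observe that one of them is automatic. Recall that $\cC$ is a cut precisely when $\cC = \bigcup_{t\in\cC}\cT^{\preceq t}$, while the claimed characterization, $t\in\cC \Rightarrow (\forall x\preceq t,\ x\in\cC)$, is exactly the statement that $\cC$ is downward closed. First I would note that the inclusion $\cC \subseteq \bigcup_{t\in\cC}\cT^{\preceq t}$ holds unconditionally: for each $t\in\cC$, reflexivity of $\preceq$ gives $t\preceq t$, hence $t\in\cT^{\preceq t}$, so $t$ lies in the union. Consequently the defining equality of a cut is equivalent to the single reverse inclusion $\bigcup_{t\in\cC}\cT^{\preceq t}\subseteq\cC$.

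The remaining task is to recognize that this reverse inclusion is word-for-word the downward-closure condition. Unfolding the union, $\bigcup_{t\in\cC}\cT^{\preceq t}\subseteq\cC$ asserts that for every $t\in\cC$ and every $x$ with $x\preceq t$ one has $x\in\cC$, which is exactly the right-hand side of the stated equivalence. Thus for the forward direction I would assume $\cC$ is a cut, take arbitrary $t\in\cC$ and $x\preceq t$, note $x\in\cT^{\preceq t}\subseteq\bigcup_{s\in\cC}\cT^{\preceq s}=\cC$, and conclude $x\in\cC$. For the backward direction I would assume downward closure, take any $x\in\bigcup_{t\in\cC}\cT^{\preceq t}$, pick a witnessing $t\in\cC$ with $x\preceq t$, and apply the hypothesis to obtain $x\in\cC$; together with the automatic inclusion this yields the defining equality.

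Because both directions reduce to reindexing the union and invoking reflexivity once, I do not expect any genuine obstacle. The only point requiring care is making the ``automatic'' inclusion $\cC\subseteq\bigcup_{t\in\cC}\cT^{\preceq t}$ explicit, since it is the sole place where reflexivity of $\preceq$ enters and is easy to take for granted.
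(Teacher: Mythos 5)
Your proof is correct and follows essentially the same route as the paper's: unfold the union defining a cut, observe that the reverse inclusion is precisely downward closure, and handle the forward inclusion via reflexivity. Your version is in fact slightly more careful than the paper's, which leaves the reflexivity-based inclusion $\cC\subseteq\bigcup_{t\in\cC}\cT^{\preceq t}$ implicit.
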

\begin{proof}
    If $\cC$ is a cut, for $t\in \cC$ and $x\preceq t$, we have $x \in \cT^{\preceq t}$, thus $x \in \cC$.
    If $\forall t \in \cC, \forall x \preceq t, x \in \cC$, then $\forall x \in \cT^{\preceq t}, x \in \cT$. We have $\forall t \in \cC, \cT^{\preceq t} \subseteq \cC$, thus $\cC = \bigcup_{t \in \cC} \cT^{\preceq t}$.
\end{proof}

\begin{definition}[Causality function] \label{def:causal-fun}
    A function $\chi: \mf{C}(\cT) \to \mf{C}(\cT)$ is a causality function if: 
       \begin{align}
  &\forall \cC,\cD \in \fC(\cT), \quad  \chi(\cC \cup \cD) = \chi(\cC) \cup \chi(\cD)\,, \label{eq:causality.homomorphism} \\
  &\forall \cC,\cD \in \fC(\cT), \quad  \cC \subseteq \cD \implies \chi(\cC) \subseteq \chi(\cD)\,, \label{eq:causality.monotone} \\
  &\forall \cC \in \bar{\fC}(\cT) \setminus \{\emptyset\}, \quad  \chi(\cC) \subsetneq \cC\,, \label{eq:causality.decreasing} \\
  &\forall \cC \in \bar{\fC}(\cT), \forall t \in \cC, \exists n \in \mathbb{N}, \quad t \notin \chi^n(\cC)\,, \label{eq:causality.finite}
\end{align}
\end{definition}

Using cuts and causality function, we can formally define the causal box.
\begin{definition}[Causal box]
    A causal box is a system with input wire $X$ and output wire $Y$, defined as a set of CPTP maps:
    \begin{align}
        \Phi \coloneqq \{\Phi^{\cC}: \fT (\cF_X^{\cT}) \to \fT (\cF_Y^{\cC})\}_{\cC \in \bar{\fC}(\cT)},
    \end{align}
    where $\fT (\cH)$ denotes the set of all trace class operators on the Hilbert space $\cH$.
    These maps must be mutually consistent
    \begin{align}
        \forall \cC, \cD \in \bar{\fC}(\cT), \cC \subseteq \cD: \Phi^{\cC} = \tr_{\cD \setminus \cC} \circ \Phi^{\cD},
    \end{align}
    and respect causality, i.e., there exists a causality function $\chi: \fC(\cT) \to \fC(\cT)$ such that 
    \begin{align}
        \Phi^{\cC} = \Phi^{\cC} \circ \tr_{\cT \setminus \chi (\cC)}.
    \end{align}
\end{definition}
Alternatively, the causal box $\Phi$ can be described in Choi-Jamio{\l}kowski (CJ) representation.
To be precise, the Choi operator $R_{\Phi}^{\cC}$ for the CPTP map $\Phi^{\cC}$ is a sesquilinear semi-definite form on $\cF_{Y}^{\cC} \otimes \cF_{X}^{\chi(\cC)}$ such that
\begin{align}
    R_{\Phi}^{\cC} (\psi_Y \otimes \psi_X, \varphi_Y \otimes \varphi_X ) = 
    \bra{\psi_Y} \Phi^{\cC} \left( \ket{\bar{\psi}_X} \bra{\bar{\varphi}_X} \right) \ket{\varphi_Y}.
\end{align}
where $\ket{\overline{\psi}} = \sum_i \ket{i} \braket{\psi}{i}$ for a fixed basis $\{\ket{i}\}_i$.

\subsection{Connection of causal boxes}
Causal boxes can be connected by parallel composition and loop.
\begin{definition}[Parallel composition of causal boxes]
    Let $\Phi = \{\Phi^{\cC}\}_{\cC \in \bar{\fC}(\cT)}$ and $\Psi = \{ \Psi^{\cC} \}_{\cC \in \bar{\fC}(\cT)}$ be two causal boxes. The parallel composition of them, denoted by $\Phi \| \Psi$ is defined as
    \begin{align}
        \Gamma \coloneqq \left\{\Phi^{\cC} \otimes \Psi^{\cC} \right\}.
    \end{align}
\end{definition}

The above definition of parallel composition implicitly requires the causal boxes $\Phi$ and $\Psi$ share the same partially ordered set $\cT$. 
However, a general causal box may not contain the information of the causal structure of some other causal box. Then, a predefined common poset $\cT$ may not exist.
In this case, to make the composition possible, we propose a more general version of parallel composition. Before that, we introduce the union of posets that might have intersection.

A poset $\cT$ can be treated as a tuple of a set and a partial order relation $\cT= (\fS,\preceq)$. 
We define the union of two posets as follows. 

\begin{definition}[Union of partial orders] \label{def:upo}
    Let $\cT_1 = ( \fS_1, \preceq_1)$ and $\cT_2 = (\fS_2, \preceq_2)$ be two compatible posets.
    Let the relation on the set $\fS =\fS_1 \cup \fS_2$ be $\preceq = \preceq_1 \sqcup \preceq_2$. Then
    $x \preceq y $ if and only if there exists a sequence $\{a_i\}_{i=0}^{n} \subseteq \fS_1 \sqcup \fS_2$ such that $a_1=x$, $a_n=y$ and $\forall i \in [n]$  either $a_{i-1} \preceq_{1} a_i$ or $a_{i-1} \preceq_2 a_i$.  
    The union of posets, denoted by $\cT_1 \sqcup \cT_2$, is then defined as 
    \begin{align}
        \cT_1 \sqcup \cT_2 \coloneqq (\fS_1 \cup \fS_2, \preceq_1 \sqcup \preceq_2) =  (\fS, \preceq)  
    \end{align}
\end{definition}

In the above definition, $\cT_1$ and $\cT_2$ are said to be compatible if $\preceq_1 \sqcup \preceq_2$ is still a partial order.

After introducing the union of posets, we are ready to define the parallel composition of two causal boxes with compatible posets.
\begin{definition}[Parallel composition of causal boxes with arbitrary compatible ordered sets] \label{def:pc2}
    $\Phi = \{\Phi^{\cC}\}_{\cC \in \bar{\mf{C}}(\cT_1)}$ and $\Psi = \{\Psi^{\cC}\}_{\cC \in \bar{\mf{C}} (\cT_2)}$ are causal boxes defined on compatible partially ordered set $\cT_1$ and $\cT_2$ respectively, which can have \emph{arbitrary intersection}. $\Phi$ has input wire $X_1$ and output wire $Y_1$. $\Psi$ has input wire $X_2$ and output wire $Y_2$. 
    The parallel composition of $\Phi$ and $\Psi$, denoted by $\Gamma = \Phi \| \Psi$, is defined on the new poset $\cT = \cT_1 \sqcup \cT_2$ as 
    \begin{align}
        \Gamma &\coloneqq \{\Gamma^{\cC} \}_{\cC \in \bar{\mf{C}}(\cT)},  \\
        \Gamma^{\cC} & \coloneqq \Phi^{\cC_1} \otimes   \Psi^{\cC_2} \circ \tr_{\cF^{\cT \setminus \cT_1}_{X_1} \otimes \cF^{\cT \setminus \cT_2}_{X_2}}, \label{eq:comp-emb2}
    \end{align}
    where $\cC_1 = \cT_1 \cap \cC, \cC_2 = \cT_2 \cap \cC$. 
\end{definition}

\Cref{lemma:union-cuts},~\ref{lemma:cas} and~\ref{lemma:asso} suggest that \Cref{def:pc2} is well-defined.

\begin{lemma} \label{lemma:union-cuts}
    For compatible posets $\cT_1$ and $\cT_2$, if $\cC$ is a cut in $\cT_1 \sqcup \cT_2$, then $\cC_1 = \cC \cap \cT_1$ and $\cC_2 = \cC \cap \cT_2$ are cuts in $\cT_1$ and $\cT_2$ respectively.
\end{lemma}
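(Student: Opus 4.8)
The plan is to reduce everything to the downward-closure characterization of cuts established in Lemma~\ref{lem:cut-crt}: a subset $\cC'$ of a poset is a cut precisely when $t \in \cC'$ together with $x \preceq t$ forces $x \in \cC'$. I will verify this condition for $\cC_1 = \cC \cap \cT_1$ inside $(\fS_1, \preceq_1)$; the argument for $\cC_2$ is then identical by symmetry.

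First I would record the one elementary fact about the union order that drives the proof, namely the inclusion $\preceq_1 \;\subseteq\; \preceq$ of relations on $\fS_1$. This is immediate from Definition~\ref{def:upo}: whenever $x \preceq_1 t$, the length-one chain $x,t$ already witnesses $x \preceq t$ in $\cT_1 \sqcup \cT_2$. I only need this single direction. In fact the converse (that $x \preceq t$ with $x,t\in\fS_1$ implies $x \preceq_1 t$) may well fail, since a witnessing chain in the union is permitted to detour through $\fS_2$; but that converse is never used, so it does not cause any trouble.

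Now I would take an arbitrary $t \in \cC_1$ and an arbitrary $x \in \fS_1$ with $x \preceq_1 t$. By the inclusion above we get $x \preceq t$ in the union poset $\cT_1 \sqcup \cT_2$. Since $\cC$ is by hypothesis a cut in $\cT_1 \sqcup \cT_2$, Lemma~\ref{lem:cut-crt} yields $x \in \cC$. Because $x \in \fS_1$ as well, this gives $x \in \cC \cap \fS_1 = \cC_1$. Hence $\cC_1$ is downward closed in $\cT_1$ and is therefore a cut by Lemma~\ref{lem:cut-crt}; the same reasoning applied to $\cT_2$ shows $\cC_2$ is a cut.

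There is essentially no hard step here: the whole content is the observation $\preceq_1 \subseteq \preceq$ followed by a direct appeal to the equivalent condition for cuts. The only point warranting care is to invoke solely the direction $\preceq_1 \subseteq \preceq$ and to avoid tacitly assuming that the union order restricts back to $\preceq_1$ on $\fS_1$, which need not hold.
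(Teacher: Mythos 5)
Your proof is correct and follows essentially the same route as the paper's: both rest on the downward-closure characterization of cuts (Lemma~\ref{lem:cut-crt}) together with the observation that $x \preceq_1 t$ implies $x \preceq t$ in $\cT_1 \sqcup \cT_2$. The only cosmetic difference is that the paper first writes $\cC_1$ as a union $\bigcup_{t\in\cC}(\cT^{\preceq t}\cap\cT_1)$ of cuts, whereas you verify the closure condition for $\cC_1$ directly, which is if anything slightly cleaner.
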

\begin{proof}
    Notice that
    \begin{align}
        \cC_1 = \cC \cap \cT_1 = \bigcup_{t \in \cC} (\cT^{\preceq t} \cap \cT_1).
    \end{align}
    Assume $t' \in \cT^{\preceq t} \cap \cT_1$,
    then by \Cref{def:upo}, $\forall x \in \cT_1, x \preceq_1 t' \Rightarrow x \preceq t \Rightarrow x \in \cT^{\preceq t} \cap \cT_1$. 
    According to \Cref{lem:cut-crt}, $\cT^{\preceq t} \cap \cT_1$ is a cut in $\cT_1$. 
    As the union of cuts, $\cC_1$ is also a cut. The same argument follows for $\cC_2$.
\end{proof}

\begin{lemma}[Existence of causality function] \label{lemma:cas}
    Let $\Gamma = \Phi \| \Psi$ with the poset $\cT = \cT_1 \sqcup \cT_2$ be the parallel composition of causal boxes $\Phi$ and $\Psi$ with compatible posets $\cT_1$ and $\cT_2$. There exists a causality function $\chi$ on $\fC(\cT)$ such that:
    \begin{align}
        \Gamma^{\cC} = \Gamma^{\cC} \circ \tr_{\cF_{X_1 \oplus X_2}^{\chi(\cC)}}
    \end{align}
\end{lemma}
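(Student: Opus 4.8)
The plan is to build the causality function for $\Gamma = \Phi \| \Psi$ explicitly out of the causality functions of the two factors. Since $\Phi$ and $\Psi$ are causal boxes, there exist causality functions $\chi_1$ on $\fC(\cT_1)$ and $\chi_2$ on $\fC(\cT_2)$ witnessing their causality. For a cut $\cC \in \fC(\cT)$ write $\cC_1 = \cC \cap \cT_1$ and $\cC_2 = \cC \cap \cT_2$, which are cuts of $\cT_1$ and $\cT_2$ by Lemma~\ref{lemma:union-cuts}. I would then define
\begin{align}
    \chi(\cC) \coloneqq \bigcup_{s\,\in\,\chi_1(\cC_1)\cup\chi_2(\cC_2)} \cT^{\preceq s},
\end{align}
that is, the downward closure in the merged order $\cT = \cT_1 \sqcup \cT_2$ of the two factor images. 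The downward closure is taken precisely because $\chi_1(\cC_1) \cup \chi_2(\cC_2)$, while downward closed inside each $\cT_i$, need not be downward closed in $\cT$: the cross-poset chains created in Definition~\ref{def:upo} may place new elements below it. By construction $\chi(\cC)$ is a cut of $\cT$.

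With this candidate fixed, the causality condition for $\Gamma$ is almost immediate. By Definition~\ref{def:pc2}, $\Gamma^{\cC}$ first discards the $X_1$-input on $\cT \setminus \cT_1$ and the $X_2$-input on $\cT \setminus \cT_2$, and then applies $\Phi^{\cC_1} \otimes \Psi^{\cC_2}$; the causality of the factors further means $\Phi^{\cC_1}$ reads $X_1$ only on $\chi_1(\cC_1)$ and $\Psi^{\cC_2}$ reads $X_2$ only on $\chi_2(\cC_2)$. Hence $\Gamma^{\cC}$ depends on the combined wire $X_1 \oplus X_2$ only at times in $\chi_1(\cC_1) \cup \chi_2(\cC_2) \subseteq \chi(\cC)$, so discarding the complementary input leaves $\Gamma^{\cC}$ invariant, which is exactly the required identity. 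The homomorphism and monotonicity axioms \eqref{eq:causality.homomorphism}--\eqref{eq:causality.monotone} are then routine: both the map $\cC \mapsto \cC \cap \cT_i$ and the downward closure commute with unions and preserve inclusions, so they transport the corresponding properties of $\chi_1, \chi_2$ to $\chi$.

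The containment $\chi(\cC) \subseteq \cC$ is also clean, since $\chi_i(\cC_i) \subseteq \cC_i \subseteq \cC$ and, $\cC$ being a cut, the downward closure of any subset of $\cC$ stays inside $\cC$. The real work is the finiteness axiom \eqref{eq:causality.finite}, and this is where I expect the main obstacle. Because of the overlap $\cT_1 \cap \cT_2$, a single application of $\chi$ can, through the downward closure, re-introduce elements that one factor map would have deleted, so termination cannot be read off from $\chi_1, \chi_2$ acting separately. My plan is a well-foundedness argument: fix a bounded cut $\cC$ and $t \in \cC$, first check that $\cC_1, \cC_2$ remain bounded in their respective posets so that \eqref{eq:causality.finite} for $\chi_1, \chi_2$ is available, and then show that an element surviving every iterate $\chi^n(\cC)$ would force an infinite chain of distinct witnesses, which by the alternating structure of $\sqcup$ must induce a non-terminating iteration of $\chi_1$ or of $\chi_2$, contradicting \eqref{eq:causality.finite} for that factor; antisymmetry of $\preceq$, guaranteed by compatibility of $\cT_1, \cT_2$, prevents the chain from cycling. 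Finally, the strict-decrease axiom \eqref{eq:causality.decreasing} comes for free once finiteness and containment hold: if $\chi(\cC) = \cC$ for some nonempty bounded $\cC$, then $\chi^n(\cC) = \cC$ for all $n$, contradicting \eqref{eq:causality.finite}. Assembling these verifications yields the desired causality function.
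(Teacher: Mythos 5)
Your construction coincides with the paper's: the paper's entire proof is to set $\chi(\cC) = \chi_1(\cC \cap \cT_1) \cup \chi_2(\cC \cap \cT_2)$ and assert that the axioms of Definition~\ref{def:causal-fun} are ``easy to verify.'' Where you go beyond that one-liner, your downward closure is not cosmetic but a necessary repair. When $\cT_1$ and $\cT_2$ overlap, the raw union $\chi_1(\cC_1)\cup\chi_2(\cC_2)$ need not be a cut of $\cT_1\sqcup\cT_2$ at all: take $\fS_1=\{w,u\}$ with $w\prec_1 u$, $\fS_2=\{v,w\}$ with $v\prec_2 w$, a causality function with $\chi_1(\{w\})=\emptyset$ and $\chi_1(\{w,u\})=\{w\}$, and $\chi_2\equiv\emptyset$; then for $\cC=\{v,w,u\}$ the union is $\{w\}$, which is not downward closed in $\cT$ since $v\prec w$. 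So without your closure the paper's $\chi$ is not even a map into $\fC(\cT)$, and your version of the construction (together with the observation that enlarging $\chi(\cC)$ only weakens, hence preserves, the causality identity and the containment $\chi(\cC)\subseteq\cC$) is strictly more careful than the paper's.

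The one load-bearing step that remains open in your proposal is the finiteness axiom \eqref{eq:causality.finite}, which you correctly single out but only sketch. Tracing back why $t$ survives in $\chi^{n}(\cC)$ yields, via the homomorphism property, a chain $t \prec s_1 \prec \cdots \prec s_n$ with $s_j \in \chi^{n-j}(\cC)$, where each link is certified either by $\chi_1$, by $\chi_2$, or merely by a downward-closure step through a cross-poset chain of Definition~\ref{def:upo}. To contradict \eqref{eq:causality.finite} for a single factor you must produce one bounded cut $\cD$ of one $\cT_i$ and one element lying in $\chi_i^{m}(\cD)$ for every $m$; but $\chi^{n}(\cC)\cap\cT_i$ is in general \emph{not} contained in $\chi_i^{n}(\cC_i)$, because elements deleted by $\chi_i$ can be re-injected by the other factor or by the closure. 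The ``alternating structure'' you invoke does not by itself yield the required nested iteration of a single $\chi_i$ (for infinite $\cT$ the chain of witnesses need not even revisit either factor in a nested way), so some pigeonhole or well-foundedness hypothesis on $\cT$ of the kind imposed in~\cite{portmann2017causal} still has to be supplied. Since you derive \eqref{eq:causality.decreasing} from finiteness, this is the one genuine gap; to be fair, the paper's proof does not address it either.
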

\begin{proof}
    Assume $\chi_1$ and $\chi_2$ are causality functions for $\Phi$ and $\Psi$, respectively.
    Construct the function $\chi: \fC(\cT) \to \fC(\cT)$ as $\chi(\cC) = \chi_1(\cC \cap \cT_1) \cup \chi_2 (\cC \cap \cT_2)$, then it is easy to verify the conditions in \Cref{def:causal-fun}.
\end{proof}

\begin{lemma}[Associativity] \label{lemma:asso}
    For boxes $\Phi,\Psi,\Theta$ with compatible posets $\cT_{\Phi}, \cT_{\Psi}, \cT_{\Theta}$, we have
    \begin{align}
        (\Phi \| \Psi) \| \Theta = \Phi \| (\Psi \| \Theta).
    \end{align}
\end{lemma}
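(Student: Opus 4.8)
The plan is to prove associativity at the two levels on which a causal box is defined: the underlying ordered set, and the family of CPTP maps indexed by bounded cuts.

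First I would check that $(\cT_{\Phi} \sqcup \cT_{\Psi}) \sqcup \cT_{\Theta} = \cT_{\Phi} \sqcup (\cT_{\Psi} \sqcup \cT_{\Theta})$. The underlying sets agree because ordinary set union is associative, so both equal $\fS_{\Phi} \cup \fS_{\Psi} \cup \fS_{\Theta}$. For the order relation, note that Definition~\ref{def:upo} defines $\preceq$ on a union as exactly the transitive closure $\mathrm{tc}$ of the union of the component relations: $x \preceq y$ iff there is a finite chain from $x$ to $y$ each of whose steps lies in one component order. Since $\mathrm{tc}$ satisfies $\mathrm{tc}\!\left(\mathrm{tc}(R) \cup S\right) = \mathrm{tc}(R \cup S)$, both bracketings collapse to $\mathrm{tc}(\preceq_{\Phi} \cup \preceq_{\Psi} \cup \preceq_{\Theta})$, so the two posets coincide; I will call the common poset $\cT$. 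Along the way I would observe that the compatibility hypothesis is precisely the statement that this transitive closure is antisymmetric, and that antisymmetry of the full union is manifestly insensitive to reassociation, so one bracketing is a valid poset exactly when the other is.

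Next, I would fix a bounded cut $\cC \in \bar{\fC}(\cT)$ and unfold both sides via Definition~\ref{def:pc2}. Write $\cC_{\Phi} = \cC \cap \cT_{\Phi}$ and likewise $\cC_{\Psi}, \cC_{\Theta}$; Lemma~\ref{lemma:union-cuts} guarantees these are cuts in the respective posets. On the left the inner box $\Gamma = \Phi \| \Psi$ lives on $\cT_{\Phi} \sqcup \cT_{\Psi}$ and is evaluated at $\cC \cap (\cT_{\Phi} \sqcup \cT_{\Psi})$; since $\cT_{\Phi} \subseteq \cT_{\Phi} \sqcup \cT_{\Psi}$, intersecting once more with $\cT_{\Phi}$ returns $\cC_{\Phi}$, so the three component maps appearing are exactly $\Phi^{\cC_{\Phi}}$, $\Psi^{\cC_{\Psi}}$, $\Theta^{\cC_{\Theta}}$, identical to the ones arising on the right. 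The remaining content is to match the partial traces over the input wires. Here I would use the splitting property \eqref{eq:split} together with the nesting $\cT_{\Phi} \subseteq \cT_{\Phi} \sqcup \cT_{\Psi} \subseteq \cT$: the outer trace removes $\cF^{\cT \setminus (\cT_{\Phi} \sqcup \cT_{\Psi})}_{X_{\Phi}}$ from the $\Phi$-input wire and the inner one removes $\cF^{(\cT_{\Phi} \sqcup \cT_{\Psi}) \setminus \cT_{\Phi}}_{X_{\Phi}}$; since those two index sets partition $\cT \setminus \cT_{\Phi}$, their composition is $\tr_{\cF^{\cT \setminus \cT_{\Phi}}_{X_{\Phi}}}$, and the analogous statements hold (one step trivially) for the $\Psi$- and $\Theta$-wires. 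Thus the left-hand map equals $\Phi^{\cC_{\Phi}} \otimes \Psi^{\cC_{\Psi}} \otimes \Theta^{\cC_{\Theta}}$ precomposed with the trace over $\cF^{\cT \setminus \cT_{\Phi}}_{X_{\Phi}} \otimes \cF^{\cT \setminus \cT_{\Psi}}_{X_{\Psi}} \otimes \cF^{\cT \setminus \cT_{\Theta}}_{X_{\Theta}}$, a fully symmetric expression; unfolding the right-hand side in the same way yields the identical map.

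Since the two families of maps agree on every bounded cut and are built over the same poset $\cT$, they specify the same causal box, establishing the identity. I expect the principal obstacle to be the bookkeeping of the input-wire Fock spaces under nested partial traces, namely verifying that tracing out over $\cT \setminus (\cT_{\Phi} \sqcup \cT_{\Psi})$ and then over $(\cT_{\Phi} \sqcup \cT_{\Psi}) \setminus \cT_{\Phi}$ really does leave exactly the input supported on $\cT_{\Phi}$, and confirming that the mutual-consistency and causality conditions inherited through Lemmas~\ref{lemma:union-cuts} and~\ref{lemma:cas} transport correctly under reassociation, so that the reassembled object is a legitimate causal box rather than merely an agreeing collection of maps.
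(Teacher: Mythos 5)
Your proposal is correct and matches the paper's approach: the paper's proof of Lemma~\ref{lemma:asso} simply states that the result follows directly from Definition~\ref{def:pc2}, and your argument is exactly the detailed unfolding of that definition (associativity of the poset union as a transitive closure, agreement of the component maps $\Phi^{\cC_\Phi}\otimes\Psi^{\cC_\Psi}\otimes\Theta^{\cC_\Theta}$ on each bounded cut, and the composition of the nested partial traces into the symmetric trace over $\cT\setminus\cT_\Phi$, $\cT\setminus\cT_\Psi$, $\cT\setminus\cT_\Theta$). Your version is in fact more careful than the paper's, which omits all of these checks.
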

\begin{proof}
    This lemma follows directly from \Cref{def:pc2}.
\end{proof}

\begin{definition}[Loop]
    Let 
    $\Phi = \{ \Phi^{\cC} : 
    \fT (\cF_{AB}^{\cT}) \to \fT(\cF_{CD}^{\cC}) \}_{\cC \in \bar{\fC}(\cT)}$ 
    be a causal box such that $\dim \mathcal{H}_B = \dim \mathcal{H}_C$ is its CJ representation.
    Let $\{\ket{k_{C}}\}$ and $\{ \ket{l_C}\}$ be any orthonormal bases of $\cF_{C}^{\cC}$ and 
    $\{\ket{k_B}\}$ and $\{\ket{l_B}\}$ be corresponding bases of $\cF_{B}^{\cC}$.
    The new box by looping from wire $B$ to wire $C$, denoted by $\Psi = \Phi^{C \hookrightarrow B}$, is a set of maps $\{\Psi^{\cC} : \fT(\cF_{A}^{\cT}) \to \fT( \cF_{D}^{\cC})\}$ with CJ representation
    \begin{align}
        &R_{\Psi}^{\cC} ( \psi_D \otimes \psi_A , \varphi_D \otimes \varphi_A)\\
        &= \sum_{k,l} R_{\Phi}^{\cC}(k_{C} \otimes \psi_D \otimes \psi_A \otimes \bar{k}_B,
        l_C \otimes \varphi_D \otimes \varphi_A \otimes \bar{l}_B),
    \end{align}
    where the conjugate $\ket{\overline{\psi}} = \sum_i \ket{i_B} \braket{\psi}{i_B}$ is taken with respect to the base $\ket{i_B}$ of $\cF_B^{\cT}$ on which the CJ representation is defined. 
\end{definition}

Note that, to define a loop, it is necessary to specify two bases for each of the connected wire.  With different bases for the wires, the result of the loop can be different.

\subsection{Delay box} \label{app:insecure}

\begin{definition}[Delay function] \label{def:df}
    Let $\cT$ be a poset and $\cG = \{t \in \cT \mid \exists t' \in \cT, t \prec t'\}$ is a subset of $\cT$. A function  $f: \cG \to \cT$ is a delay function on $\cT$ if for any $t \in \cG$, $ t \prec f(t)$. 
\end{definition}

\begin{proposition}
    If $\cT$ is a finite set, a delay function $f$ on $\cT$ induces a causality function 
    \begin{align}
        \chi_f(\cC) = \bigcup_{t:f(t)\in \cC} \cT^{\le t}.
    \end{align}
\end{proposition}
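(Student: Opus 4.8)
The plan is to check directly that $\chi_f$ satisfies each of the four defining properties of a causality function in Definition~\ref{def:causal-fun}, after first confirming that $\chi_f$ maps cuts to cuts. For well-definedness I would note that each $\cT^{\preceq t}$ is down-closed and hence a cut by Lemma~\ref{lem:cut-crt}, and that an arbitrary union of cuts is again a cut; since $\chi_f(\cC)$ is by definition such a union (indexed by those $t$ in the domain of $f$ with $f(t)\in\cC$), it lies in $\fC(\cT)$. The homomorphism property~\eqref{eq:causality.homomorphism} is then immediate from the observation that $f(t)\in\cC\cup\cD$ if and only if $f(t)\in\cC$ or $f(t)\in\cD$, so the indexing set $\{t:f(t)\in\cC\cup\cD\}$ splits as $\{t:f(t)\in\cC\}\cup\{t:f(t)\in\cD\}$; distributing the union over $\cT^{\preceq t}$ yields $\chi_f(\cC\cup\cD)=\chi_f(\cC)\cup\chi_f(\cD)$. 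Monotonicity~\eqref{eq:causality.monotone} follows as a corollary, since $\cC\subseteq\cD$ gives $\cC\cup\cD=\cD$ and hence $\chi_f(\cD)=\chi_f(\cC)\cup\chi_f(\cD)\supseteq\chi_f(\cC)$.

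The core of the argument is the strict decrease~\eqref{eq:causality.decreasing}. To show the inclusion $\chi_f(\cC)\subseteq\cC$, I would take any $x\in\chi_f(\cC)$, so $x\preceq t$ for some $t$ with $f(t)\in\cC$; the delay condition $t\prec f(t)$ gives $x\prec f(t)$, and since $\cC$ is a cut containing $f(t)$ it is down-closed, forcing $x\in\cC$. For strictness I would exploit finiteness of $\cT$: a nonempty bounded cut $\cC$ is a nonempty finite poset and hence admits a maximal element $m$. I claim $m\notin\chi_f(\cC)$. Indeed, if $m\in\chi_f(\cC)$ then $m\preceq t$ for some $t$ with $f(t)\in\cC$; down-closure of $\cC$ together with $t\prec f(t)$ gives $t\in\cC$, so $m\preceq t\in\cC$ and maximality of $m$ force $m=t$, but then $m=t\prec f(t)\in\cC$ contradicts the maximality of $m$. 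Thus $m\in\cC\setminus\chi_f(\cC)$ and $\chi_f(\cC)\subsetneq\cC$.

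Finally, for the termination condition~\eqref{eq:causality.finite} I would iterate the strict decrease. Since $\chi_f(\cC)\subseteq\cC\subseteq\cT^{\preceq t^\ast}$ whenever $\cC$ is bounded, every iterate $\chi_f^k(\cC)$ is again a bounded cut, so~\eqref{eq:causality.decreasing} applies at each step and $\lvert\chi_f^{k+1}(\cC)\rvert<\lvert\chi_f^{k}(\cC)\rvert$ as long as $\chi_f^{k}(\cC)\neq\emptyset$. Because $\cC$ is finite, after at most $N=\lvert\cC\rvert$ steps the iterate becomes empty, so $t\notin\chi_f^{N}(\cC)=\emptyset$ for every $t\in\cC$; taking $n=N$ establishes~\eqref{eq:causality.finite}. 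The only genuinely delicate point is the strictness in~\eqref{eq:causality.decreasing}: the containment and the algebraic properties are routine, but excluding a maximal element from the image is where both the delay property $t\prec f(t)$ and the finiteness assumption (which guarantees the existence of maximal elements) are essential, and this is the step I would write out most carefully.
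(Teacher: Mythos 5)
Your proof is correct, and it takes the only natural route: the paper itself offers no argument beyond the remark that ``it is easy to verify $\chi_f$ is a causality function by definition,'' so your direct verification of the four properties is exactly the intended (but omitted) proof. Your identification of the strict-decrease property~\eqref{eq:causality.decreasing} as the one place where both the delay condition $t \prec f(t)$ and the finiteness of $\cT$ (via existence of a maximal element of a nonempty cut) are genuinely needed is accurate and is the detail most worth recording.
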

It is easy to verify $\chi_f$ is a causality function by definition.

\begin{definition}[Delay box] \label{def:db}
    A delay box derived from a delay function $f:\cG \to \cT$ with input wire $X$ and output wire $X'$ is denoted by $\Theta_f = \{\Theta_f^{\cC} : \fT (\cF_X^{\chi_f(\cC)}) \to \fT (\cF_X^{\cC}) \}_{\cC \in \bar{\fC}(\cT)}$. $\cH_X$ and $\cH_{X'}$ are of the same dimension. $\Theta_f^{\cC}$ is defined as the following isometry:
    \begin{align}
        U_f^{\cC} = \bigoplus_{n=0}^{\infty} \left( \sum_{i\in I} \ket{i}_{X'}\bra{i}_X \otimes \sum_{ t \in \chi_f (\cC) } \ket{f(t)}\bra{t} \right)^{\otimes n},
    \end{align}
    for some bases $\{\ket{i}_X\}_{i\in I}$ and $\{\ket{i}_{X'}\}_{i\in I}$ of spaces $\cH_X$ and $\cH_{X'}$.
\end{definition}

$\theta_1$ is a delay box from input wire $C'$ to output wire $C''$ and is derived from $f_1$ defined in \Cref{eq:f1}.
Then, $U_{f_1}^{\cC}$ is an isometry between $\cF_{C'}^{\chi_{f_1}(\cC)}$ and $\cF_{C''}^{\cC \cap \cT_{\pi'}}$.

\section{Proofs}
\subsection{Proof of \Cref{prop:insecure}}
\label{prf:insecure}

\begin{proposition*}
The box $(\pi_1\| \pi_2') \theta$ has the same output distribution as $\pi_1 \pi_2$.
\end{proposition*}

\Cref{prop:insecure} implies that two parties running $\pi_1$ and $\pi_2'$ with an adversary running $\theta$ is equivalent to two parties running $\pi_1$ and $\pi_2'$ between themselves. 
To achieve this, the adversary performs the man-in-the-middle attack, i.e., uses delay boxes to pass messages from $\pi_1$ to $\pi_2'$ and pass responses from $\pi_2'$ to $\pi_1$ subsequently.
To be precise, notice that the boxes $\pi_1$ and $\pi_2'$ work on two independent posets, i.e., the top part and bottom part in \Cref{fig:proof-b}, respectively. 
The delay boxes $\theta_1$ and $\theta_2$ are the bridges that connect the two corresponding wire spaces.
Then the key point is to find proper wire space isomorphisms for the wire connections such that $\pi_2'$ together with the delay boxes is a perfect emulation of $\pi_2$.

\begin{figure}[h]
    \centering
     \subcaptionbox[protocol mu]{\label{fig:proof-a}}[\textwidth][c]{
    \tikzset{every picture/.style={line width=0.75pt}} 

\begin{tikzpicture}[x=0.75pt,y=0.75pt,yscale=-1,xscale=1]

\draw   (127.88,56.82) -- (177.88,56.82) -- (177.88,106.82) -- (127.88,106.82) -- cycle ;
\draw    (76.67,82.23) -- (128.38,82.23) ;
\draw [shift={(74.67,82.23)}, rotate = 0] [color={rgb, 255:red, 0; green, 0; blue, 0 }  ][line width=0.75]    (10.93,-3.29) .. controls (6.95,-1.4) and (3.31,-0.3) .. (0,0) .. controls (3.31,0.3) and (6.95,1.4) .. (10.93,3.29)   ;
\draw    (209.13,71.9) -- (177.55,71.9) ;
\draw [shift={(211.13,71.9)}, rotate = 180] [color={rgb, 255:red, 0; green, 0; blue, 0 }  ][line width=0.75]    (10.93,-3.29) .. controls (6.95,-1.4) and (3.31,-0.3) .. (0,0) .. controls (3.31,0.3) and (6.95,1.4) .. (10.93,3.29)   ;
\draw    (179.86,93.23) -- (239.55,93.23) ;
\draw [shift={(177.86,93.23)}, rotate = 0] [color={rgb, 255:red, 0; green, 0; blue, 0 }  ][line width=0.75]    (10.93,-3.29) .. controls (6.95,-1.4) and (3.31,-0.3) .. (0,0) .. controls (3.31,0.3) and (6.95,1.4) .. (10.93,3.29)   ;
\draw   (127.88,136.15) -- (177.88,136.15) -- (177.88,186.16) -- (127.88,186.16) -- cycle ;
\draw    (76.67,161.57) -- (128.38,161.57) ;
\draw [shift={(74.67,161.57)}, rotate = 0] [color={rgb, 255:red, 0; green, 0; blue, 0 }  ][line width=0.75]    (10.93,-3.29) .. controls (6.95,-1.4) and (3.31,-0.3) .. (0,0) .. controls (3.31,0.3) and (6.95,1.4) .. (10.93,3.29)   ;
\draw    (239.13,172.57) -- (179.55,172.57) ;
\draw [shift={(177.55,172.57)}, rotate = 360] [color={rgb, 255:red, 0; green, 0; blue, 0 }  ][line width=0.75]    (10.93,-3.29) .. controls (6.95,-1.4) and (3.31,-0.3) .. (0,0) .. controls (3.31,0.3) and (6.95,1.4) .. (10.93,3.29)   ;
\draw    (178.53,151.9) -- (207.51,151.9) ;
\draw [shift={(209.51,151.9)}, rotate = 180] [color={rgb, 255:red, 0; green, 0; blue, 0 }  ][line width=0.75]    (10.93,-3.29) .. controls (6.95,-1.4) and (3.31,-0.3) .. (0,0) .. controls (3.31,0.3) and (6.95,1.4) .. (10.93,3.29)   ;
\draw    (342.91,71.9) -- (211.13,71.9) ;
\draw    (241.55,93.23) -- (294.91,93.23) ;
\draw [shift={(239.55,93.23)}, rotate = 0] [color={rgb, 255:red, 0; green, 0; blue, 0 }  ][line width=0.75]    (10.93,-3.29) .. controls (6.95,-1.4) and (3.31,-0.3) .. (0,0) .. controls (3.31,0.3) and (6.95,1.4) .. (10.93,3.29)   ;
\draw    (343.58,172.57) -- (241.13,172.57) ;
\draw [shift={(239.13,172.57)}, rotate = 360] [color={rgb, 255:red, 0; green, 0; blue, 0 }  ][line width=0.75]    (10.93,-3.29) .. controls (6.95,-1.4) and (3.31,-0.3) .. (0,0) .. controls (3.31,0.3) and (6.95,1.4) .. (10.93,3.29)   ;
\draw    (209.51,151.9) -- (297.13,151.9) ;
\draw   (281.52,108.18) -- (311.52,108.18) -- (311.52,138.18) -- (281.52,138.18) -- cycle ;
\draw   (327.52,108.18) -- (357.52,108.18) -- (357.52,138.18) -- (327.52,138.18) -- cycle ;
\draw    (297.13,151.9) -- (297.13,140.07) ;
\draw [shift={(297.13,138.07)}, rotate = 90] [color={rgb, 255:red, 0; green, 0; blue, 0 }  ][line width=0.75]    (10.93,-3.29) .. controls (6.95,-1.4) and (3.31,-0.3) .. (0,0) .. controls (3.31,0.3) and (6.95,1.4) .. (10.93,3.29)   ;
\draw    (343.09,172.57) -- (343.09,138.07) ;
\draw    (342.91,106.07) -- (342.91,71.9) ;
\draw [shift={(342.91,108.07)}, rotate = 270] [color={rgb, 255:red, 0; green, 0; blue, 0 }  ][line width=0.75]    (10.93,-3.29) .. controls (6.95,-1.4) and (3.31,-0.3) .. (0,0) .. controls (3.31,0.3) and (6.95,1.4) .. (10.93,3.29)   ;
\draw    (294.91,107.06) -- (294.91,93.23) ;
\draw  [color={rgb, 255:red, 65; green, 117; blue, 5 }  ,draw opacity=1 ][dash pattern={on 4.5pt off 4.5pt}] (274.6,64.29) -- (371.33,64.29) -- (371.33,181.89) -- (274.6,181.89) -- cycle ;

\draw (152.88,81.82) node  [font=\normalsize] [align=left] {$\displaystyle \pi _{1}$};
\draw (152.88,161.15) node  [font=\normalsize] [align=left] {$\displaystyle \pi '_{2}$};
\draw (101.55,70.68) node  [font=\small]  {$(\textcolor[rgb]{0,0,1}{E ,t_{n}})$};
\draw (101.43,149.35) node  [font=\small]  {$(\textcolor[rgb]{1,0,0}{F ,t_{n} '})$};
\draw (342.52,123.18) node  [font=\small] [align=left] {$\displaystyle \theta _{1}$};
\draw (296.52,123.18) node  [font=\small] [align=left] {$\displaystyle \theta _{2}$};
\draw (205.55,82.68) node  [font=\small,color={rgb, 255:red, 255; green, 6; blue, 2 }  ,opacity=1 ]  {$A$};
\draw (206.89,59.35) node  [font=\small,color={rgb, 255:red, 0; green, 22; blue, 255 }  ,opacity=1 ]  {$C$};
\draw (205.55,163.35) node  [font=\small,color={rgb, 255:red, 0; green, 22; blue, 255 }  ,opacity=1 ]  {$B$};
\draw (206.89,138.02) node  [font=\small,color={rgb, 255:red, 255; green, 6; blue, 2 }  ,opacity=1 ]  {$D$};
\draw (356.95,90.55) node  [font=\small,color={rgb, 255:red, 0; green, 22; blue, 255 }  ,opacity=1 ]  {$C'$};
\draw (358.69,148.28) node  [font=\small,color={rgb, 255:red, 0; green, 22; blue, 255 }  ,opacity=1 ]  {$C''$};
\draw (309.75,147.95) node  [font=\small,color={rgb, 255:red, 255; green, 6; blue, 2 }  ,opacity=1 ]  {$D'$};
\draw (307.09,91.48) node  [font=\small,color={rgb, 255:red, 255; green, 6; blue, 2 }  ,opacity=1 ]  {$D''$};
\draw (323.89,53.82) node    {$\theta $};

\end{tikzpicture}}
    \subcaptionbox[order of mu]{\label{fig:proof-b}}[\textwidth][c]{
    \tikzset{every picture/.style={line width=0.75pt}} 

\begin{tikzpicture}[x=0.75pt,y=0.75pt,yscale=-1,xscale=1]

\draw    (162.83,117.99) -- (162.83,79.15) ;
\draw [shift={(162.83,119.99)}, rotate = 270] [color={rgb, 255:red, 0; green, 0; blue, 0 }  ][line width=0.75]    (10.93,-3.29) .. controls (6.95,-1.4) and (3.31,-0.3) .. (0,0) .. controls (3.31,0.3) and (6.95,1.4) .. (10.93,3.29)   ;
\draw    (245.57,117.99) -- (245.57,79.15) ;
\draw [shift={(245.57,119.99)}, rotate = 270] [color={rgb, 255:red, 0; green, 0; blue, 0 }  ][line width=0.75]    (10.93,-3.29) .. controls (6.95,-1.4) and (3.31,-0.3) .. (0,0) .. controls (3.31,0.3) and (6.95,1.4) .. (10.93,3.29)   ;
\draw    (345.26,119.99) -- (345.26,81.15) ;
\draw [shift={(345.26,79.15)}, rotate = 90] [color={rgb, 255:red, 0; green, 0; blue, 0 }  ][line width=0.75]    (10.93,-3.29) .. controls (6.95,-1.4) and (3.31,-0.3) .. (0,0) .. controls (3.31,0.3) and (6.95,1.4) .. (10.93,3.29)   ;
\draw    (204.2,119.99) -- (204.2,81.15) ;
\draw [shift={(204.2,79.15)}, rotate = 90] [color={rgb, 255:red, 0; green, 0; blue, 0 }  ][line width=0.75]    (10.93,-3.29) .. controls (6.95,-1.4) and (3.31,-0.3) .. (0,0) .. controls (3.31,0.3) and (6.95,1.4) .. (10.93,3.29)   ;
\draw    (286.93,119.99) -- (286.93,81.15) ;
\draw [shift={(286.93,79.15)}, rotate = 90] [color={rgb, 255:red, 0; green, 0; blue, 0 }  ][line width=0.75]    (10.93,-3.29) .. controls (6.95,-1.4) and (3.31,-0.3) .. (0,0) .. controls (3.31,0.3) and (6.95,1.4) .. (10.93,3.29)   ;
\draw    (191.33,62.35) -- (175.63,62.35) ;
\draw [shift={(193.33,62.35)}, rotate = 180] [color={rgb, 255:red, 0; green, 0; blue, 0 }  ][line width=0.75]    (10.93,-3.29) .. controls (6.95,-1.4) and (3.31,-0.3) .. (0,0) .. controls (3.31,0.3) and (6.95,1.4) .. (10.93,3.29)   ;
\draw    (190.53,136.75) -- (174.83,136.75) ;
\draw [shift={(192.53,136.75)}, rotate = 180] [color={rgb, 255:red, 0; green, 0; blue, 0 }  ][line width=0.75]    (10.93,-3.29) .. controls (6.95,-1.4) and (3.31,-0.3) .. (0,0) .. controls (3.31,0.3) and (6.95,1.4) .. (10.93,3.29)   ;
\draw    (233.53,62.55) -- (217.83,62.55) ;
\draw [shift={(235.53,62.55)}, rotate = 180] [color={rgb, 255:red, 0; green, 0; blue, 0 }  ][line width=0.75]    (10.93,-3.29) .. controls (6.95,-1.4) and (3.31,-0.3) .. (0,0) .. controls (3.31,0.3) and (6.95,1.4) .. (10.93,3.29)   ;
\draw    (232.73,136.95) -- (217.03,136.95) ;
\draw [shift={(234.73,136.95)}, rotate = 180] [color={rgb, 255:red, 0; green, 0; blue, 0 }  ][line width=0.75]    (10.93,-3.29) .. controls (6.95,-1.4) and (3.31,-0.3) .. (0,0) .. controls (3.31,0.3) and (6.95,1.4) .. (10.93,3.29)   ;
\draw    (275.93,61.75) -- (260.23,61.75) ;
\draw [shift={(277.93,61.75)}, rotate = 180] [color={rgb, 255:red, 0; green, 0; blue, 0 }  ][line width=0.75]    (10.93,-3.29) .. controls (6.95,-1.4) and (3.31,-0.3) .. (0,0) .. controls (3.31,0.3) and (6.95,1.4) .. (10.93,3.29)   ;
\draw    (275.13,136.15) -- (259.43,136.15) ;
\draw [shift={(277.13,136.15)}, rotate = 180] [color={rgb, 255:red, 0; green, 0; blue, 0 }  ][line width=0.75]    (10.93,-3.29) .. controls (6.95,-1.4) and (3.31,-0.3) .. (0,0) .. controls (3.31,0.3) and (6.95,1.4) .. (10.93,3.29)   ;

\draw (164.12,61) node  [color={rgb, 255:red, 0; green, 22; blue, 255 }  ,opacity=1 ]  {$t_{0}$};
\draw (316.08,61) node   [align=left] {...};
\draw (162.52,136) node  [color={rgb, 255:red, 0; green, 22; blue, 255 }  ,opacity=1 ]  {$t'_{0}$};
\draw (314.48,136) node   [align=left] {...};
\draw (246.6,61) node  [color={rgb, 255:red, 0; green, 22; blue, 255 }  ,opacity=1 ]  {$t_{2}$};
\draw (245,136) node  [color={rgb, 255:red, 0; green, 22; blue, 255 }  ,opacity=1 ]  {$t'_{2}$};
\draw (350.83,61) node  [color={rgb, 255:red, 255; green, 14; blue, 2 }  ,opacity=1 ]  {$t_{n}$};
\draw (349.23,136) node  [color={rgb, 255:red, 255; green, 14; blue, 2 }  ,opacity=1 ]  {$t'_{n}$};
\draw (205.36,61) node  [color={rgb, 255:red, 255; green, 14; blue, 2 }  ,opacity=1 ]  {$t_{1}$};
\draw (203.76,136) node  [color={rgb, 255:red, 255; green, 14; blue, 2 }  ,opacity=1 ]  {$t'_{1}$};
\draw (287.84,61) node  [color={rgb, 255:red, 255; green, 14; blue, 2 }  ,opacity=1 ]  {$t_{3}$};
\draw (286.24,136) node  [color={rgb, 255:red, 255; green, 14; blue, 2 }  ,opacity=1 ]  {$t'_{3}$};
\draw (76.07,86.39) node [anchor=north west][inner sep=0.75pt]    {$\mathcal{T}_{( \pi _{1} \| \pi '_{2}) \theta } :$};

\end{tikzpicture}}
    \caption{
    (a) This figure shows the connected box $(\pi_1 \| \pi_2') \theta$. The wire with a blue notation indicates that the messages on this wire only stay in sets $\cT_{\pi}^{\text{even}}$ and $\cT_{\pi'}^{\text{even}}$, while the red notation corresponds to the sets $\cT_{\pi}^{\text{odd}}$ and $\cT_{\pi'}^{\text{odd}}$.  
    (b) This figure explains the partition of the totally ordered set $\cT_{(\pi_1 \| \pi_2') \theta}$ with a non-standard Hasse diagram. 
    Specifically,
    The blue elements at the top form the set  $\cT_{\pi}^{\text{even}}$;
    The blue elements at the bottom form the set $\cT_{\pi'}^{\text{even}}$;
    The red elements at the top form the set $\cT_{\pi}^{\text{odd}}$;
    The red elements at the bottom form the set $\cT_{\pi'}^{\text{odd}}$. 
    } 
    \label{fig:proof}
\end{figure}

Define the sets
\begin{align}
    \cT_{\pi}^{\text{even}} &\coloneqq \{t_i \in \cT_{\pi} \mid i \text{ is even}\},& \cT_{\pi}^{\text{odd}} & \coloneqq \{t_i \in \cT_{\pi} \mid i \text{ is odd}\}, \\
    \cT_{\pi'}^{\text{even}} & \coloneqq \{t_i \in \cT_{\pi'} \mid i \text{ is even}\},& \cT_{\pi'}^{\text{odd}} & \coloneqq \{t_i \in \cT_{\pi'} \mid i \text{ is odd}\}.
\end{align}
Because the protocol $\pi$ is restricted to the comb model, the wires $B,C,C',C'',E$ are only effective on the set $\cT_{\pi}^{\text{even}}$ and $\cT_{\pi'}^{\text{even}}$.
Similarly, the wires $A,D,D',D'',F$ are only effective on the set $\cT_{\pi}^{\text{odd}}$ and $\cT_{\pi'}^{\text{odd}}$.
By saying wire $X$ is only effective on $\cT'$, we mean the state on this wire is always in the space $\cF_{X}^{\cT'}$.

For convenience, where the cut $\cC$ is obvious, denote for $X=B,C,C',C'',E$
\begin{align}
    \hat{X} &= \cF_{X}^{\cC \cap \cT_{\pi}^{\text{even}}}, \\
    \tilde{X} &= \cF_{X}^{\cC \cap \cT_{\pi'}^{\text{even}}},
\end{align}
where $\cT_{\pi}$ is the poset for $\pi=(\pi_1,\pi_2)$ and $\cT_{\pi'}$ is the partial order for $\pi' = (\pi_2', \pi_1')$.
Denote the wire space of $X = A,D,D',D'',F$ by
\begin{align}
    \hat{X} &= \cF_{X}^{\cC \cap \cT_{\pi}^{\text{odd}}}, \\
    \tilde{X} &= \cF_{X}^{\cC \cap \cT_{\pi'}^{\text{odd}}}.
\end{align}
For any wire $X$, we call the isomorphism between $\hat{X}$ and $\tilde{X}$, obtained by substituting $t_i$ with $t_i'$, the natural isomorphism.

For $\cC \in \fC(\cT_{\pi})$, $\{\ket{i}_{\htA}\}$ is an orthonormal base of space $\cF_{A}^{\chi_{\pi}(\cC)}$, The Choi-Jamio{\l}kowski (CJ) representation of the box $\pi_1$ is a sesquilinear form $R_{\pi_1}^{\cC}$ satisfying 
\begin{align} 
     R_{\pi_1}^{\cC} 
     ( \psi_{\htC} \otimes \psi_{\htE} \otimes \psi_{\htA}, 
    \varphi_{\htC} \otimes \varphi_{\htE} \otimes \varphi_{\htA}) =  
    \bra{\psi_{\htC}} \bra{\psi_{\htE}} 
    \Pi_1^{\cC} 
    \left( \ket{ \bar{\psi} _{\htA}} \bra{ \bar{\varphi} _{\htA}} \right) 
   \ket{\varphi_{\htE}} \ket{\varphi_{\htC}}, 
   \label{eq:cjpi}
\end{align}
where $\Pi_1^{\cC}$ is the CPTP map for box $\pi_1$ on cut $\cC$ and
the conjugate of $\bar{\psi}_{\htA},\bar{\varphi}_{\htA}$ is taken on basis $\{\ket{i}_{\htA}\}$.
We only care about the effect of $R_{\pi_1}^{\cC}$ on space $\htA,\htC,\htE$ because of the conditions (\textbf{R1}), (\textbf{R2}) required by the comb model.

For $\cC \in \fC(\cT_{(\pi_1\|\pi_2') \theta})$, assume that the wire $B,C$ are connected with respect to the bases 
$\{ \ket{ k_{\htC} } \}, \{ \ket{ l_{ \htC } } \}$ 
of $\htC$ and
the bases 
$\{ \ket{ k_{\htB} } \}, \{ \ket{ l_{ \htB } } \}$ of $\htB$.
The wire $D,A$ are connected with respect to 
the orthonormal bases 
$\{ \ket{ p_{\htD} }\}$, $\{ \ket{ q_{ \htD }} \}$ of
$\htD$ and 
the orthonormal bases 
$\{ \ket{ p_{\htA} }\}$, $\{ \ket{ q_{ \htA }} \}$ of
$\htA$.
These bases are predefined by the protocol $\pi$ itself.
The CJ representation of $\pi_1 \pi_2$ is 
\begin{equation} 
\begin{aligned}
    & R_{\pi_1 \pi_2} ^{\cC}
    ( \psi_{\htE} \otimes \psi_{\htF}, 
    \varphi_{\htE} \otimes \varphi_{\htF} )  \\
    &=  \sum_{k,l,p,q}  
    R_{\pi_1}^{\cC} 
    ( k_{\htC} \otimes \psi_{\htE} \otimes \bar{p}_{A}, 
    l_{\htC} \otimes \varphi_{\htE} \otimes \bar{q}_{\htA} ) 
    R_{\pi_2}^{\cC}
    ( p_{\htD} \otimes \psi_{\htF} \otimes \bar{k}_{\htB},  
    q_{\htD} \otimes \varphi_{\htF} \otimes \bar{l}_{\htB}). \label{eq:cpp}
\end{aligned}
\end{equation}

For $\cC' \in \fC(\cT_{\theta_{1}}) $, $\{\ket{i}_{\htC'}\}$ is a base of space $\cF_{C'}^{\chi_{\theta_1}(\cC')}$, and 
$\{\ket{i}_{\tdC''}\}$ is a base of space $\cF_{C''}^{f_1(\chi_{\theta_1} (\cC'))}$ 
such that $U_{f_1} \ket{i}_{\htC} = \ket{i}_{\tdC''}$ 
\footnote{This is possible because $\cF_{C'}^{\chi_{\theta_1}(\cC')} \simeq \cF_{C''}^{f_1(\chi_{\theta_1} (\cC'))}$.}.
The CJ representation of the delay box $\theta_1$ is 
\begin{align} 
    &R_{\theta_1}^{\cC'} (\psi_{\htC''} \otimes \psi_{\tdC''} \otimes \alpha_{\htC'}, 
    \varphi_{\htC''} \otimes \varphi_{\tdC''} \otimes \beta_{\htC'} ) \notag \\
    &= \bra{\psi_{\tdC''}} 
    \Theta_{1}^{\cC'} \left( \ket{\bar{\alpha}_{\htC'}} \bra{\bar{\beta}_{\htC'}} \right) 
    \ket{\varphi_{\tdC''}}  
    \cdot
    \inner{ \psi_{\htC''} }{ \Omega_{\htC''} } 
    \inner{ \Omega_{\htC''} }{ \varphi_{\htC''} } \notag \\
     &=  \inner{ \psi_{\tdC''} }{ \bar{\alpha}_{\tdC''} }
    \inner{ \bar{\beta}_{\tdC''} }{ \varphi_{\tdC''} }  
    \cdot
    \inner{ \psi_{\htC''} }{ \Omega_{\htC''} } 
    \inner{ \Omega_{\htC''} }{ \varphi_{\htC''} } , \label{eq:de3}
\end{align}
where $\bar{\alpha}_{\tdC''} \in \cF_{C''}^{\cC'}$ and $\bar{\alpha}_{\htC'} \in \cF_{C'}^{\chi(\cC')}$ have the same coefficients in the bases of $\{ \ket{i}_{\tdC''} \}$
and $\{\ket{i}_{\htC'}\}$.
Since the state in space $\htC''$ is always a vacuum state, we omit the arguments $\psi_{\htC''}, \varphi_{\htC''}$ of $R_{\theta_1^{\cC'}}$ and the second term of \Cref{eq:de3} in the remaining part of the paper.

For all $\cC \in \bar{\fC}(\cT_{\pi_1 \theta_{1}})$, wire $C$ and $C'$ are connected with respect to the bases 
$\{\ket{k_{\htC'}}\}$ and $\{\ket{l_{\htC'}}\}$.
Then we have the isomorphisms
$\ket{k_{\htC'}} \cong \ket{k_{\htC}}, \ket{l_{\htC'}} \cong \ket{l_{\htC}}$
\footnote{Since any pair of bases $\{\ket{k_{\hat{C}'}}\}$, $\{ \ket{ k_{\hat{C}} }\}$ defines a specific isomorphism between $\hat{C}'$ and $\hat{C}$, we use the notation $\ket{k_{\hat{C}'}} \cong  \ket{ k_{\hat{C}} }$ to denote this isomorphism.}.
the CJ representation of the composition box $\bar{\pi}_1=\pi_1 \theta_1$ is,

\begin{align}
    & R_{\bar{\pi}_1}^{\cC} (\gamma_{\tdC''} \otimes \psi_{\htE} \otimes \psi_{\htA}, \eta_{\tdC''} \otimes \varphi_{\htE} \otimes \varphi_{\htA}) \notag\\
    &=
    \sum_{k,l} 
    R_{\pi_1}^{\cC_1} 
     ( k_{\htC} \otimes \psi_{\htE} \otimes \psi_{\htA}, 
     l_{\htC} \otimes \varphi_{\htE} \otimes \varphi_{\htA}) 
    R_{\theta_1}^{\cC_2} 
    \left(\gamma_{\tdC''} \otimes \bar{k}_{\htC'}, 
    \eta_{\tdC''} \otimes \bar{l}_{\htC'} \right)  \label{eq:ARR1} \\
    &= \sum_{k,l} 
    R_{\pi_1}^{\cC_1} 
     \left( k_{\htC} \otimes \psi_{\htE} \otimes \psi_{\htA},
     l_{\htC} \otimes \varphi_{\htE} \otimes \varphi_{\htA} \right) 
    \inner{ \gamma_{\tdC''} }{ k_{\tdC''} }
    \inner{ l_{\tdC''} }{ \eta_{\tdC''} } \label{eq:ARR2} \\
    &= R_{\pi_1}^{\cC_1} 
    \left( \sum_{k} \inner{ k_{\tdC''} }{ \gamma_{\tdC''} } k_{\htC} 
    \otimes \psi_{\htE} \otimes \psi_{\htA}, 
    \sum_{l} \inner{ l_{\tdC''} }{ \eta_{\tdC''} } l_{\htC} 
    \otimes \varphi_{\htE} \otimes \varphi_{\htA} \right) \label{eq:ARR3} \\
    &= R_{\pi_1}^{\cC_1} 
    \left( \gamma_{\htC} 
    \otimes \psi_{\htE} \otimes \psi_{\htA}, 
     \eta_{\htC} 
    \otimes \varphi_{\htE} \otimes \varphi_{\htA} \right) ,
\end{align}
where 
$\ket{\gamma_{\htC}} = 
\left(\sum_k \ket{k_{\htC}} \bra{k_{\tdC''}} \right) \ket{\gamma_{\tdC''}}$ 
and 
$\ket{\eta_{\htC}} = 
\left( \sum_l \ket{l_{\htC}} \bra{l_{\tdC''}} \right) \ket{\eta_{\tdC''}}$.
\Cref{eq:ARR1} follows from \Cref{def:pc2}, \Cref{eq:ARR2} follows from \Cref{eq:de3}, and \Cref{eq:ARR3} follows from the sesquilinearity.

Similarly, for all $\cC \in \bar{\fC}(\cT_{\pi_2' \theta_2})$, wire $D$ and $D'$ are connected with respect to bases 
$\{\ket{p_{\tdD'}}\}$ and $\{\ket{q_{\tdD'}}\}$. 
The CJ representation of the box $\bar{\pi}'_2 = \pi'_2 \theta_2$ is 
\begin{align}
    & R_{\bar{\pi}'_2}^{\cC} 
    \left( 
    \lambda_{\htD''} \otimes \psi_{\tdF} \otimes \psi_{\tdB},
    \omega_{\htD''} \otimes \varphi_{\tdF} \otimes \varphi_{\tdB}
    \right) \\
    &= R_{\pi'_2}^{\cC_1} 
    \left( 
    \lambda_{\tdD} 
    \otimes \psi_{\htF} \otimes \psi_{\htB}, 
    \omega_{\tdD} 
    \otimes \varphi_{\htF} \otimes \varphi_{\htB} \right),
\end{align}
where 
$\ket{\lambda_{\tdD}} = 
\left( \sum_p \ket{p_{\tdD}}  \bra{p_{\htD''}} \right) \ket{\lambda_{\htD''} }$ and 
$\ket{\omega_{\tdD}} = 
\left( \sum_q \ket{q_{\tdD}} \bra{q_{\htD''}} \right) \ket{\omega_{\htD''}}$.

Notice that, for all $\cC \in \bar{\fC}(\cT_{(\pi_1 \| \pi'_2)})$, we have the isomorphism chain between the wires:
\begin{equation}
\begin{aligned}
    &\ket{k_{\tdB}} \cong \ket{k_{\htB}} \cong 
    \ket{k_{\htC}} \cong \ket{k_{\htC'}} \cong \ket{k_{\tdC''}}, \\
    &\ket{l_{\tdB}} \cong \ket{l_{\htB}} \cong \ket{l_{\htC}} \cong \ket{l_{\htC'}} \cong \ket{l_{\tdC''}}, \\
    &\ket{p_{\htA}} \cong \ket{p_{\tdA}} \cong 
    \ket{p_{\tdD}} \cong \ket{p_{\tdD'}} \cong \ket{p_{\htD''}},\\
    &\ket{q_{\htA}} \cong \ket{q_{\tdA}} \cong 
    \ket{q_{\tdD}} \cong \ket{q_{\tdD'}} \cong \ket{q_{\htD''}}.
\end{aligned} \label{eq:base-co}
\end{equation}
The isomorphisms $\cong$ in the first column are natural isomorphisms.
The isomorphisms in the second column are predefined by the connection of $\pi_1 \pi_2$.
The isomorphisms in the third column can be chosen freely; that is, they are neither predefined nor natural.
The isomorphisms in the fourth column follows from the definition of delay boxes $\theta_1$ and $\theta_2$.
Therefore, the isomorphisms in the third column induce two pairs of bases for $\tilde{B}, \tilde{C}''$ and two pairs of bases for $\hat{A}, \hat{D}''$, which are precisely the bases we will use for $B ,C''$ and $A, D''$ connections.

Then, for $ \cC = \cT_{(\pi_1\|\pi_2')\theta} $, we choose the bases $\{\ket{k}_{\tdC''}\}, \{\ket{l}_{\tdC''}\}, \{ \ket{k}_{\htB}\}, \{ \ket{l}_{\htB} \}$ for the connection of $C'', B$ and the bases $\{\ket{p_{\htD''}}\}, \{\ket{q_{\htD''}}\}, \{\ket{p_{\tdA}}\}, \{\ket{q_{\tdA}}\}$ for the connection of $D'', A$.
The CJ representation of the final box $\bar{\pi}_1 \bar{\pi}'_2$ is
\begin{align}
    & R_{\bar{\pi}_1 \bar{\pi}'_2}^{\cT_{(\pi_1\|\pi_2) \theta}} 
    ( \psi_{\htE} \otimes \psi_{\tdF}, \varphi_{\htE} \otimes \varphi_{\tdF}) \\
    &= \sum_{k,l,p,q}
    R_{\bar{\pi}_1}^{\cT_{\bar{\pi}_1}}
    ( k_{\tdC''} \otimes \psi_{E} \otimes \bar{p}_{A}, l_{\tdC''} \otimes \varphi_{\htE} \otimes \bar{q}_{\htA} ) 
    R_{\bar{\pi}'_2}^{\cT_{\bar{\pi}'_2}}
    ( p_{\htD''} \otimes \psi_{\tdF} \otimes \bar{k}_{\tdB}, 
    q_{\htD''} \otimes \varphi_{\tdF} \otimes \bar{l}_{\tdB}) \\
    &= \sum_{k,l,p,q}
    R_{\pi_1}^{\cT_{\pi}}
    ( k_{\htC} \otimes \psi_{E} \otimes \bar{p}_{A}, l_{\htC} \otimes \varphi_{\htE} \otimes \bar{q}_{\htA} ) 
    R_{\pi'_2}^{\cT_{\pi'}}
    ( p_{\tdD} \otimes \psi_{\tdF} \otimes \bar{k}_{\tdB}, 
    q_{\tdD} \otimes \varphi_{\tdF} \otimes \bar{l}_{\tdB}). \label{eq:cppd}
\end{align}
where $\cC_1' = \cC \cap \cT_{\pi}$ and $\cC_2' = \cC \cap \cT_{\pi'}$.
$R_{\pi_2}$ and $R_{\pi'_2}$ should have the same effect on their respective wire spaces.
Rigorously speaking, 
for natural isomorphisms
\begin{equation}
\begin{aligned}
    &\ket{p_{\tdD}} \cong \ket{p_{\htD}}, \ket{q_{\tdD}} \cong \ket{q_{\htD}}, \\
    &\ket{k_{\tdB}} \cong \ket{k_{\htB}}, \ket{l_{\tdB}} \cong \ket{l_{\htB}}, \\
    &\ket{\psi_{\tdF}} \cong \ket{\psi_{\hat{F}}} ,\ket{\varphi_{\tdF}} \cong \ket{\varphi_{\htF}},
\end{aligned}
\end{equation}
we have
\begin{align}
    R_{\pi'_2}^{\cT_{\pi'}}
    ( p_{\tdD} \otimes \psi_{\tdF} \otimes \bar{k}_{\tdB}, 
    q_{\tdD} \otimes \varphi_{\tdF} \otimes \bar{l}_{\tdB})
    =  R_{\pi_2}^{\cT_{\pi}}
    ( p_{\htD} \otimes \psi_{\htF} \otimes \bar{k}_{\htB},  
    q_{\htD} \otimes \varphi_{\htF} \otimes \bar{l}_{\htB}). \label{eq:SAM}
\end{align}
Substituting \Cref{eq:SAM} into \Cref{eq:cppd}, we have
\begin{align}
    R_{\bar{\pi}_1 \bar{\pi}'_2} ^{\cT_{(\pi_1\|\pi_2) \theta}} 
    ( \psi_{\htE} \otimes \psi_{\tdF}, 
    \varphi_{\htE} \otimes \varphi_{\tdF}) = 
    R_{\pi_1 \pi_2} ^{\cT_{\pi}}
    ( \psi_{\htE} \otimes \psi_{\htF}, 
    \varphi_{\htE} \otimes \varphi_{\htF} ),
\end{align}
which concludes the proof.

\subsection{Proof of \Cref{thm:cond-gs}}
\label{prf:cond-gs}
\begin{theorem*}
     Let $\pi$ be a $\mathrm{WCF}(z,\epsilon)$ protocol and $\eta$ be any protocol with only classical output at the outer interface. $\xi = \pi \| \eta$ is their parallel composition with the ordered set $\cT$. Then $\pi$ is globally secure in the protocol if there exists a partition $\cT_1,\cT_2,\cT_3$ of $\cT$ such that
    \begin{enumerate}[start=1, label={(\bfseries C\arabic*)}]
    \item $\forall t_1\in \cT_1, t_2 \in \cT_2, t_3 \in \cT_3$, $t_1 \prec t_2 \prec t_3$;
    \item $\eta$ only accepts input and produces output on set $\cT_1 \cup \cT_3$, while $\pi$ only accepts input and produces output on set $\cT_2$. 
    \end{enumerate}
\end{theorem*}
In this proof, we focus on the first condition in \Cref{def:global-sec}, then the second condition follows similarly.

Let $\alpha$ be the adversarial box connected to the parallel composition $\pi_A \| \eta_A$, as shown in \Cref{fig:secure-basic}. For convenience, $c_A,K_A$ in \Cref{def:global-sec} are relabeled as $E,K$, respectively.
The idea of this proof is by contradiction, explained as follows.
Our target is to bound the conditional probability 
$\Pr[E=1|K=k]$.
Due to condition (\textbf{C1}) and (\textbf{C2}), the interaction of $\alpha$ divides into three phases (see \Cref{fig:secure-equiv}). 
Observe that the only way that $K$ can influence $E$ is through the information $Q_1$ passed from the first phase $\alpha_1$ to the second phase $\alpha_2$. 
If the adversarial box $\alpha$ can bias $\Pr[E=1|K=k]$ beyond $\frac{1}{2} + \epsilon$,
then the information on $Q_1$ conditioned on $K=k$, together with the box $\alpha_2$, becomes a successful attacking strategy for $\pi_A$ alone.

We first simplify the adversarial box and prove this theorem by contradiction. 
Let $\cT$ be the ordered set of $\pi_A \| \eta_A$.
Without loss of generality, we can assume the final box $(\pi_A \| \eta_A) \alpha$ is defined on the ordered set $\cT$, because any extra element $t$ introduced by $\alpha$ has no effect on $\pi_A \| \eta_A$.

\begin{figure}[ht]
    \centering
    \begin{minipage}{.5\textwidth}
        \centering
        \subcaptionbox{\label{fig:secure-basic}}{\tikzset{every picture/.style={line width=0.75pt}} 

\begin{tikzpicture}[x=0.75pt,y=0.75pt,yscale=-1,xscale=1,baseline=(current bounding box.center)]

\draw  [color={rgb, 255:red, 0; green, 0; blue, 0 }  ,draw opacity=1 ] (213,39.96) -- (256.18,39.96) -- (256.18,166.96) -- (213,166.96) -- cycle ;
\draw   (97.08,38.31) -- (147.08,38.31) -- (147.08,88.32) -- (97.08,88.32) -- cycle ;
\draw    (66.18,63.73) -- (97.58,63.73) ;
\draw [shift={(64.18,63.73)}, rotate = 0] [color={rgb, 255:red, 0; green, 0; blue, 0 }  ][line width=0.75]    (10.93,-3.29) .. controls (6.95,-1.4) and (3.31,-0.3) .. (0,0) .. controls (3.31,0.3) and (6.95,1.4) .. (10.93,3.29)   ;
\draw    (171.18,53.39) -- (146.75,53.39) ;
\draw [shift={(173.18,53.39)}, rotate = 180] [color={rgb, 255:red, 0; green, 0; blue, 0 }  ][line width=0.75]    (10.93,-3.29) .. controls (6.95,-1.4) and (3.31,-0.3) .. (0,0) .. controls (3.31,0.3) and (6.95,1.4) .. (10.93,3.29)   ;
\draw    (149.06,74.73) -- (182.75,74.73) ;
\draw [shift={(147.06,74.73)}, rotate = 0] [color={rgb, 255:red, 0; green, 0; blue, 0 }  ][line width=0.75]    (10.93,-3.29) .. controls (6.95,-1.4) and (3.31,-0.3) .. (0,0) .. controls (3.31,0.3) and (6.95,1.4) .. (10.93,3.29)   ;
\draw   (97.08,117.64) -- (147.08,117.64) -- (147.08,167.65) -- (97.08,167.65) -- cycle ;
\draw    (67.18,143.06) -- (97.58,143.06) ;
\draw [shift={(65.18,143.06)}, rotate = 0] [color={rgb, 255:red, 0; green, 0; blue, 0 }  ][line width=0.75]    (10.93,-3.29) .. controls (6.95,-1.4) and (3.31,-0.3) .. (0,0) .. controls (3.31,0.3) and (6.95,1.4) .. (10.93,3.29)   ;
\draw    (179.18,154.06) -- (148.75,154.06) ;
\draw [shift={(146.75,154.06)}, rotate = 360] [color={rgb, 255:red, 0; green, 0; blue, 0 }  ][line width=0.75]    (10.93,-3.29) .. controls (6.95,-1.4) and (3.31,-0.3) .. (0,0) .. controls (3.31,0.3) and (6.95,1.4) .. (10.93,3.29)   ;
\draw    (147.73,133.39) -- (176.71,133.39) ;
\draw [shift={(178.71,133.39)}, rotate = 180] [color={rgb, 255:red, 0; green, 0; blue, 0 }  ][line width=0.75]    (10.93,-3.29) .. controls (6.95,-1.4) and (3.31,-0.3) .. (0,0) .. controls (3.31,0.3) and (6.95,1.4) .. (10.93,3.29)   ;
\draw    (210.18,53.39) -- (173.18,53.39) ;
\draw [shift={(212.18,53.39)}, rotate = 180] [color={rgb, 255:red, 0; green, 0; blue, 0 }  ][line width=0.75]    (10.93,-3.29) .. controls (6.95,-1.4) and (3.31,-0.3) .. (0,0) .. controls (3.31,0.3) and (6.95,1.4) .. (10.93,3.29)   ;
\draw    (184.75,74.73) -- (212.18,74.73) ;
\draw [shift={(182.75,74.73)}, rotate = 0] [color={rgb, 255:red, 0; green, 0; blue, 0 }  ][line width=0.75]    (10.93,-3.29) .. controls (6.95,-1.4) and (3.31,-0.3) .. (0,0) .. controls (3.31,0.3) and (6.95,1.4) .. (10.93,3.29)   ;
\draw    (213.18,154.06) -- (181.18,154.06) ;
\draw [shift={(179.18,154.06)}, rotate = 360] [color={rgb, 255:red, 0; green, 0; blue, 0 }  ][line width=0.75]    (10.93,-3.29) .. controls (6.95,-1.4) and (3.31,-0.3) .. (0,0) .. controls (3.31,0.3) and (6.95,1.4) .. (10.93,3.29)   ;
\draw    (178.18,133.39) -- (210.18,133.39) ;
\draw [shift={(212.18,133.39)}, rotate = 180] [color={rgb, 255:red, 0; green, 0; blue, 0 }  ][line width=0.75]    (10.93,-3.29) .. controls (6.95,-1.4) and (3.31,-0.3) .. (0,0) .. controls (3.31,0.3) and (6.95,1.4) .. (10.93,3.29)   ;

\draw (122.08,63.31) node  [font=\normalsize] [align=left] {$\displaystyle \pi _{A}$};
\draw (122.08,142.65) node  [font=\normalsize] [align=left] {$\displaystyle \eta _{A}$};
\draw (79.75,52.18) node  [font=\small]  {$E$};
\draw (81.63,131.84) node  [font=\small]  {$K,K'$};
\draw (233.59,105.16) node    {$\alpha $};
\draw (162.75,66.18) node  [font=\small]  {$A$};
\draw (164.09,40.84) node  [font=\small]  {$C$};
\draw (162.75,144.84) node  [font=\small]  {$Y$};
\draw (164.09,119.51) node  [font=\small]  {$X$};
\draw (195.75,40.84) node  [font=\small]  {$B$};
\draw (194.09,144.18) node  [font=\small]  {$W$};
\draw (194.75,119.84) node  [font=\small]  {$V$};
\draw (196.09,64.18) node  [font=\small]  {$D$};

\end{tikzpicture}}
    \end{minipage}%
    \begin{minipage}{.5\textwidth}
        \centering
        \subcaptionbox{\label{fig:secure-equiv}}{\tikzset{every picture/.style={line width=0.75pt}} 

\begin{tikzpicture}[x=0.75pt,y=0.75pt,yscale=-1,xscale=1,baseline=(current bounding box.center)]

\draw  [color={rgb, 255:red, 0; green, 0; blue, 0 }  ,draw opacity=1 ] (236,206.69) -- (268.18,206.69) -- (268.18,253.69) -- (236,253.69) -- cycle ;
\draw   (120.08,128.31) -- (170.08,128.31) -- (170.08,178.32) -- (120.08,178.32) -- cycle ;
\draw    (89.18,153.73) -- (120.58,153.73) ;
\draw [shift={(87.18,153.73)}, rotate = 0] [color={rgb, 255:red, 0; green, 0; blue, 0 }  ][line width=0.75]    (10.93,-3.29) .. controls (6.95,-1.4) and (3.31,-0.3) .. (0,0) .. controls (3.31,0.3) and (6.95,1.4) .. (10.93,3.29)   ;
\draw    (194.18,143.39) -- (169.75,143.39) ;
\draw [shift={(196.18,143.39)}, rotate = 180] [color={rgb, 255:red, 0; green, 0; blue, 0 }  ][line width=0.75]    (10.93,-3.29) .. controls (6.95,-1.4) and (3.31,-0.3) .. (0,0) .. controls (3.31,0.3) and (6.95,1.4) .. (10.93,3.29)   ;
\draw    (172.06,164.73) -- (205.75,164.73) ;
\draw [shift={(170.06,164.73)}, rotate = 0] [color={rgb, 255:red, 0; green, 0; blue, 0 }  ][line width=0.75]    (10.93,-3.29) .. controls (6.95,-1.4) and (3.31,-0.3) .. (0,0) .. controls (3.31,0.3) and (6.95,1.4) .. (10.93,3.29)   ;
\draw   (120.08,207.64) -- (170.08,207.64) -- (170.08,257.65) -- (120.08,257.65) -- cycle ;
\draw    (90.18,246.06) -- (120.58,246.06) ;
\draw [shift={(88.18,246.06)}, rotate = 0] [color={rgb, 255:red, 0; green, 0; blue, 0 }  ][line width=0.75]    (10.93,-3.29) .. controls (6.95,-1.4) and (3.31,-0.3) .. (0,0) .. controls (3.31,0.3) and (6.95,1.4) .. (10.93,3.29)   ;
\draw    (202.18,244.06) -- (171.75,244.06) ;
\draw [shift={(169.75,244.06)}, rotate = 360] [color={rgb, 255:red, 0; green, 0; blue, 0 }  ][line width=0.75]    (10.93,-3.29) .. controls (6.95,-1.4) and (3.31,-0.3) .. (0,0) .. controls (3.31,0.3) and (6.95,1.4) .. (10.93,3.29)   ;
\draw    (170.73,223.39) -- (199.71,223.39) ;
\draw [shift={(201.71,223.39)}, rotate = 180] [color={rgb, 255:red, 0; green, 0; blue, 0 }  ][line width=0.75]    (10.93,-3.29) .. controls (6.95,-1.4) and (3.31,-0.3) .. (0,0) .. controls (3.31,0.3) and (6.95,1.4) .. (10.93,3.29)   ;
\draw    (233.18,143.39) -- (196.18,143.39) ;
\draw [shift={(235.18,143.39)}, rotate = 180] [color={rgb, 255:red, 0; green, 0; blue, 0 }  ][line width=0.75]    (10.93,-3.29) .. controls (6.95,-1.4) and (3.31,-0.3) .. (0,0) .. controls (3.31,0.3) and (6.95,1.4) .. (10.93,3.29)   ;
\draw    (207.75,164.73) -- (235.18,164.73) ;
\draw [shift={(205.75,164.73)}, rotate = 0] [color={rgb, 255:red, 0; green, 0; blue, 0 }  ][line width=0.75]    (10.93,-3.29) .. controls (6.95,-1.4) and (3.31,-0.3) .. (0,0) .. controls (3.31,0.3) and (6.95,1.4) .. (10.93,3.29)   ;
\draw    (234.76,244.06) -- (204.14,244.06) ;
\draw [shift={(202.14,244.06)}, rotate = 360] [color={rgb, 255:red, 0; green, 0; blue, 0 }  ][line width=0.75]    (10.93,-3.29) .. controls (6.95,-1.4) and (3.31,-0.3) .. (0,0) .. controls (3.31,0.3) and (6.95,1.4) .. (10.93,3.29)   ;
\draw    (201.18,223.39) -- (233.72,223.39) ;
\draw [shift={(235.72,223.39)}, rotate = 180] [color={rgb, 255:red, 0; green, 0; blue, 0 }  ][line width=0.75]    (10.93,-3.29) .. controls (6.95,-1.4) and (3.31,-0.3) .. (0,0) .. controls (3.31,0.3) and (6.95,1.4) .. (10.93,3.29)   ;
\draw    (269.18,232.39) -- (287.18,232.39) ;
\draw    (287.18,232.39) -- (287.18,162.5) ;
\draw  [color={rgb, 255:red, 0; green, 0; blue, 0 }  ,draw opacity=1 ] (236,129.69) -- (268.18,129.69) -- (268.18,176.69) -- (236,176.69) -- cycle ;
\draw    (287.18,162.5) -- (270.18,162.5) ;
\draw [shift={(268.18,162.5)}, rotate = 360] [color={rgb, 255:red, 0; green, 0; blue, 0 }  ][line width=0.75]    (10.93,-3.29) .. controls (6.95,-1.4) and (3.31,-0.3) .. (0,0) .. controls (3.31,0.3) and (6.95,1.4) .. (10.93,3.29)   ;
\draw  [color={rgb, 255:red, 0; green, 0; blue, 0 }  ,draw opacity=1 ] (235,48.48) -- (267.18,48.48) -- (267.18,95.48) -- (235,95.48) -- cycle ;
\draw   (119.08,49.43) -- (169.08,49.43) -- (169.08,99.43) -- (119.08,99.43) -- cycle ;
\draw    (89.18,63.85) -- (119.58,63.85) ;
\draw [shift={(87.18,63.85)}, rotate = 0] [color={rgb, 255:red, 0; green, 0; blue, 0 }  ][line width=0.75]    (10.93,-3.29) .. controls (6.95,-1.4) and (3.31,-0.3) .. (0,0) .. controls (3.31,0.3) and (6.95,1.4) .. (10.93,3.29)   ;
\draw    (201.18,85.85) -- (170.75,85.85) ;
\draw [shift={(168.75,85.85)}, rotate = 360] [color={rgb, 255:red, 0; green, 0; blue, 0 }  ][line width=0.75]    (10.93,-3.29) .. controls (6.95,-1.4) and (3.31,-0.3) .. (0,0) .. controls (3.31,0.3) and (6.95,1.4) .. (10.93,3.29)   ;
\draw    (169.73,65.18) -- (198.71,65.18) ;
\draw [shift={(200.71,65.18)}, rotate = 180] [color={rgb, 255:red, 0; green, 0; blue, 0 }  ][line width=0.75]    (10.93,-3.29) .. controls (6.95,-1.4) and (3.31,-0.3) .. (0,0) .. controls (3.31,0.3) and (6.95,1.4) .. (10.93,3.29)   ;
\draw    (233.76,85.85) -- (203.14,85.85) ;
\draw [shift={(201.14,85.85)}, rotate = 360] [color={rgb, 255:red, 0; green, 0; blue, 0 }  ][line width=0.75]    (10.93,-3.29) .. controls (6.95,-1.4) and (3.31,-0.3) .. (0,0) .. controls (3.31,0.3) and (6.95,1.4) .. (10.93,3.29)   ;
\draw    (200.18,65.18) -- (232.72,65.18) ;
\draw [shift={(234.72,65.18)}, rotate = 180] [color={rgb, 255:red, 0; green, 0; blue, 0 }  ][line width=0.75]    (10.93,-3.29) .. controls (6.95,-1.4) and (3.31,-0.3) .. (0,0) .. controls (3.31,0.3) and (6.95,1.4) .. (10.93,3.29)   ;
\draw    (270.18,148.39) -- (288.18,148.39) ;
\draw    (288.18,148.39) -- (288.18,70.69) ;
\draw    (288.18,70.69) -- (271.18,70.69) ;
\draw [shift={(269.18,70.69)}, rotate = 360] [color={rgb, 255:red, 0; green, 0; blue, 0 }  ][line width=0.75]    (10.93,-3.29) .. controls (6.95,-1.4) and (3.31,-0.3) .. (0,0) .. controls (3.31,0.3) and (6.95,1.4) .. (10.93,3.29)   ;
\draw    (119.18,221.56) -- (63.18,221.56) ;
\draw    (63.18,221.56) -- (63.18,87.69) ;
\draw    (63.18,87.69) -- (116.18,87.69) ;
\draw [shift={(118.18,87.69)}, rotate = 180] [color={rgb, 255:red, 0; green, 0; blue, 0 }  ][line width=0.75]    (10.93,-3.29) .. controls (6.95,-1.4) and (3.31,-0.3) .. (0,0) .. controls (3.31,0.3) and (6.95,1.4) .. (10.93,3.29)   ;

\draw (145.08,153.31) node  [font=\normalsize] [align=left] {$\displaystyle \pi _{A}$};
\draw (145.08,232.65) node  [font=\normalsize] [align=left] {$\displaystyle \eta _{A1}$};
\draw (102.75,142.18) node  [font=\small]  {$E$};
\draw (104.63,234.84) node  [font=\small]  {$K$};
\draw (254.09,230.19) node    {$\alpha _{1}$};
\draw (185.75,156.18) node  [font=\small]  {$A$};
\draw (187.09,130.84) node  [font=\small]  {$C$};
\draw (185.75,234.84) node  [font=\small]  {$Y_{1}$};
\draw (187.09,209.51) node  [font=\small]  {$X_{1}$};
\draw (218.75,130.84) node  [font=\small]  {$B$};
\draw (217.09,234.18) node  [font=\small]  {$W_{1}$};
\draw (217.75,209.84) node  [font=\small]  {$V_{1}$};
\draw (219.09,154.18) node  [font=\small]  {$D$};
\draw (252.09,153.19) node    {$\alpha _{2}$};
\draw (297.75,197.5) node  [font=\small]  {$Q_1$};
\draw (297.75,110) node  [font=\small]  {$Q_2$};
\draw (144.08,74.43) node  [font=\normalsize] [align=left] {$\displaystyle \eta _{A2}$};
\draw (103.63,52.63) node  [font=\small]  {$K '$};
\draw (253.09,71.98) node    {$\alpha_{3}$};
\draw (184.75,76.63) node  [font=\small]  {$Y_{2}$};
\draw (186.09,51.3) node  [font=\small]  {$X_{2}$};
\draw (216.09,75.96) node  [font=\small]  {$W_{2}$};
\draw (216.75,51.63) node  [font=\small]  {$V_{2}$};

\end{tikzpicture}}
    \end{minipage}
    \begin{minipage}{\textwidth}
        \centering
        \subcaptionbox{\label{fig:secure-order}}{\tikzset{every picture/.style={line width=0.75pt}} 

\begin{tikzpicture}[x=0.75pt,y=0.75pt,yscale=-1,xscale=1]

\draw    (291,103.5) -- (267.08,103.5) ;
\draw [shift={(293,103.5)}, rotate = 180] [color={rgb, 255:red, 0; green, 0; blue, 0 }  ][line width=0.75]    (10.93,-3.29) .. controls (6.95,-1.4) and (3.31,-0.3) .. (0,0) .. controls (3.31,0.3) and (6.95,1.4) .. (10.93,3.29)   ;
\draw  [color={rgb, 255:red, 65; green, 117; blue, 5 }  ,draw opacity=1 ][dash pattern={on 4.5pt off 4.5pt}] (158,104.56) .. controls (158,89.34) and (179.59,77) .. (206.23,77) .. controls (232.86,77) and (254.45,89.34) .. (254.45,104.56) .. controls (254.45,119.78) and (232.86,132.12) .. (206.23,132.12) .. controls (179.59,132.12) and (158,119.78) .. (158,104.56) -- cycle ;
\draw  [color={rgb, 255:red, 65; green, 117; blue, 5 }  ,draw opacity=1 ][dash pattern={on 4.5pt off 4.5pt}] (307,105.56) .. controls (307,90.34) and (328.59,78) .. (355.23,78) .. controls (381.86,78) and (403.45,90.34) .. (403.45,105.56) .. controls (403.45,120.78) and (381.86,133.12) .. (355.23,133.12) .. controls (328.59,133.12) and (307,120.78) .. (307,105.56) -- cycle ;
\draw    (436,103.5) -- (412.08,103.5) ;
\draw [shift={(438,103.5)}, rotate = 180] [color={rgb, 255:red, 0; green, 0; blue, 0 }  ][line width=0.75]    (10.93,-3.29) .. controls (6.95,-1.4) and (3.31,-0.3) .. (0,0) .. controls (3.31,0.3) and (6.95,1.4) .. (10.93,3.29)   ;
\draw  [color={rgb, 255:red, 65; green, 117; blue, 5 }  ,draw opacity=1 ][dash pattern={on 4.5pt off 4.5pt}] (452,105.56) .. controls (452,90.34) and (473.59,78) .. (500.23,78) .. controls (526.86,78) and (548.45,90.34) .. (548.45,105.56) .. controls (548.45,120.78) and (526.86,133.12) .. (500.23,133.12) .. controls (473.59,133.12) and (452,120.78) .. (452,105.56) -- cycle ;

\draw (206.23,104.56) node   [align=left] {...};
\draw (345.67,53.4) node [anchor=north west][inner sep=0.75pt]    {$\mathcal{T}_{2}$};
\draw (194.67,53.4) node [anchor=north west][inner sep=0.75pt]    {$\mathcal{T}_{1}$};
\draw (355.23,105.56) node   [align=left] {...};
\draw (120.67,95.4) node [anchor=north west][inner sep=0.75pt]    {$\mathcal{T} :$};
\draw (490.67,53.4) node [anchor=north west][inner sep=0.75pt]    {$\mathcal{T}_{3}$};
\draw (500.23,105.56) node   [align=left] {...};

\end{tikzpicture}}
    \end{minipage}
    
    \caption{
    (a) The basic connection of box $(\pi_A\|\eta_A) \alpha$. 
    (b) An equivalent form of the box. The wires $X,Y,V,W$ are split into $(X_1,X_2) , (Y_1,Y_2), (V_1,V_2), (W_1,W_2)$, respectively. 
    (c) Causal order of box $(\pi_A\|\eta_A) \alpha$.
    The sets $\cT_1,\cT_2,\cT_3$ as a whole are strictly ordered.}
\end{figure}

According to the condition (\textbf{C}1) and (\textbf{C}2), the wire $(V_1,W_1)$, $(B,D)$ and $(V_2,W_2)$ are effective only on $\cT_1$, $\cT_2$ and $\cT_3$, respectively.
Let $\{ \Phi_{\alpha}^{\cC}\}_{\cC \in \bar{\fC}(\cT)}$ be the set of CPTP maps corresponding to $\alpha$.
Since we are only interested in the final output, we set $\cC =\cT$.
By~\cite[Lemma 9]{portmann2017causal}, the map $ \Phi_{\alpha}^{\cT}$ can be decomposed into three maps $\Phi_{\alpha_1}^{\cT}:V_1 \mapsto W_1Q_1$, $ \Phi_{\alpha_2}^{\cT}:BQ_1 \mapsto DQ_2$ and $\Phi_{\alpha_3}^{\cT}:V_2Q_2 \mapsto W_2$ that satisfy
\begin{align}
    \Phi_{\alpha}^{\cT} = (\id_{W_1} \otimes \id_{D} \otimes \Phi_{\alpha_3}^{\cT}) \circ (\id_{W_1} \otimes \Phi_{\alpha_2}^{\cT} \otimes \id_{V_2}) \circ (\Phi_{\alpha_1}^{\cT} \otimes \id_{B} \otimes \id_{V_2}).
\end{align}
Here $\Phi_{\alpha_1}^{\cT}$, $\Phi_{\alpha_2}^{\cT}$ and $\Phi_{\alpha_3}^{\cT}$ can be seen as the map of some boxes $\alpha_1$, $\alpha_2$ and $\alpha_3$, respectively.
Similarly, the box $\eta_A$ is split into $\eta_{A1}$ and $\eta_{A2}$. 
The decomposition is illustrated in \Cref{fig:secure-equiv}.

Let $\Phi_{\eta_A \alpha_1}^{\cT}$ and $\Phi_{\pi_A \alpha_2}^{\cT}$ be the map of connected boxes $\eta_A \alpha_1$ and $\pi_A \alpha_2$.
Denote $\bar{Q} = \cF_{Q}^{\cT}$, then the output of $\Phi_{\eta_A \alpha_1}^{\cT}$ is a classical-quantum state $\rho_{\bar{Q}K} = \sum_k P_K(k)\proj{k} \otimes \rho_{\bar{Q}}^k$.
$ \Phi_{\pi_A \alpha_2}^{\cT}$ is a map from $\fT(\bar{Q})$ to the classical bit $E$, which can be represented as a binary POVM $\{M_0,M_1\}$.
Then the conditional probability is $P_{E|K}(e|k) = \tr(M_e \rho_{\bar{Q}}^k)$.

Now we are ready to prove the theorem by contradiction.
Assume $\exists k \in \mathbb{N}, \Pr[E=1 | K =k] \ge z + \epsilon$, then $ \tr(M_e \rho_{\bar{Q}}^k) \ge z + \epsilon$.
Consider a box $\gamma$ that outputs state $\rho_{\bar{Q}}^k$ and has no input. The box $\pi_A \alpha_2 \gamma$ will output $E=1$ with probability larger than $z + \epsilon$, which contradicts the stand-alone security. 
Therefore, the first condition in \Cref{def:global-sec} is satisfied.

\subsection{Proof of \Cref{coro:unb}}
\Label{prf:unb}
\begin{corollary*}
      Let $z\in[0,\frac{1}{2}]$. The unbalanced WCF protocol shown in \Cref{fig:unb-wcf} is a $\mathrm{WCF}(z,\epsilon')$ protocol with $\epsilon' = 2\epsilon + o(\epsilon)$.
\end{corollary*}

The intuition for the proof of \Cref{coro:unb} is as follows.
The probability of one party ultimately winning is a continuous function of the probabilities of winning in each round.
Thus, a small bias in each round necessarily leads to a small overall bias.

Denote the fractional part of $z$ as a binary string $b_1^n$ (with finite precision), that is, $z = 0.b_1b_2\dots b_n$. 
Define the sets $I \coloneqq \{i \in [n] \mid b_i = 0 \}$ and $I' \coloneqq \{i \in [n] \mid b_i = 1\}$.  
The elements of $I$ and $I'$ form the sequences $(a_1, \dots, a_{|I|})$ and $(a'_1, \dots, a'_{|I'|})$, respectively, arranged in ascending order.
The $i$th WCF protocol $\pi_i$ as well as the other WCF protocols $\pi_1\|...\|\pi_{i-1}$ and $\pi_{i+1}\|...\|\pi_n$ output classical bits and follow strict temporal orders. 
Therefore, for $i=2,3,\dots, n$ and any bit sequence $c_1^{i-1}$, we have global security of $\pi_i$ according to \Cref{thm:cond-gs}:
\begin{align}
    \Pr \left[ C_{A,i} = 1 | C_{A,1}^{i-1} = c_1^{i-1} \right] &\leq \frac{1}{2}+\epsilon, \label{eq:gl} \\
    \Pr \left[ C_{B,i} = 0 | C_{B,1}^{i-1} = c_1^{i-1} \right] &\leq \frac{1}{2}+\epsilon, \label{eq:gl-b}
\end{align}
where $C_{B(A),1}^i = C_{B(A),1} C_{B(A),2} \dots C_{B(A),i}$.

We first argue that a cheating party needs to maximize the winning probability of each game $\pi_i$ in order to maximize the final winning probability. 
At each step $\pi_i$, if $b_i=0$, Bob wins the final game with certainty if $C_{A,i} =1$ and wins the final game with a probability smaller than $1$ if $C_{A,i}=0$; if $b_i=1$, Bob wins with non-zero probability if $C_{A,i}=0$ while loses with certainty if $C_{A,i}=1$. 
Therefore, a cheating Bob strives to win at each round of $\pi_i$ in order to finally win with the highest probability. 
The similar argument holds for cheating Alice. 
In conclusion, 
Bob's winning probability is no larger than the case when
\begin{align}
    \forall i \in [n], \Pr \left[ C_{A,i} = 1 | C_{A,1}^{i-1} = b_1^{i-1} \right] = \frac{1}{2}+\epsilon, \label{eq:Bcon}
\end{align}
and Alice's winning probability is no larger than the case when
\begin{align}
    \forall i \in [n], \Pr \left[ C_{B,i} = 0 | C_{B,1}^{i-1} = b_1^{i-1} \right] = \frac{1}{2}+\epsilon. \label{eq:Acon}
\end{align}
We will use the above probabilities to compute the upper bound on Alice and Bob's optimal winning probability. 

According to the program of $\tau_A$ in \Cref{fig:unb-wcf}, Bob can win the final game only if the loop breaks at the $a$th round where $b_a=0$, or equivalently, $a \in I$.
Then, his winning probability can be derived as 
\begin{align}
    \Pr[\text{Bob wins}] &= \sum_{a \in I} \Pr[\text{Loop breaks at $a$th round and $C_{A,a}=1$}]\\
    &= \sum_{i =1}^{|I|} \Pr \left[C_{A,1}^{a_i} = (b_1^{a_i-1},1) \right]\\
    &= \sum_{i=1}^{|I|} \Pr \left[ C_{A,a_i} = 1 | C_{A,1}^{a_i-1} = b_1^{a_i-1} \right] \cdot \prod_{j=1}^{a_i-1} \Pr \left[ C_{A,j} = b_j|C_{A,1}^{j-1} = b_1^{j-1} \right] \\
    & = \sum_{i=1}^{|I|} 
    \Pr \left[ C_{A,a_i} = 1 | C_{A,1}^{a_i-1} = b_1^{a_i-1} \right] \cdot
    \prod_{j \in \{ {a}_{k}\}_{k=1}^{i-1}} 
    \Pr \left[ C_{A,j} = 0 | C_{A,1}^{j-1} = b_1^{j-1} \right] \cdot \notag \\
    & \hphantom{=} 
    \prod_{j' \in [a_i-1] \backslash \{ {a}_{k}\}_{k=1}^{i-1}}  
    \Pr \left[ C_{A,j'} = 0 | C_{A,1}^{j'-1} = b_1^{j'-1} \right] \\
    & \le \sum_{i=1}^{|I|} 
    \left(\frac{1}{2} + \epsilon \right)^{a_i-i+1} \left( \frac{1}{2} - \epsilon \right)^{i-1}, \label{eq:coB}
\end{align}
where $C_{A,1}^0 = b_1^0$ is defined as a certain event and Inequality~\eqref{eq:coB} follows from Inequality~\eqref{eq:gl} and~\eqref{eq:Bcon}.
Similarly, for cheating Alice, we obtain
\begin{align}
    \Pr[\text{Alice wins}] &= \sum_{i =1}^{|I'|} \Pr \left[C_{B,1}^{a'_i} = (b_1^{a'_i-1},0) \right]\\
    &= \sum_{i=1}^{|I'|} \Pr \left[ C_{B,a'_i} = 0 | C_{B,i}^{a'_i-1} = b_1^{a'_i-1} \right] \prod_{j=1}^{a'_i-1} \Pr \left[ C_{B,j} = b_j|C_{B,1}^{j-1} = b_1^{j-1} \right] \label{eqn:alice}\\
    & \le \sum_{i=1}^{|I'|} \left(\frac{1}{2} + \epsilon \right)^{a'_i-i+1} \left( \frac{1}{2} - \epsilon \right)^{i-1} , \label{eq:coA}
\end{align}
Then the bias achieved by Bob is
\begin{align}
    \epsilon'_B & = |\Pr[\text{Bob wins}] - (1 - z) | \\
    &\le \sum_{i=1}^{|I|}   \left(\frac{1}{2} + \epsilon \right)^{a_i-i+1} \left( \frac{1}{2} - \epsilon \right)^{i-1} - \frac{1}{2^{a_i}} \\
    &\le  \sum_{i=1}^{|I|} \frac{2(a_i-2i+2)}{2^{a_i}} \epsilon  + o(\epsilon).
\end{align}
We have omitted the absolute value because the winning probability of a cheating Bob is always larger than that of an honest Bob, $\Pr[\text{Bob wins}] \geq 1-z$. Note that
\begin{align}
    \sum_{n=i}^{\infty} \frac{n}{2^n} =  \frac{i+1}{2^{i-1}}.
\end{align}
When $|I|=1$, 
\begin{align}
    \epsilon'_B \leq \frac{2a_1}{2^{a_1}} \epsilon + o(\epsilon) \leq \epsilon + o(\epsilon).
\end{align}
When $|I|\geq 2$,
\begin{align}
    \epsilon'_B \leq \frac{2a_1}{2^{a_1}} \epsilon + \frac{1}{2} \sum_{i=2}^{|I|} \frac{(a_i-2)}{2^{a_i-2}} \epsilon +o(\epsilon) \leq  \frac{2a_1}{2^{a_1}} \epsilon +\frac{1}{2} \sum_{i=2}^{\infty} \frac{i-2}{2^{i-2}}\epsilon + o(\epsilon) \leq 2\epsilon + o(\epsilon).
\end{align}
In conclusion, for cheating Bob, 
\begin{align}
    \epsilon'_B = |\Pr[\text{Bob wins}] - (1-z) | \le 2\epsilon +o(\epsilon).
\end{align}
Similarly, for cheating Alice, 
\begin{align}
    \epsilon'_A = |\Pr[\text{Alice wins}] - z | 
    \le 2\epsilon +o(\epsilon).
\end{align}
The final bias is 
\begin{align}
    \epsilon' = \max\{\epsilon'_A, \epsilon'_B\} \le 2 \epsilon + o(\epsilon).
\end{align}

\end{document}